\documentclass[11pt, a4paper]{article}

\usepackage[margin=1in]{geometry} 

\usepackage[T1]{fontenc}
\usepackage{lmodern}
\usepackage[english]{babel}

\usepackage{amsmath}
\usepackage{amssymb}
\usepackage{amsthm}
\usepackage{bm}
\usepackage{mathrsfs} 

\usepackage{array}
\usepackage{multirow}
\usepackage{enumitem}     
\usepackage{booktabs}

\usepackage{caption}
\usepackage{subcaption}
\usepackage{graphicx}
\usepackage{xcolor}

\usepackage{tikz}
\usetikzlibrary{positioning, shapes.geometric, arrows.meta}
\usepackage{pgfplots}
\pgfplotsset{compat=1.17}

\usepackage{algorithm}
\usepackage{algorithmic}
\let\oldalgorithmiccomment\algorithmiccomment
\renewcommand{\algorithmiccomment}[1]{\hfill \oldalgorithmiccomment{#1}}

\usepackage{hyperref}

\usepackage{comment}

\theoremstyle{definition}
\newtheorem{definition}{Definition}[section]

\theoremstyle{plain}

\newtheorem{assumption}{Assumption}

\newtheorem{lemma}{Lemma}

\newtheorem{theorem}{Theorem}

\title{Computing Maximal Per-Record Leakage and Leakage-Distortion Functions \\ for Privacy Mechanisms under Entropy-Constrained Adversaries}

\author{
	Genqiang Wu\thanks{School of Information Engineering and Artificial Intelligence, Lanzhou University of Finance and Economics; Key Laboratory of Smart Commerce of Gansu Province; Corresponding author. Email: wahaha2000@icloud.com.} 
	\and Xiaoying Zhang\thanks{School of Economics, Lanzhou University of Finance and Economics. Email: zxynk2013@126.com.}
	\and Yu Qi\thanks{Chongqing Branch, China Mobile Group Design Institute Co., Ltd. Email: qiyu@cmdi.chinamobile.com.}
	\and Hao Wang\thanks{School of Information Engineering and Artificial Intelligence, Lanzhou University of Finance and Economics. Email: 2488169785@qq.com.}
	\and Jikui Wang\thanks{School of Information Engineering and Artificial Intelligence, Lanzhou University of Finance and Economics. Email: wjkweb@163.com.}
	\and Yeping He\thanks{Institute of Software, Chinese Academy of Sciences. Email: yeping@nfs.iscas.ac.cn} 
	}

\DeclareMathOperator*{\argmax}{arg\,max}

\begin{document}
	
\maketitle


\begin{abstract}
	The exponential growth of data collection necessitates robust privacy protections that preserve data utility. We address information disclosure against adversaries with bounded prior knowledge, modeled by an entropy constraint $H(X) \geq b$. Within this information privacy framework---which replaces differential privacy's independence assumption with a bounded-knowledge model---we study three core problems: maximal per-record leakage, the primal leakage-distortion tradeoff (minimizing worst-case leakage under distortion $D$), and the dual distortion minimization (minimizing distortion under leakage constraint $L$).
	
	These problems resemble classical information-theoretic ones (channel capacity, rate-distortion) but are more complex due to high dimensionality and the entropy constraint. We develop efficient alternating optimization algorithms that exploit convexity-concavity duality, with theoretical guarantees including local convergence for the primal problem and convergence to a stationary point for the dual.
	
	Experiments on binary symmetric channels and modular sum queries validate the algorithms, showing improved privacy-utility tradeoffs over classical differential privacy mechanisms. This work provides a computational framework for auditing privacy risks and designing certified mechanisms under realistic adversary assumptions.
\end{abstract}

\section{Introduction}
\label{sec:section-introduction}

The exponential growth of data collection, driven by advances in artificial intelligence~\cite{2023arXiv230602781G} and big data analytics~\cite{mayer2013big}, has heightened the need for robust protections while preserving data utility. Sensitive personal data remains vulnerable to extraction and inference attacks, even from modern AI models~\cite{carlini2021extracting,hu2022membership}. This tension between utility and privacy is fundamentally captured by Dalenius' desideratum~\cite{dalenius1977towards,DBLP:conf/icalp/Dwork06}: rendering a mechanism's output can reveal minimal information about any individual record while remaining useful.

Differential Privacy (DP)~\cite{DBLP:journals/fttcs/DworkR14} has become a standard model by providing formal, composable guarantees. However, its (Dalenius) security relies on an often-unrealistic independence assumption among records, and its mechanisms can incur high utility losses. In the paper~\cite{WU2026123035}---henceforth referred to as the \textit{information privacy (IP) framework}---the authors addressed these limitations by introducing a bounded-knowledge adversary model via an entropy constraint $H(X) \geq b$. This replaces DP's independence assumption with the more realistic premise that an adversary has at least $b$ bits of uncertainty about the dataset.

Within the IP framework, we formalized three core optimization problems that directly parallel fundamental constructs in information theory:
\begin{enumerate}[label=\arabic*.]
	\item \textit{Maximal Per-Record Leakage}: Computing the worst-case information disclosure about any single record under the entropy constraint.
	\item \textit{Primal Leakage-Distortion Function}: Designing mechanisms that minimize this worst-case leakage subject to a utility (distortion) constraint $D$.
	\item \textit{Dual Minimal-Distortion Function}: The complementary problem of minimizing expected distortion under a strict leakage bound $L$.
\end{enumerate}

While~\cite{WU2026123035} established the theoretical framework and proved some key properties using information-theoretic approaches, it identified the \emph{computation} of these quantities as a significant open challenge. The problems are inherently high-dimensional and involve complex, entropy-constrained optimization over both adversarial priors and mechanism designs, making direct solution intractable.

\textbf{This paper directly bridges this computational gap.} We develop efficient, provably convergent algorithms to solve the three core problems of the IP framework. Our central insight is that, despite their apparent complexity, these problems possess a structure---specifically, the convexity-concavity duality of mutual information partially identified in~\cite{WU2026123035}---that is amenable to alternating optimization. Inspired by the classic Blahut-Arimoto algorithm for channel capacity computation~\cite{DBLP:journals/tit/Blahut72,DBLP:journals/tit/Arimoto72}, our methods exploit this duality to enable tractable and scalable computation.

\subsection{Our Main Contributions}
\begin{enumerate}[label=\arabic*.]
	\item \textit{Efficient Algorithms}: We develop novel alternating optimization algorithms for maximal leakage computation (Algorithm~\ref{alg:max_leakage}), the primal leakage-distortion tradeoff (Algorithm~\ref{alg:primal_tradeoff}), and the dual distortion minimization problem (Algorithm~\ref{alg:dual_tradeoff}).
	\item \textit{Theoretical Guarantees}: We prove local convergence for leakage computation and the primal problem, and convergence to a stationary point (satisfying KKT conditions) for the dual.
	\item \textit{Empirical Validation}: Extensive experiments on binary symmetric channels and modular sum queries achieve favorable privacy-utility tradeoffs compared to classical DP mechanisms under our framework's algorithmic. The results demonstrate that mechanisms optimized under our framework realize the utility benefits promised by the entropy-constrained model.
\end{enumerate}

This work thus provides a \emph{computational foundation} for the IP framework, enabling both practical privacy risk auditing of privacy mechanisms (by computing exact leakages), including the differentially private mechanisms, and the design of certified privacy mechanisms under realistic adversary assumption.

\section{Problem Formalization} \label{sec:problem-formalization}

\subsection{The Information Privacy Framework: A Recap}
\label{subsec:information-privacy-framework}

\begin{table}[!tbh]
	\caption{Nomenclature}
	\label{tab:nomenclature}
	\footnotesize
	\centering
	\begin{tabular}{@{} p{0.22\textwidth} p{0.76\textwidth} @{}}
		\toprule
		\textbf{Symbol/Term} & \textbf{Description} \\
		\midrule
		$x=(x_1,\ldots,x_n)$ &  The true secret dataset (a realization of $X$). \\
		$X = (X_1, \ldots, X_n)$ & Random vector representing an adversary's prior belief over the dataset. \\
		$n$ & The number of records (tuples) in a dataset. \\
		$[m]$ & The set $\{1,2,\ldots,m\}$ \\
		$\mathcal{X}_i =\{x_{i1},\ldots,x_{in_i}\}$ & Alphabet (universe) of the $i$-th record \\
		$n_i, x_i$ & The size of $\mathcal X_i$, and an element within $\mathcal X_i$\\
		$\mathcal{X} = \prod_{i=1}^n \mathcal{X}_i$ & Universe of all possible datasets \\
		$\mathcal{X}_{-i}$, $n_{-i}$, $x_{-i}$ & Alphabet of all records except the $i$-th, its size, and an element within it. \\
		$\mathcal{Y}$ & Output alphabet of the privacy mechanism. \\
		$e_j$ & Unit vector $(0,\ldots,0,1,0,\ldots,0)$ with 1 in the $j$-th position. \\
		$p(x)$ & Joint probability distribution of the adversary's belief $X$. \\
		$p(x_i)$ & Marginal probability distribution of the $i$-th record $X_i$. \\
		$p(x_{-i}|x_i)$ & Conditional probability distribution (transition matrix) of $X_{-i}$ given $X_i$. \\
		$\mathcal E$ &  The set of all unit vectors in $\mathbb R^{n_{-i}}$ \\
		$p_0(x)$ & The true underlying distribution from which datasets are generated. \\
		$q(y|x)$ & The privacy mechanism, a conditional distribution mapping dataset $x$ to output $y$. \\
		$Y$ & The random variable representing the output of the privacy mechanism. \\
		$f(x)$ & The original query function, $f: \mathcal{X} \to \mathcal{Y}$. \\
		$d(\cdot, \cdot)$ & A distortion metric on the output space $\mathcal{Y} \times \mathcal{Y}$. \\
		$D$ & The maximum allowed expected distortion (a constraint). \\
		$I(X_i; Y)$ & Mutual information between record $X_i$ and mechanism output $Y$. \\
		$H(X)$ & Shannon entropy of the joint distribution $p(x)$. \\
		$H(X_{-i}|x_i)$ & Conditional entropy of $X_{-i}$ given $X_i = x_i$. \\
		$b$ & The minimum entropy constraint ($H(X) \geq b$) on the adversary's prior. \\
		$\mathcal{L}(b)$ & Maximal per-record leakage under entropy constraint $b$ (Def. 2.1). \\
		$\epsilon(D, b)$ & Primal leakage-distortion function: min max leakage given distortion $D$ (Def. 2.2). \\
		$\mathcal D(L, b)$ & Dual minimal-distortion function: min distortion given max leakage $L$ (Def. 2.3). \\
		$L$ & The maximum allowed per-record leakage (a constraint). \\
		$\Delta(\mathcal{X})$ & The probability simplex over the alphabet $\mathcal{X}$. \\
		$\Delta(\mathcal{X})_b$ & The subset of $\Delta(\mathcal{X})$ where distributions satisfy $H(X) \geq b$. \\
		$\mathbb{P}_{\text{ind}}$ & The set of all product distributions on $\mathcal{X}$ (DP's assumed adversary set). \\
		$\mathbb E_{p_0}[d] = \mathbb E_{p_0} d(f(X),Y)$ & Expected distortion $\sum_{x\in\mathcal X, y\in\mathcal Y} p_0(x)q(y|x)d(f(x),y)$ \\
		\bottomrule
	\end{tabular}
\end{table}

Our work builds directly upon the \textbf{information privacy (IP) framework} introduced in~\cite{WU2026123035}. That work established a novel, information-theoretic foundation for Dalenius' privacy vision by replacing the often-unrealistic independence assumption of differential privacy with a more practical \textbf{bounded-knowledge adversary model}. This model is formalized via an entropy constraint on the adversary's prior belief.

We briefly recap the core components of this framework, as they form the foundation for the computational problems addressed in this paper.

\begin{itemize}
	\item \textbf{Dataset and Adversary Model:} A sensitive dataset contains $n$ records, $x = (x_1, \ldots, x_n)$. An adversary's prior knowledge is modeled as a random vector $X = (X_1, \ldots, X_n)$ with a joint probability distribution $p(x)$ over the universe $\mathcal{X} = \prod_{i=1}^n \mathcal{X}_i$.
	
	\item \textbf{Privacy Mechanism:} A privacy mechanism for a query function $f: \mathcal{X} \to \mathcal{Y}$ is modeled as a stochastic channel $q(y|x)$, which releases a randomized output $Y$ instead of the true result $f(x)$.
	
	\item \textbf{Privacy Metric:} The information leakage about an individual record $x_i$ is quantified by the \textbf{Shannon mutual information} $I(X_i; Y)$. This measures the average reduction in uncertainty about $X_i$ after observing the mechanism's output $Y$.
	
	\item \textbf{Core Assumption (Bounded Adversary Knowledge):} The key innovation of the IP framework is to replace Differential Privacy's independence assumption with a more realistic \textbf{entropy constraint} $H(X) \geq b$, where $H(X)$ is the Shannon entropy of the adversary's prior $p(x)$. This formalizes the premise that in large-scale datasets, an adversary must have at least $b$ bits of uncertainty about the full dataset---a more tenable assumption than requiring complete statistical independence between all records.
	
	\item \textbf{Dalenius Security in the IP Framework:} A mechanism $q(y|x)$ satisfies $\epsilon$-\textbf{Dalenius security} with respect to an adversary class $\mathbb{P} \subseteq \Delta(\mathcal{X})$ if:
	\begin{align}
		\max_{i \in [n]} \max_{p(x) \in \mathbb{P}} I(X_i; Y) \leq \epsilon.
	\end{align}
	When $\mathbb{P} = \Delta(\mathcal{X})_b = \{p(x): H(X) \geq b\}$, this defines security against all \textbf{entropy-constrained adversaries}.
\end{itemize}

The paper~\cite{WU2026123035} proved that fundamental mechanisms (randomized response, exponential, Gaussian) satisfy this notion of security, established group and composition properties, and demonstrated superior utility-privacy tradeoffs compared to classical DP. However, a central challenge was left open: the \textbf{practical computation} of key quantities like the maximal per-record leakage under the entropy constraint, which is essential for auditing mechanisms and optimizing the privacy-utility tradeoff.

\subsection{Problem Statements: Bridging Theory and Computation}
\label{subsec:problem-statements}

This paper directly addresses the computational gap identified in~\cite{WU2026123035}. We formalize and solve the core optimization problems that are necessary to operationalize the framework. These problems naturally extend the information-theoretic constructs of the IP model into the domain of algorithmic mechanism analysis and design.

\begin{definition}[Maximal Per-Record Leakage]
	\label{def:max-leakage}
	For a fixed privacy mechanism $q(y|x)$, compute the worst-case information leakage about any individual record against an entropy-constrained adversary:
	\begin{align}
		\mathcal{L}(b) \triangleq \max_{i \in [n]} \max_{\substack{p(x) \in \Delta(\mathcal{X}) \\ H(X) \geq b}} I(X_i; Y).
	\end{align}
	This quantity, $\mathcal{L}(b)$, is the \textbf{individual channel capacity} relative to $\Delta(\mathcal{X})_b$. Computing it is the essential first step for quantifying the privacy risk of any given mechanism under the bounded-knowledge model.
\end{definition}

Building on this, we formulate the central optimization problems for mechanism design:

\begin{definition}[Primal Leakage-Distortion Tradeoff]
	\label{def:primal-tradeoff}
	Given a utility requirement specified as a maximum allowable expected distortion $D \geq 0$, find the mechanism that minimizes the worst-case leakage:
	\begin{align}
		\epsilon(D, b) \triangleq \min_{q(y|x)} \max_{i \in [n]} \max_{\substack{p(x) \in \Delta(\mathcal{X}) \\ H(X) \geq b}} I(X_i; Y) \quad \text{s.t.} \quad \mathbb{E}_{p_0}[d(f(X), Y)] \leq D.
	\end{align}
	Here, $p_0(x)$ is the true data distribution, and $d(\cdot, \cdot)$ is a distortion metric. This is the \textbf{primal problem}: minimizing privacy loss subject to a utility constraint.
\end{definition}

\begin{definition}[Dual Minimal-Distortion Formulation]
	\label{def:dual-tradeoff}
	Given a strict privacy budget $L \geq 0$ (maximal allowable leakage), find the mechanism that minimizes distortion:
	\begin{align}
		\mathcal D(L, b) \triangleq \min_{q(y|x)} \mathbb{E}_{p_0}[d(f(X), Y)] \quad \text{s.t.} \quad \max_{i \in [n]} \max_{\substack{p(x) \in \Delta(\mathcal{X}) \\ H(X) \geq b}} I(X_i; Y) \leq L.
	\end{align}
	This is the \textbf{dual problem}: minimizing utility loss subject to a privacy constraint.
\end{definition}

Problems in Definitions~\ref{def:primal-tradeoff} and~\ref{def:dual-tradeoff} form a \textbf{complementary pair}, whose solutions trace the Pareto-optimal frontier between privacy (leakage) and utility (distortion) in the IP framework. Solving these problems provides both a method to \textbf{audit} the leakage of existing mechanisms (like DP mechanisms) under the entropy-constrained model and to \textbf{design} certified optimal mechanisms for this more realistic adversary.

\subsection{Structural Properties: Theoretical Foundations for Computation}
\label{subsec:structural-properties}

The computational challenges posed by Problems in Definitions~\ref{def:max-leakage}--\ref{def:dual-tradeoff} are formidable due to their high-dimensional, constrained nature. However, they possess crucial structural properties---derived from information theory---that enable efficient algorithmic solutions. These properties form the mathematical backbone of our approach.

\textbf{Convexity-Concavity of Mutual Information.} The mutual information \(I(X_i; Y)\), central to all three problems, exhibits a dual convex-concave structure with respect to its parameters:

	\begin{theorem}[Mutual Information Convexity-Concavity Duality]
	\label{thm:mutual_info_properties}
	The mutual information $I(X_i; Y)$ exhibits the following structural properties:
	\begin{enumerate}
		\item \textbf{Concavity in marginals}: For fixed $p(x_{-i}|x_i)$ and $q(y|x)$, 
		$I(X_i; Y)$ is concave in $p(x_i)$.
		
		\item \textbf{Convexity in conditionals}: For fixed $p(x_i)$ and $q(y|x)$, 
		$I(X_i; Y)$ is convex in $p(x_{-i}|x_i)$.
		
		\item \textbf{Convexity in mechanisms}: For fixed $p(x) = p(x_i)p(x_{-i}|x_i)$, 
		$I(X_i; Y)$ is convex in $q(y|x)$.
	\end{enumerate}
\end{theorem}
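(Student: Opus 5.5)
The approach is to reduce all three claims to the classical concavity/convexity of mutual information $I(U;V)$ for a pair of variables: it is concave in the input distribution $p(u)$ for a fixed channel $p(v|u)$, and convex in the channel $p(v|u)$ for a fixed input $p(u)$. The reduction works because $I(X_i;Y)$ depends on the problem data only through the \emph{effective channel} from $X_i$ to $Y$,
\begin{align}
	W(y|x_i) \triangleq \sum_{x_{-i}\in\mathcal X_{-i}} p(x_{-i}|x_i)\, q(y|x_i,x_{-i}),
\end{align}
together with the marginal $p(x_i)$. This holds because $X_i \to X \to Y$ is a Markov chain, so $p(x_i,y)=p(x_i)W(y|x_i)$. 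The plan is to show that, in each of the three regimes, the relevant variable enters $(p(x_i),W)$ either unchanged or affinely, and then to invoke the classical result.

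\textbf{Step 1 (concavity in marginals).} Fix $p(x_{-i}|x_i)$ and $q(y|x)$. Then $W$ is fixed, and $I(X_i;Y)=H(Y)-\sum_{x_i}p(x_i)H(Y|X_i=x_i)$ with $p(y)=\sum_{x_i}p(x_i)W(y|x_i)$. The second term is linear in $p(x_i)$. The map $p(x_i)\mapsto p(y)$ is linear, and $H$ is concave on the simplex, so $H(Y)$ is concave in $p(x_i)$. Their difference is therefore concave.

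\textbf{Step 2 (convexity in conditionals).} Fix $p(x_i)$ and $q$. The map $p(x_{-i}|x_i)\mapsto W$ is affine (indeed linear), and it sends the product of simplices $\prod_{x_i}\Delta(\mathcal X_{-i})$ into valid channels. The classical convexity of $I$ in the channel is proved via the log-sum inequality applied to $D(p(x_i)W\,\|\,p(x_i)\,p_W(y))$: joint convexity of relative entropy, with both arguments linear in $W$. Composing a convex function with an affine map preserves convexity. Hence $I(X_i;Y)$ is convex in $p(x_{-i}|x_i)$.

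\textbf{Step 3 (convexity in mechanisms).} Fix $p(x)$. The map $q\mapsto W$ is again linear, so the same composition argument applies. Alternatively, one can view $X_i\to Y$ as the channel $q$ applied after the fixed channel $p(x_{-i}|x_i)$ and argue directly. I will write out the relative-entropy argument once in general form: for $\lambda\in[0,1]$ and two values of the variable, the joint $p_\lambda(x_i,y)$ and the product $p(x_i)p_\lambda(y)$ are both convex combinations of the corresponding endpoint quantities with the same weights. This works precisely because $p(x_i)$ is held fixed in Steps 2--3. Joint convexity of $D(\cdot\|\cdot)$ then finishes both steps.

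\textbf{Expected obstacle.} The main delicate point is Step 2. There one must check that $p(x_i)$ genuinely stays fixed while $p(x_{-i}|x_i)$ varies, so that $p(x_i)p_\lambda(y)$ is the convex combination of the endpoint products. This is exactly why the theorem parametrizes $p(x)=p(x_i)p(x_{-i}|x_i)$ rather than mixing joint distributions (mixing joints would vary $p(x_i)$ as well, and convexity would fail). I would also note the boundary conventions $0\log 0=0$, so that log-sum applies when some $W(y|x_i)=0$.
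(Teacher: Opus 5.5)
Your proposal is correct and follows the paper's proof. For part (1), both arguments use the classical concavity of $I$ in the input distribution for the fixed effective channel $p(y|x_i)$; the paper cites Cover--Thomas Theorem~2.7.4, whose proof is your $H(Y)-\sum_{x_i}p(x_i)H(Y|X_i=x_i)$ decomposition. For parts (2)--(3), both use joint convexity of relative entropy: with $p(x_i)$ held fixed, $p(x_i,y)$ and $p(x_i)p(y)$ are both linear in the varied quantity. Routing (2) and (3) through the affine map to $W$ is only a slightly more unified presentation of the same argument.
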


\begin{proof}
	\textbf{Proof of (1):} Fix $p(x_{-i}|x_i)$ and $q(y|x)$. The marginal channel $p(y|x_i)$ is fixed. Mutual information can be expressed as:
	$$
	I(X_i; Y) = D\left( p(y|x_i) p(x_i) \| p(x_i) p(y) \right),
	$$
	which is concave in $p(x_i)$ for fixed $p(y|x_i)$ by Theorem 2.7.4 in \cite{DBLP:books/daglib/0016881}.
	
	\textbf{Proof of (2):} Fix $p(x_i)$ and $q(y|x)$. For $\lambda \in [0,1]$, consider two distributions $p_1(x_{-i}|x_i)$, $p_2(x_{-i}|x_i)$ with convex combination:
	$$
	p_\lambda(x_{-i}|x_i) = \lambda p_1(x_{-i}|x_i) + (1-\lambda) p_2(x_{-i}|x_i).
	$$
	The joint distribution is linear in the conditionals:
	$$
	p_\lambda(x_i, y) = p(x_i) \sum_{x_{-i}} p_\lambda(x_{-i}|x_i) q(y|x_i, x_{-i}) = \lambda p_1(x_i, y) + (1-\lambda) p_2(x_i, y).
	$$
	The output marginal becomes:
	$$
	p_\lambda(y) = \sum_{x_i} p_\lambda(x_i, y) = \lambda p_1(y) + (1-\lambda) p_2(y).
	$$
	Defining reference distributions $q_k(x_i, y) = p(x_i) p_k(y)$, we have:
	$$
	q_\lambda(x_i, y) = p(x_i) p_\lambda(y) = \lambda q_1(x_i, y) + (1-\lambda) q_2(x_i, y).
	$$
	Mutual information equals the KL divergence:
	$$
	I_\lambda(X_i; Y) = D(p_\lambda(x_i, y) \| q_\lambda(x_i, y)) \leq \lambda D(p_1 \| q_1) + (1-\lambda) D(p_2 \| q_2)
	$$
	by joint convexity of KL divergence \cite{DBLP:books/daglib/0016881}. Thus, $I(X_i; Y)$ is convex in $p(x_{-i}|x_i)$.
	
	\textbf{Proof of (3):} Fix $p(x) = p(x_i)p(x_{-i}|x_i)$. For $\lambda \in [0,1]$, consider two mechanisms $q_A(y|x)$, $q_B(y|x)$ with convex combination:
	$$
	q_\lambda(y|x) = \lambda q_A(y|x) + (1-\lambda) q_B(y|x).
	$$
	The joint distribution is linear:
	$$
	p_\lambda(x, y) = p(x) q_\lambda(y|x) = \lambda p_A(x, y) + (1-\lambda) p_B(x, y).
	$$
	Since $I(X_i; Y) = \sum_{x_i} p(x_i) D(p(y|x_i) \| p(y))$, and both $p(y|x_i)$ and $p(y)$ are linear functions of $q(y|x)$, mutual information is convex in $q(y|x)$ by preservation of convexity through linear combinations and KL divergence \cite{DBLP:books/daglib/0016881}.
\end{proof}

Theorem~\ref{thm:mutual_info_properties} is motivated by Theorem~2.7.4 in~\cite{DBLP:books/daglib/0016881}. 
This convex--concave duality enables efficient alternating optimization strategies, where one set of variables can be optimized while keeping the others fixed. 
Notably, the \textbf{convexity in conditionals} applies to each row of the transition matrix \(p(x_{-i}|x_i)\) individually, rather than to the matrix as a whole. 
In this paper, we primarily leverage row-wise convexity, as detailed in Section~\ref{subsubsec:conditionals_update}.

\textbf{Convexity of Constraint Sets.} The entropy constraint \(H(X) \geq b\) and leakage constraints define convex feasible regions:

\begin{theorem}[Convexity of Constraint Sets]  \label{thm:entropy_concavity}
	The following sets are convex:
	\begin{enumerate}
		\item The set of joint distributions with entropy at least \( b \): 
		\begin{align}
			\bigl\{ p(x) \in \Delta(\mathcal{X}) : H(X) \geq b \bigr\}.
		\end{align}
		\item For a fixed marginal distribution \( p(x_i) > 0 \), the set of conditional distributions satisfying the entropy constraint is convex: 
		\begin{align}
			\bigl\{ p(x_{-i} | x_i) : H(X) \geq b \bigr\}.
		\end{align}
		Similarly, for fixed conditionals \( p(x_{-i} | x_i) \), the set of marginals \( \bigl\{ p(x_i) : H(X) \geq b \bigr\} \) is convex.
		\item For a fixed joint distribution \( p(x) \), the set of mechanisms that achieve at most \( L \) leakage on record \( i \) is convex: 
		\begin{align}
			\bigl\{ q(y | x) : I(X_i; Y) \leq L \bigr\}.
		\end{align}
	\end{enumerate}
\end{theorem}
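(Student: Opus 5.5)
All three parts follow one pattern: each set is a superlevel set of a concave function or a sublevel set of a convex function. In every case the argument reduces to showing that the relevant functional ($H(X)$ or $I(X_i;Y)$) is concave or convex in the variable being varied, while the remaining structure is held fixed. The affine structure of the domain (a simplex, or a product of simplices) keeps convex combinations feasible.

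For part (1), I will use the standard concavity of Shannon entropy on $\Delta(\mathcal{X})$, which follows from the strict concavity of $t\mapsto -t\log t$. Take $p_1,p_2$ with $H(p_k)\ge b$ and $\lambda\in[0,1]$. Then $p_\lambda=\lambda p_1+(1-\lambda)p_2\in\Delta(\mathcal{X})$ and
\begin{align}
H(p_\lambda)\ge \lambda H(p_1)+(1-\lambda)H(p_2)\ge b .
\end{align}

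For part (2), I fix $p(x_i)>0$ and write the joint as $p(x)=p(x_i)p(x_{-i}|x_i)$. This joint is affine in the collection of rows $p(\cdot|x_i)$. A convex combination of two feasible conditionals is again a valid conditional, since each row stays in its simplex. It produces exactly the convex combination $p_\lambda(x)=\lambda p_1(x)+(1-\lambda)p_2(x)$ of the joints, and part (1) then gives $H\ge b$. The same argument covers the marginal statement: for fixed conditionals, $p(x)$ is linear in $p(x_i)$, so convex combinations of marginals give convex combinations of joints. Equivalently, one can use the chain rule
\begin{align}
H(X)=H(X_i)+\sum_{x_i}p(x_i)H(X_{-i}|x_i).
\end{align}
Here $H(X_i)$ is concave and the sum is linear in $p(x_i)$, which again gives concavity. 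The one point to state explicitly is that the map from the parameter to the joint distribution is affine, because concavity is preserved under composition with affine maps.

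For part (3), I fix $p(x)$ and invoke Theorem~\ref{thm:mutual_info_properties}(3). It says $q\mapsto I(X_i;Y)$ is convex on the set of channels, which is itself convex as a product of simplices indexed by $x$. Sublevel sets of a convex function on a convex domain are convex, so for feasible $q_A,q_B$ we get
\begin{align}
I_{q_\lambda}(X_i;Y)\le \lambda I_{q_A}+(1-\lambda)I_{q_B}\le L .
\end{align}

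The only potential obstacle is bookkeeping rather than substance. In part (2) one has to check carefully that mixing conditionals row-wise with a single $\lambda$ corresponds to mixing the joint distributions with the same $\lambda$. This holds precisely because the marginal $p(x_i)$ is held fixed. If both the marginal and the conditional were varied at once, the joint would be bilinear rather than affine in the parameters and the argument would fail. I would note this explicitly so that the hypothesis ``for fixed $p(x_i)$'' is seen to be essential.
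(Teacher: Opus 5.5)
Your proof is correct. Parts (1) and (3) are exactly the paper's argument: concavity of $H$ gives a convex superlevel set, and Theorem~\ref{thm:mutual_info_properties}(3) gives a convex sublevel set.

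For part (2) you take a somewhat different route. The paper fixes $p(x_i)$ and uses the chain rule to rewrite the constraint as $\sum_{x_i} p(x_i) H(X_{-i}|x_i) \ge b - H(X_i)$. It then argues that the left side is a nonnegatively weighted sum of row-wise conditional entropies, each concave in its own row. For the marginal half it says only that the logic is ``analogous, as $H(X)$ is concave in $p(x_i)$.''

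You instead observe that $p(x)=p(x_i)p(x_{-i}|x_i)$ is linear in whichever factor is left free. A convex combination of the parameters therefore produces the same convex combination of joint distributions, and everything reduces to part (1).

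Your reduction treats both halves of (2) uniformly, with one principle: concavity survives composition with an affine map. It also actually justifies the marginal case, which the paper merely asserts; your chain-rule alternative there supplies that missing step too.

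The paper's chain-rule form has a different payoff. It writes the constraint explicitly in terms of the row entropies $H(X_{-i}|x_{ik})$. That is the structure later exploited in Theorem~\ref{thm:entropy_boundary} to split the global constraint $H(X)\ge b$ into row-wise thresholds $c_i^k$.
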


\begin{proof}
	\begin{enumerate}
		\item[(1)] The function \( H(X) \) is concave in \( p(x) \)~\cite{DBLP:books/daglib/0016881}. The set \( \bigl\{ p(x) : H(X) \geq b \bigr\} \) is a super-level set of a concave function and is therefore convex \cite[Section 3.1]{citeulike:163662}.
		
		\item[(2)] For a fixed marginal \( p(x_i) \), the joint entropy can be expressed as \( H(X) = H(X_i) + \mathbb{E}\bigl[ H(X_{-i} | X_i) \bigr] \). Since \( H(X_i) \) is fixed, the constraint \( H(X) \geq b \) is equivalent to \( \mathbb{E}\bigl[ H(X_{-i} | X_i) \bigr] \geq b - H(X_i) \). The function \( H(X_{-i} | x_i) \) is concave in \( p(x_{-i} | x_i) \) for each \( x_i \), and the expectation over \( x_i \) preserves concavity. Thus, the constraint is a super-level set of a concave function of the conditionals \( p(x_{-i} | x_i) \) and is convex. The logic for a fixed conditional is analogous, as \( H(X) \) is concave in \( p(x_i) \).
		
		\item[(3)] For a fixed \( p(x) \), Theorem \ref{thm:mutual_info_properties} establishes that \( I(X_i; Y) \) is convex in the mechanism \( q(y | x) \). The set \( \bigl\{ q(y | x) : I(X_i; Y) \leq L \bigr\} \) is a sub-level set of a convex function and is therefore convex \cite[Section 3.1]{citeulike:163662}.
	\end{enumerate}
\end{proof}

\begin{lemma}[Representation of Entropy Super-Level Sets]  \label{lemma:convex-representation-entropy-super-level-set}
	Let \( \mathcal{X} \) be a finite set and \( \Delta(\mathcal{X}) \) the probability simplex over \( \mathcal{X} \). Denote the Shannon entropy by \( H \). For any constant \( b \in [0,\,\log |\mathcal{X}|] \), define
	\[
	S = \{ p \in \Delta(\mathcal{X}) : H(p) \geq b \}, \qquad 
	\partial S = \{ p \in \Delta(\mathcal{X}) : H(p) = b \}.
	\]
	Then \( S = \operatorname{conv}(\partial S) \), where \( \operatorname{conv} \) denotes the convex hull. Consequently, every distribution in \( S \) can be expressed as a finite convex combination of distributions in \( \partial S \).
\end{lemma}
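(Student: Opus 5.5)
The plan is to prove the two inclusions separately.

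The inclusion $\operatorname{conv}(\partial S) \subseteq S$ is immediate. By Theorem~\ref{thm:entropy_concavity}(1), $S$ is convex, and $\partial S \subseteq S$. Hence the convex hull of $\partial S$ lies in $S$.

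For the reverse inclusion, fix $p \in S$. If $H(p) = b$, there is nothing to prove, so assume $H(p) > b$. In particular $b < \log|\mathcal{X}|$ and $p$ is not a point mass, unless $b = 0$; that case is handled by the same argument below. I will write $p$ as a convex combination of two points of $\partial S$ lying on a line through $p$. The key tools are continuity of $H$ on the compact simplex and the fact that $H$ vanishes at the vertices $e_j$.

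Choose a direction $v \neq 0$ with $\sum_x v(x) = 0$. The natural choice is $v = e_j - p$ for a vertex $e_j \neq p$. Consider the segment $\{p + t v\}$ intersected with $\Delta(\mathcal{X})$. It is a compact segment with endpoints $p + t_+ v$ and $p + t_- v$, where $t_- < 0 < t_+$ when $p$ is in the relative interior; otherwise we restrict to the face containing $p$ in its relative interior. Define $g(t) = H(p + t v)$, which is continuous and concave. It suffices to find $t_1 \le 0 \le t_2$ with $g(t_1) = g(t_2) = b$. Then $p$ is a convex combination of $p + t_1 v$ and $p + t_2 v$, both of which lie in $\partial S$.

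On the side $t > 0$ with $v = e_j - p$, the endpoint is $t_+ = 1$, reaching $e_j$ with $g(1) = 0 \le b$. Since $g(0) > b$, the intermediate value theorem gives $t_2 \in (0,1]$ with $g(t_2) = b$.

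The main obstacle is the opposite side, where I need an endpoint with entropy $\le b$. Moving away from $e_j$ exits the simplex at a point on a lower-dimensional face, and that point's entropy need not be $\le b$. I will handle this by induction on the dimension of the support. The exit point $r$ has strictly smaller support than $p$. If $H(r) \le b$, the intermediate value theorem again gives $t_1$. If $H(r) > b$, apply the induction hypothesis on the smaller face to write $r$ as a convex combination of points in $\partial S$. Since $p$ lies on the segment $[r, p+t_2 v]$, it then becomes a convex combination of points of $\partial S$.

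The base case is a single vertex, where $H = 0 \le b$, so the case $H(\cdot) > b$ never arises. One remaining point needs checking: when $b = 0$, every point with $H = 0$ is a vertex, and $\partial S$ is the vertex set, whose convex hull is the whole simplex, so the statement holds directly. Carathéodory's theorem then bounds the number of terms, though finiteness is already clear from the recursion.
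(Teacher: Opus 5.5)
Your argument is correct, but it takes a different route from the paper. The paper proves $S \subseteq \operatorname{conv}(\partial S)$ via Krein--Milman. It first shows $\operatorname{ext}(S) \subseteq \partial S$ by a local perturbation: a point with $H(p) > b$ is not a vertex, so it has two positive coordinates $i,j$ and is the midpoint of $p \pm \delta(e_i - e_j)$, both in $S$ for small $\delta$. It then uses compactness of $\partial S$ to replace $\overline{\operatorname{conv}}(\partial S)$ by $\operatorname{conv}(\partial S)$, and cites Carath\'eodory for finiteness. You instead run an elementary induction on $|\operatorname{supp}(p)|$, using only continuity of $H$ and the intermediate value theorem along the line through $p$ and a vertex. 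You should state explicitly that the vertex is $e_j$ with $j \in \operatorname{supp}(p)$, not merely $e_j \neq p$. If $p_j = 0$, then $t_- = 0$, the exit point is $p$ itself, and the support does not shrink. With $0 < p_j < 1$, the exit point is $r = (p - p_j e_j)/(1-p_j)$, supported on $\operatorname{supp}(p) \setminus \{j\}$, and the induction closes. (The ``unless $b=0$'' caveat is unnecessary, since $H(p) > b \ge 0$ already excludes point masses.) As for what each buys: the paper's proof is short and directly delivers $\operatorname{ext}(S) \subseteq \partial S$, which is what one needs to locate maximizers of convex objectives as in Theorem~\ref{thm:entropy_boundary}; the price is reliance on convexity and compactness of $S$, Krein--Milman, and a closedness argument. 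Yours is constructive and avoids all closure issues. It writes each $p \in S$ as a convex combination of at most $|\operatorname{supp}(p)|$ points of $\partial S$ lying in the face spanned by $\operatorname{supp}(p)$. It still recovers $\operatorname{ext}(S) \subseteq \partial S$, because extreme points of $\operatorname{conv}(A)$ lie in $A$. Finally, it uses concavity of $H$ only for the easy inclusion, so $S \subseteq \operatorname{conv}(\partial S)$ in fact holds for any continuous function that is at most $b$ at every vertex.
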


\begin{proof}
	The proof follows these four steps.
	
	\textbf{Step 1: Preliminaries.}
	The probability simplex \( \Delta(\mathcal{X}) \) is compact and convex. The entropy function \( H \) is continuous on \( \Delta(\mathcal{X}) \) and strictly concave because \( t \mapsto -t\log t \) is strictly concave on \( [0,\infty) \) and \( H \) is a sum of such functions.
	
	\textbf{Step 2: Properties of \( S \) and \( \partial S \).}
	Since \( H \) is continuous, \( S \) is closed as the preimage of the closed interval \( [b, \log|\mathcal{X}|] \). As a closed subset of the compact set \( \Delta(\mathcal{X}) \), \( S \) is itself compact. The strict concavity of \( H \) implies that its super-level set \( S \) is convex. The set \( \partial S \) is nonempty because a continuous function on a connected, compact set attains all intermediate values.
	
	\textbf{Step 3: Extreme points of \( S \) lie in \( \partial S \).}
	Let \( p \) be an extreme point of \( S \). Suppose, for contradiction, that \( H(p) > b \).
	By continuity of \( H \), there exists \( \epsilon > 0 \) such that
	\begin{align}
		B_{\epsilon}(p) \cap \operatorname{aff}\bigl(\Delta(\mathcal{X})\bigr) \subseteq S, 
	\end{align}
	where \( \operatorname{aff} \) denotes the affine hull.
	We now construct two distinct points \( r, s \) in this neighbourhood whose midpoint is \( p \).
	Since \( H(p) > b \geq 0 \), the distribution \( p \) cannot be a vertex of the simplex (which would have entropy \( 0 \)). Therefore, \( p \) has at least two coordinates that are strictly positive. Denote two such coordinates by \( i \) and \( j \) (\( p_i > 0, p_j > 0 \)).
	Define the direction vector \( d \in \mathbb{R}^{|\mathcal{X}|} \) by
	\[
	d_i = 1, \qquad d_j = -1, \qquad d_k = 0 \;\; \text{for all } k \notin \{i, j\}.
	\]
	Then \( \sum_x d_x = 0 \), so \( d \) lies in the linear subspace parallel to \( \operatorname{aff}(\Delta(\mathcal{X})) \).
	Choose a step size \( \delta > 0 \) satisfying the following three conditions:
	\begin{enumerate}
		\item \( \delta \le \epsilon / \| d \| \) (so that \( p \pm \delta d \in B_\epsilon(p) \)),
		\item \( \delta \le p_i \) and \( \delta \le p_j \) (guaranteeing non-negativity of the components of \( p \pm \delta d \)),
		\item \( \delta \) is small enough that \( H(p \pm \delta d) \ge b \) (by continuity of \( H \) and \( H(p) > b \)).
	\end{enumerate}
	Now set \( r = p + \delta d \) and \( s = p - \delta d \). By construction,
	\( r, s \in \Delta(\mathcal{X}) \cap B_\epsilon(p) \subseteq S \), and \( p = \frac{1}{2}(r + s) \) with \( r \neq s \).
	This contradicts the hypothesis that \( p \) is an extreme point of \( S \). Hence, \( H(p) = b \) and \( p \in \partial S \).
	
	\textbf{Step 4: Krein-Milman theorem and convex hull.}
	The set \( S \) is a nonempty compact convex set in \( \mathbb{R}^{|\mathcal{X}|} \). By the Krein-Milman theorem, \( S \) is the closed convex hull of its extreme points:
	\begin{align}
		S = \overline{\operatorname{conv}}\bigl(\operatorname{ext}(S)\bigr),
	\end{align}
	where $\operatorname{ext}(S)$ is the set of extreme points.
	From Step~3, we have \( \operatorname{ext}(S) \subseteq \partial S \). Consequently,
	\[
	S \subseteq \overline{\operatorname{conv}}(\partial S).
	\]
	The set \( \partial S \) is closed (as the preimage of the singleton \( \{b\} \) under the continuous function \( H \)) and, being a subset of the compact set \( \Delta(\mathcal{X}) \), is itself compact. In finite dimensions, the convex hull of a compact set is compact and therefore closed; hence \( \overline{\operatorname{conv}}(\partial S) = \operatorname{conv}(\partial S) \). Finally, because \( \partial S \subseteq S \) and \( S \) is convex, we have \( \operatorname{conv}(\partial S) \subseteq S \). Combining the inclusions yields
	\[
	S = \operatorname{conv}(\partial S).
	\]
	Carathéodory's theorem ensures that every point in \( \operatorname{conv}(\partial S) \) can be expressed as a convex combination of a finite number of points from \( \partial S \), which completes the proof.
\end{proof}

These structural properties---convexity-concavity duality, constraint convexity, entropy decomposition, and permutation invariance---provide the mathematical foundation for the efficient algorithms developed in Section~\ref{sec:algorithms}. They transform seemingly intractable high-dimensional optimization problems into sequences of manageable convex subproblems.

\section{Algorithms for Maximal Leakage and Leakage-Distortion Tradeoffs}
\label{sec:algorithms}

This section presents algorithms for solving the three problems defined in Section \ref{sec:problem-formalization}: maximal per-record leakage (Definition \ref{def:max-leakage}), the primal leakage-distortion trade-off (Definition \ref{def:primal-tradeoff}), and the dual minimal-distortion formulation (Definition \ref{def:dual-tradeoff}). The core of our approach is an \textbf{alternating optimization framework} that leverages the structural properties---specifically the convexity and concavity of mutual information and entropy---established in Section \ref{subsec:structural-properties}.

The algorithmic strategies are as follows:
\begin{itemize}
	\item For \textbf{Problem 1} (Definition \ref{def:max-leakage}), we alternate between optimizing the marginal distribution \( p(x_i) \) and the conditional distributions \( p(x_{-i} | x_i) \).
	\item For \textbf{Problem 2} (Definition \ref{def:primal-tradeoff}), we alternate between optimizing the adversarial prior \( p(x) \) and the privacy mechanism \( q(y | x) \).
	\item For \textbf{Problem 3} (Definition \ref{def:dual-tradeoff}), we exploit the convexity of the mechanism set under a leakage constraint to achieve global convergence.
\end{itemize}
All algorithms explicitly enforce the entropy constraint \( H(X) \geq b \) during updates.

\subsection{Algorithm for Maximal Per-Record Leakage (Problem 1)}
\label{subsec:max_leakage_alg}

We now address the optimization problem in Definition \ref{def:max-leakage}:
\begin{subequations} \label{eq:optm-max_leakage}
	\begin{align} 
		\mathcal{L}(b) \triangleq & \quad \max_{i \in [n], p(x)}  I(X_i;Y) \\
		s.t. & \quad H(X) \geq b 
		\label{eq:equation-4}
	\end{align}
\end{subequations}
where \( p(x) = p(x_i) p(x_{-i} | x_i) \), and the privacy mechanism \( q(y| x) \) is fixed. The entropy constraint $H(X) \geq b$ can be expanded using the chain rule of entropy:
\begin{align} \label{eq:entropy-constraint-expantion}
	H(X) = H(X_i) + \sum_{x_i} p(x_i) H(X_{-i}|x_i) \ge b.
\end{align}
There are two distinct cases to the constraint:
\begin{enumerate}
	\item \textbf{Case 1: $b \leq H(X_i)$}. In this scenario, the inequality \eqref{eq:entropy-constraint-expantion} is automatically satisfied since $H(X_{-i}|x_i)\ge 0$
	 for all possible conditional distributions $p(x_{-i}|x_i)$. The optimization problem \eqref{eq:optm-max_leakage} thus becomes unconstrained. Then $I(X_i;Y)$ obtains its maximum when each row of the transition matrix $p(x_{-i}|x_i)$ is in $\mathcal E$, the set of all unit vectors in $\mathbb R^{n_{-i}}$, due to concavity of $I(X_i;Y)$ to $p(x_{-i}|x_i)$ (Theorem \ref{thm:mutual_info_properties}) and the fact that each row of $p(x_{-i}|x_i)$ must be a convex combination of elements in $\mathcal E$. 
	This aligns with the intuition that to maximize the information leakage from a single record, the rest of the dataset must be perfectly predictable given that record, minimizing uncertainty.
	\item \textbf{Case 2: $b > H(X_i)$}. This is the non-trivial case where $\sum_{x_i} p(x_i) H(X_{-i}|x_i)>b-H(X_i)>0$. The conditional distributions $p(x_{-i}|x_i)$ must contribute enough uncertainty to satisfy the global constraint $H(X) \geq b$, imposing a complex restriction on the domain of the probability transition matrix $p(x_{-i}|x_i)$.
\end{enumerate}

The remainder of this section focuses on Case 2 ($b > H(X_i)$), as Case 1 admits a direct solution. The foundation of our algorithm is the convex-concave structure of the objective function $I(X_i;Y)$ with respect to its parameters, as established by Theorem \ref{thm:mutual_info_properties}:
\begin{itemize}
	\item For fixed conditionals $p(x_{-i}|x_i)$, $I(X_i;Y)$ is \textbf{concave} in the marginal distribution $p(x_i)$.
	\item For fixed marginal $p(x_i)$, $I(X_i;Y)$ is \textbf{convex} in the conditionals $p(x_{-i}|x_i)$.
\end{itemize}
Figure~\ref{fig:saddle-surface} illustrates the convex-concave structure of mutual information $I(X_i;Y)$.

\begin{algorithm}[!tbh]
	\caption{Alternating Optimization for Maximal Per-Record Leakage}
	\label{alg:max_leakage}
	\begin{algorithmic}[1]
		\REQUIRE Fixed mechanism $q(y|x)$, entropy constraint $b > 0$, tolerance $\epsilon > 0$, maximum iterations $T_{\max}$
		\ENSURE Optimal joint distribution $p^{*}(x)$, maximal leakage $\mathcal{L}(b)$, optimal record $i_{\max}$ 
		\STATE Randomly initialize $p^{(0)}(x)$ such that $H(p^{(0)}(x))\ge b$ \COMMENT{Ensure feasibility}
		\STATE $\mathcal{L}_{\text{prev}} \gets -\infty$, $\mathcal{L}_{\max} \gets -\infty$, $p_{\max}(x) \gets \emptyset$, $i_{\max} \gets -1$
		\STATE $\mathcal{E} \gets$ the set of all unit vectors in $\mathbb{R}^{n_{-i}}$
		\STATE $k \gets 0$, $\Delta\mathcal{L} \gets \infty$
		
		\WHILE{$\Delta\mathcal{L} \geq \epsilon$ AND $k < T_{\max}$}
			\STATE $\mathcal{L}_{\text{prev}} \gets \mathcal{L}_{\max}$  
			
			\FOR{$i \in [n]$}
				\STATE Extract $p(x_i)$ and $p(x_{-i}|x_i)$ from $p^{(k)}(x)$
				
				\STATE \textbf{Step 1: Update marginal distribution}
				\STATE $p^{\text{new}}(x_i) \gets \text{Algorithm \ref{alg:marginal_optimization}}(p(x_i), p(x_{-i}|x_i), q, b)$
				
				\STATE \textbf{Step 2: Update conditional distributions}
				\IF{$b > H(p^{\text{new}}(x_i))$}
					\STATE $p^{\text{new}}(x_{-i}|x_i) \gets \text{Algorithm \ref{alg:conditional_optimization_gradient_entropy_manifold}}(p^{\text{new}}(x_i), p(x_{-i}|x_i), q, b)$
				\ELSE
					\STATE $p^{\text{new}}(x_{-i}|x_i) \gets \arg\max_{p' \in \mathcal{E}^{n_i}} I(X_i;Y)$ \COMMENT{Search over deterministic conditionals}
				\ENDIF
				
				\STATE Construct $p^{\text{candidate}}(x) = p^{\text{new}}(x_i) \cdot p^{\text{new}}(x_{-i}|x_i)$
				\STATE $I_{\text{current}} \gets I(X_i;Y)$ using $p^{\text{candidate}}(x)$ and $q(y|x)$
				
				\IF{$I_{\text{current}} > \mathcal{L}_{\max}$}
					\STATE $\mathcal{L}_{\max} \gets I_{\text{current}}$
					\STATE $p_{\max}(x) \gets p^{\text{candidate}}(x)$
					\STATE $i_{\max} \gets i$
				\ENDIF
			\ENDFOR
			
			\STATE $p^{(k+1)}(x) \gets p_{\max}(x)$
			\STATE $\Delta\mathcal{L} \gets |\mathcal{L}_{\max} - \mathcal{L}_{\text{prev}}|$
			\STATE $k \gets k + 1$
		\ENDWHILE
		
		\STATE \textbf{return} $p^{*}(x) \gets p^{(k)}(x)$, $\mathcal{L}(b) \gets \mathcal{L}_{\max}$, $i_{\max}$
	\end{algorithmic}
\end{algorithm}

This structure naturally motivates an \textbf{alternating optimization approach}. We iteratively optimize one set of variables while holding the other fixed, guaranteeing that each step either increases the objective function or leaves it unchanged. This procedure converges to limit values of the objective function.

Algorithm \ref{alg:max_leakage} outlines this iterative procedure. It initializes the joint distribution $p(x)$ to the uniform distribution, which maximizes entropy and ensures feasibility. The main loop alternates between, optimizing the marginal distribution $p(x_i)$ (using Algorithm \ref{alg:marginal_optimization}), and then optimizing its corresponding conditional distributions $p(x_{-i}|x_i)$ (using Algorithm \ref{alg:conditional_optimization_gradient_entropy_manifold} for Case 2). If the two sub-algorithms fail to improve the objective for all record $i\in [n]$, the update will be rejected. The algorithm terminates when the change in mutual information falls below a specified tolerance $\epsilon$.

\begin{figure}[!htbp]
	\centering
	\includegraphics[width=1\textwidth]{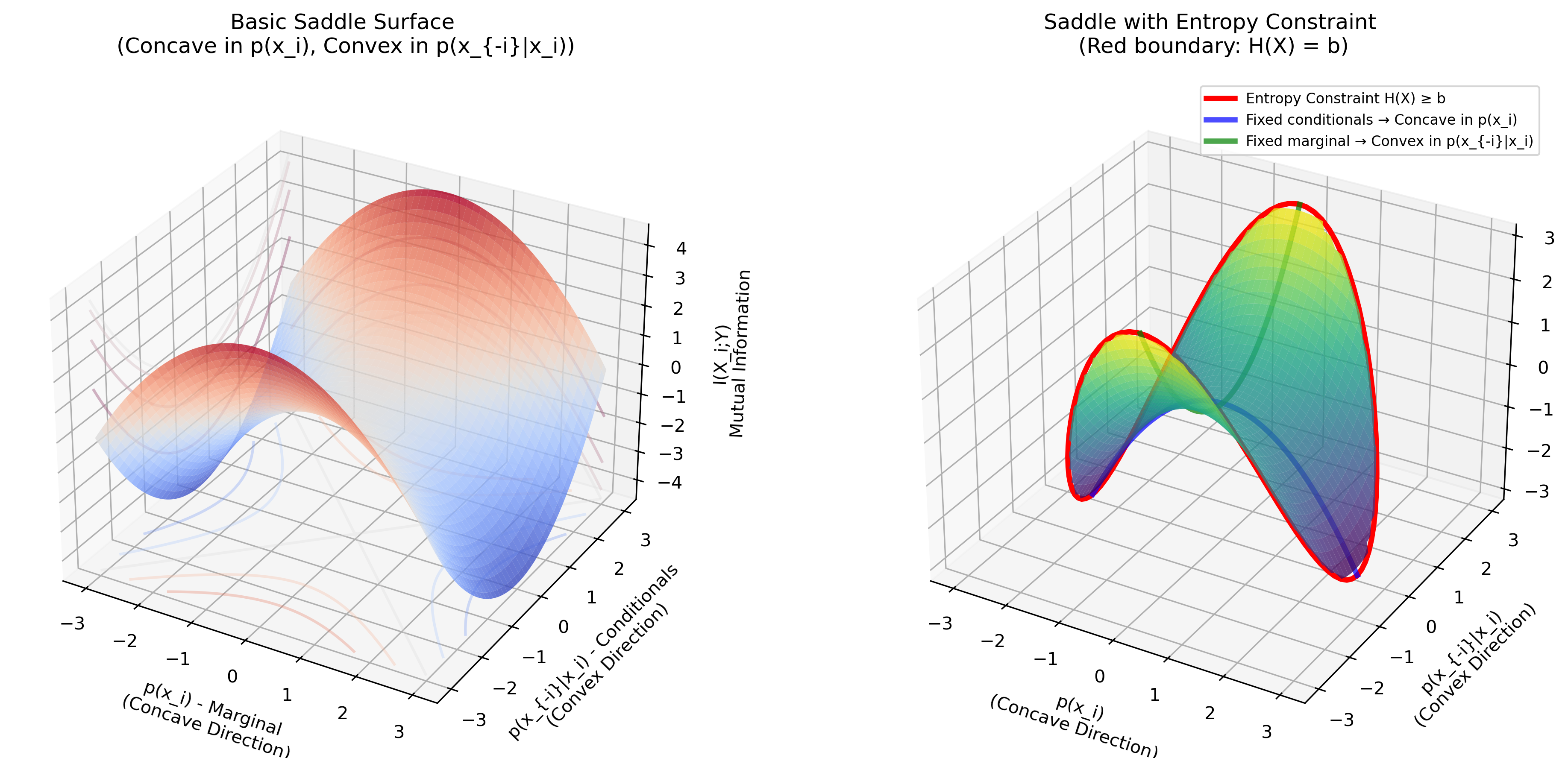}
	\caption{
		\textbf{Saddle-shaped structure of mutual information} under convex-concave duality. 
		(\textbf{Left}) Basic saddle surface demonstrating the convex-concave duality of mutual information $I(X_i;Y)$: 
		concave in the marginal distribution $p(x_i)$ and convex in the conditional distribution $p(x_{-i}|x_i)$. 
		(\textbf{Right}) Saddle surface with entropy constraint $H(X) \geq b$ (red boundary), illustrating the feasible region 
		for entropy-constrained optimization. Blue and green curves demonstrate cross-sections showing the concave/convex 
		properties along fixed conditionals and marginals, respectively.
	}
	\label{fig:saddle-surface}
\end{figure}

\subsubsection{Optimal Marginal \( p^*(x_i) \) for Fixed Transition Matrix \( p(x_{-i}|x_i) \)}  
\label{subsubsec:marginal_update}

Given a fixed probability transition matrix \( p(x_{-i}|x_i) \) and privacy mechanism \( q(y|x) \), the channel \( p(y|x_i) \) is determined as:
\begin{align}
	p(y|x_i) = \sum_{x_{-i}} p(x_{-i}|x_i) q(y|x_i, x_{-i}).
\end{align}
The optimization problem (\ref{eq:optm-max_leakage}) reduces to:
\begin{subequations}
	\label{eq:optm-marginal}
	\begin{align}
		\max_{p(x_i)} \quad & I(X_i; Y) = I(p(x_i), p(y|x_i)) \\
		\text{s.t.} \quad & H(X) \geq b \\
		& p(x_i) \geq 0, \quad \sum_{x_i} p(x_i) = 1,
	\end{align}
\end{subequations}
where the joint entropy is:
\[
H(X) = H(X_i) + \sum_{x_i} p(x_i) H(X_{-i}|x_i).
\]
Here, \( H(X_{-i}|x_i) \) is constant for each \( x_i \) since \( p(x_{-i}|x_i) \) is fixed. By Theorems \ref{thm:mutual_info_properties} and \ref{thm:entropy_concavity}, the objective \( I(X_i; Y) \) is concave in \( p(x_i) \), and the constraint \( H(X) \geq b \) is concave, making Equation \eqref{eq:optm-marginal} a convex optimization problem.

\paragraph{Lagrangian Formulation}
We introduce the Lagrangian:
\[
\mathcal{L} = I(X_i; Y) + s \bigl(H(X) - b\bigr),
\]
where \( s \geq 0 \) is a Lagrange multiplier. The partial derivative with respect to \( p(x_i) \) is:
\[
\frac{\partial \mathcal{L}}{\partial p(x_i)} = \frac{\partial I(X_i; Y)}{\partial p(x_i)} + s \frac{\partial H(X)}{\partial p(x_i)}.
\]
Using foundational derivatives from Appendix \ref{sec:foundational_derivatives}, we obtain:
\begin{align}
	\frac{\partial \mathcal{L}}{\partial p(x_i)} = \sum_y p(y|x_i) \log \frac{p(y|x_i)}{p(y)} - 1 + s \left( H(X_{-i}|x_i) - \log p(x_i) - 1 \right).
\end{align}

\paragraph{Modified Blahut-Arimoto Update}
We derive a Blahut-Arimoto style update~\cite{DBLP:journals/tit/Blahut72,DBLP:journals/tit/Arimoto72} by considering the Lagrangian optimization. The update rule becomes:
\begin{align}
	p^{(t+1)}(x_i) \propto p^{(t)}(x_i) \exp \left( \frac{1}{1+s^{(t)}} \left[ \sum_y p(y|x_i) \log \frac{p(y|x_i)}{p^{(t+1)}(y)} \right] + \frac{s^{(t)}}{1+s^{(t)}} \left( H(X_{-i}|x_i) - \log p(x_i)  \right) \right),
\end{align}
where \( p^{(t+1)}(y) = \sum_{x_i} p^{(t)}(x_i) p(y|x_i) \) is the output marginal distribution.

The multiplier \( s \) is updated adaptively by the gradient descent method to enforce the entropy constraint:
\[
s^{(t+1)} = \max\left(0, s^{(t)} - \alpha (H(X^{(t+1)}) - b)\right),
\]
with \( \alpha > 0 \) (e.g., \( \alpha = 0.1 \)).

\paragraph{Algorithm Description}
Algorithm \ref{alg:marginal_optimization} implements this modified Blahut-Arimoto procedure~\cite{DBLP:journals/tit/Blahut72,DBLP:journals/tit/Arimoto72}. It alternates between computing the output distribution (E-step) and updating the input distribution (M-step) while adapting the Lagrange multiplier. The algorithm terminates when the change in mutual information falls below tolerance \( \epsilon \).

\begin{algorithm}[!bth]
	\caption{Modified Blahut-Arimoto for Marginal Optimization with \( H(X) \geq b \)}
	\label{alg:marginal_optimization}
	\begin{algorithmic}[1]
		\REQUIRE Fixed transition matrix \( p(x_{-i}|x_i) \), channel \( q(y|x) \), \( p^{(0)}(x_i) \), entropy constraint \( b > 0 \), tolerance \( \epsilon > 0 \), step size \( \alpha > 0 \)
		\ENSURE Optimal marginal distribution \( p^*(x_i) \), maximal mutual information \( I^* \)
		\STATE \( s^{(0)} \gets 0 \), \( t \gets 0 \)
		\STATE Precompute \( H(X_{-i}|x_i) \) and \( p(y|x_i) \) for all \( x_i \)
		\REPEAT
		\STATE \textbf{E-step: Compute output distribution}
		\FOR{each \( y \in \mathcal{Y} \)}
		\STATE \( p^{(t+1)}(y) = \sum_{x_i} p^{(t)}(x_i) p(y|x_i) \)
		\ENDFOR
		\STATE \textbf{M-step: Update input distribution}
		\FOR{each \( x_i \in \mathcal{X}_i \)}
		\STATE \( p^{(t+1)}(x_i) \gets p^{(t)}(x_i) \exp \left( \frac{1}{1+s^{(t)}} \sum_y p(y|x_i) \log \frac{p(y|x_i)}{p^{(t+1)}(y)} + \frac{s^{(t)}}{1+s^{(t)}} \left( H(X_{-i}|x_i) - \log p(x_i)  \right) \right) \)
		\ENDFOR
		\STATE Normalize \( p^{(t+1)}(x_i) \) to sum to 1
		\STATE Compute \( H(X^{(t+1)}) = H(X_i^{(t+1)}) + \sum_{x_i} p^{(t+1)}(x_i) H(X_{-i}|x_i) \)
		\STATE \( s^{(t+1)} \gets \max\left(0, s^{(t)} - \alpha (H(X^{(t+1)}) - b)\right) \)
		\STATE \( t \gets t + 1 \)
		\UNTIL{\( |I^{(t)}(X_i;Y) - I^{(t-1)}(X_i;Y)| < \epsilon \)}
		\RETURN \( p^*(x_i) \gets p^{(t)}(x_i) \), \( I^* \gets I^{(t)}(X_i;Y) \)
	\end{algorithmic}
\end{algorithm}

This approach efficiently solves the subproblem (\ref{eq:optm-marginal}) while maintaining the information-theoretic elegance of the Blahut-Arimoto framework and explicitly handling the entropy constraint through Lagrangian optimization.

\subsubsection{Optimal Transition Matrix $p^{*}(x_{-i}|x_i)$ for Fixed Marginal $p(x_i)$} 
\label{subsubsec:conditionals_update}

We now address Case 2 where \(b > H(X_i)\), i.e., the non-trivial case in which the conditional distributions \(p(x_{-i}|x_i)\) must contribute enough uncertainty to satisfy the global entropy constraint \(H(X) \geq b\). For a fixed marginal distribution \(p(x_i)\) and a given privacy mechanism \(q(y|x)\), we optimize the conditional probability matrix \(p(x_{-i}|x_i)\) to maximize the mutual information \(I(X_i; Y)\) subject to the constraint \(H(X) \geq b\). The optimization problem is formally stated as:
\begin{subequations} \label{eq:optm-conditional}
	\begin{align}
		\max_{p(x_{-i}|x_i)} & \quad I(X_i;Y) = \max_{p(x_{-i} | x_i)} I(p(x_{-i}|x_i), p(x_i), q(y|x_i,x_{-i})) \\
		\text{s.t.}    & \quad H(X) \ge b  \label{eq:equation-3}\\
		&\quad p(x_{-i} | x_i) \ge 0,  \sum_{x_{-i}} p(x_{-i} | x_i)=1, \quad \forall x_i\in\mathcal X_i 
	\end{align}
\end{subequations}

Let $\mathcal X_i =\{x_{i1}, \ldots, x_{in_i}\}$ with $n_i = |\mathcal X_i|$, and $\mathcal X_{-i} =\{x_{-i1}, \ldots, x_{-in_{-i}}\}$ with $n_{-i} = |\mathcal X_{-i}|$. 
The joint entropy constraint $H(X) \geq b$ can be expressed as:
\begin{align} \label{Eq:entropy_constraint_decomposed}
	H(X) = H(X_i) + \sum_{k=1}^{n_i} p(x_{ik}) H(X_{-i}|x_{ik}) \ge b,
\end{align}
where $H(X_{-i}|x_{ik})$ is the entropy of the conditional distribution $p(x_{-i}|x_{ik})$.

\subsubsection*{Theoretical Foundation}

The following theorem provides the theoretical foundation for our algorithm by characterizing the structure of optimal conditional distributions under the entropy constraint.

\begin{theorem}[Optimality on Entropy Boundary]\label{thm:entropy_boundary}
	For fixed marginal distribution \(p(x_i)\), consider a fixed row index \(k\) and fixed conditional distributions \(p(x_{-i}|x_{i\ell})\) for all \(\ell \neq k\). Assume \(p(x_{ik}) > 0\) and define:
	\begin{align}  \label{eq:equation-5}
		c_i^{k} \triangleq \frac{b - H(X_i)}{p(x_{ik})} - \sum_{\ell =1}^{n_i} \frac{p(x_{i\ell})}{p(x_{ik})} H(X_{-i}|x_{i\ell}) \ge 0.
	\end{align}
	Then the maximum mutual information is achieved when \(p(x_{-i} | x_{ik})\) lies on the boundary of the entropy-constrained set:
	\begin{align} \label{eq:equation-1}
		\max_{p(x_{-i}|x_{ik}) \in \mathbb{S}_i} I(X_i; Y) = \max_{p(x_{-i}|x_{ik}) \in \partial \mathbb{S}_i^k} I(X_i; Y),
	\end{align}
	where
	\begin{align} \label{eq:equation-6} 
		\mathbb{S}_i^k = \left\{ p(x_{-i}|x_{ik}) : H(p(x_{-i}|x_{ik})) \geq c_i^k \right\}, \quad
		\partial \mathbb{S}_i^k = \left\{ p(x_{-i}|x_{ik}) : H(p(x_{-i}|x_{ik})) = c_i^k \right\}.
	\end{align}
\end{theorem}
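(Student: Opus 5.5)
The plan is to reduce the statement to a single-row problem: maximize a convex function over an entropy super-level set in the simplex $\Delta(\mathcal{X}_{-i})$, and then apply Lemma~\ref{lemma:convex-representation-entropy-super-level-set}.

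\emph{Step 1 (rewriting the constraint).} Fix $p(x_i)$ and all rows $p(x_{-i}|x_{i\ell})$ with $\ell\neq k$. By the decomposition \eqref{Eq:entropy_constraint_decomposed}, the constraint $H(X)\ge b$ becomes
\[
p(x_{ik})\,H(X_{-i}|x_{ik}) \;\ge\; b - H(X_i) - \sum_{\ell\neq k} p(x_{i\ell}) H(X_{-i}|x_{i\ell}).
\]
Dividing by $p(x_{ik})>0$ gives $H(p(x_{-i}|x_{ik}))\ge c_i^k$. For this to make sense, I read the sum in \eqref{eq:equation-5} as running over $\ell\neq k$, since the term with $\ell=k$ is the variable itself. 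So the feasible set for row $k$ is exactly $\mathbb{S}_i^k$. I will assume $0\le c_i^k\le \log n_{-i}$; otherwise the row problem is infeasible and the statement is vacuous. Under this assumption Lemma~\ref{lemma:convex-representation-entropy-super-level-set}, applied on the finite set $\mathcal{X}_{-i}$ with threshold $c_i^k$, gives $\mathbb{S}_i^k=\operatorname{conv}(\partial\mathbb{S}_i^k)$ with $\partial\mathbb{S}_i^k$ nonempty and compact.

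\emph{Step 2 (row-wise convexity).} I will show that $I(X_i;Y)$ is convex as a function of the single row $p(x_{-i}|x_{ik})$ when everything else is held fixed. This repeats the proof of Theorem~\ref{thm:mutual_info_properties}(2) with perturbations restricted to row $k$.
\begin{itemize}
\item The joint $p(x_i,y)=p(x_i)\sum_{x_{-i}}p(x_{-i}|x_i)q(y|x_i,x_{-i})$ is affine in that row. Only its $x_i=x_{ik}$ slice varies.
\item Hence $p(y)$ and the product reference $p(x_i)p(y)$ are affine in the row as well.
\item Joint convexity of KL divergence, via $I(X_i;Y)=D(p(x_i,y)\,\|\,p(x_i)p(y))$, then gives convexity.
\end{itemize}
I will also note that $I$ is continuous in the row, being a finite sum of continuous terms on the simplex. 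I expect this step to be the main technical point, since Theorem~\ref{thm:mutual_info_properties}(2) is stated for the whole matrix. The argument goes through unchanged, however, because a convex combination of two matrices that differ only in row $k$ is again of that form.

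\emph{Step 3 (maximum on the boundary).} Since $\partial\mathbb{S}_i^k\subseteq\mathbb{S}_i^k$, the inequality ``$\ge$'' in \eqref{eq:equation-1} is immediate. For ``$\le$'', take any $r\in\mathbb{S}_i^k$. By Step 1 and Carathéodory's theorem,
\[
r=\sum_{j=1}^m\lambda_j r_j \quad\text{with } r_j\in\partial\mathbb{S}_i^k,\ \lambda_j\ge 0,\ \textstyle\sum_j\lambda_j=1.
\]
Convexity from Step 2 yields $I(r)\le\sum_j\lambda_j I(r_j)\le\max_j I(r_j)\le\sup_{\partial\mathbb{S}_i^k} I$. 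Compactness of $\partial\mathbb{S}_i^k$ and continuity of $I$ show the supremum is attained, so it is a maximum. Taking the supremum over $r$, and noting that $\mathbb{S}_i^k$ is compact so the left-hand maximum also exists, gives equality in \eqref{eq:equation-1}.
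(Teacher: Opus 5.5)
Your proof is correct and follows the same route as the paper. It combines convexity of $I(X_i;Y)$ in the single row $p(x_{-i}|x_{ik})$ (Theorem~\ref{thm:mutual_info_properties}) with the representation $\mathbb{S}_i^k=\operatorname{conv}(\partial\mathbb{S}_i^k)$ from Lemma~\ref{lemma:convex-representation-entropy-super-level-set}, and it makes explicit what the paper's two-line sketch leaves implicit: the $\ell\neq k$ reading of the sum in \eqref{eq:equation-5} (matching Algorithm~\ref{alg:conditional_optimization_gradient_entropy_manifold}), the Carath\'eodory/Jensen step, and attainment of both maxima via compactness and continuity.
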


\begin{proof}
	The proof follows from the convexity of \(I(X_i; Y)\) in each row \(p(x_{-i} | x_{ik})\) (Theorem~\ref{thm:mutual_info_properties}) and the convexity of the entropy super-level sets (Lemma~\ref{lemma:convex-representation-entropy-super-level-set}). When \(c_i^k > 0\), the objective is convex over a convex set, and the maximum occurs at the boundary. When \(c_i^k \leq 0\), the constraint is inactive and the maximum occurs at an extreme point (deterministic distribution).
\end{proof}

Theorem~\ref{thm:entropy_boundary} reveals a crucial structural property: the global entropy constraint \(H(X) \geq b\) can be decomposed into row-wise entropy constraints \(H(p(x_{-i} | x_{ik})) \geq c_i^k\), and the optimal solution lies on the boundary where \(H(p(x_{-i} | x_{ik})) = c_i^k\) for each \(k\) with \(p(x_{ik}) > 0\). For indices where \(p(x_{ik}) = 0\), the conditional distribution is unconstrained and can be set to any deterministic distribution.

\begin{algorithm}[!bth]
	\caption{Projected Gradient Coordinate Ascent for Conditional Distributions}
	\label{alg:conditional_optimization_gradient_entropy_manifold}
	\begin{algorithmic}[1]
		\REQUIRE Fixed $p(x_i)$, $q(y|x)$, $b > H(X_i)$, $\tau > 0$, $T_{\max}$, $\eta_0$, $\beta$, $\tau_H$
		\ENSURE Optimal conditional distributions $p^{*}(x_{-i} | x_i)$, maximal mutual information \( I^* \)
		
		\STATE Initialize $p(x_{-i}|x_i)$ with $H(X) \geq b$, $I_{\max} \leftarrow -\infty$, $t \leftarrow 0$
		\STATE $\mathcal{A} \leftarrow \{k: p(x_{ik}) = 0\}$, $\mathcal{E} \leftarrow$ unit vectors in $\mathbb{R}^{n_{-i}}$
		
		\WHILE{$t < T_{\max}$ and $\Delta I > \tau$}
		\STATE $I_{\text{prev}} \leftarrow I(X_i; Y)$
		
		\FOR{$k \in \mathcal{A}$}  
		\STATE $p(x_{-i}|x_{ik}) \leftarrow \arg\max_{e\in\mathcal{E}} I(X_i; Y)$ with $p(x_{-i}|x_{ik})=e$ \COMMENT{Zero-marginal rows}
		\ENDFOR
		
		\FOR{$k \notin \mathcal{A}$}  
		\STATE $c_i^k \leftarrow \frac{b - H(X_i)}{p(x_{ik})} - \sum_{\ell \neq k} \frac{p(x_{i\ell})}{p(x_{ik})} H(X_{-i}|x_{i\ell})$ \COMMENT{Positive-marginal rows}
		\IF{$c_i^k > \log n_{-i}$}
		\STATE \textbf{return} $(\emptyset, -\infty, \text{``INFEASIBLE''})$
		\ELSIF{$c_i^k \leq 0$}
		\STATE $p(x_{-i}|x_{ik}) \leftarrow \arg\max_{e\in\mathcal{E}} I(X_i; Y)$ with $p(x_{-i}|x_{ik})=e$
		\ELSE
		\STATE $\eta \leftarrow \eta_0$
		\FOR{$iter = 1$ to $max\_inner$}
		\STATE $\nabla_k \leftarrow \frac{\partial I}{\partial p(x_{-i}|x_{ik})}$, $p_{\text{new}} \leftarrow p + \eta \nabla_k$ \COMMENT{Theorem~\ref{thm:mi_gradient}}
		\STATE $p_{\text{new}} \leftarrow \text{ProjectToSimplex}(p_{\text{new}})$  \COMMENT{Algorithm~\ref{alg:simplex-projection}}
		\STATE $p_{\text{new}} \leftarrow \text{ProjectToEntropy}(p_{\text{new}}, c_i^k, \tau_H)$  \COMMENT{Algorithm~\ref{alg:entropy-projection}}
		\IF{$I_{\text{new}} \geq I$} 
		\STATE $p(x_{-i}|x_{ik}) \leftarrow p_{\text{new}}$, \textbf{break}
		\ELSE
		\STATE $\eta \leftarrow \beta \eta$
		\ENDIF
		\ENDFOR
		\ENDIF
		\ENDFOR
		
		\STATE $I_{\text{curr}} \leftarrow I(X_i; Y)$
		\IF{$I_{\text{curr}} > I_{\max}$}
		\STATE $I_{\max} \leftarrow I_{\text{curr}}$, $p_{\max} \leftarrow p(x_{-i}|x_i)$
		\ENDIF
		\STATE $\Delta I \leftarrow |I_{\text{curr}} - I_{\text{prev}}|$, $t \leftarrow t+1$
		\ENDWHILE
		
		\STATE \textbf{return} $(p_{\max}, I_{\max})$
	\end{algorithmic}
\end{algorithm}

\subsubsection*{Algorithm Description: Projected Gradient Coordinate Ascent}

Algorithm~\ref{alg:conditional_optimization_gradient_entropy_manifold} exploits this decomposition by optimizing each row of the conditional matrix \(p(x_{-i} | x_i)\) sequentially while keeping the others fixed. The algorithm uses the following key parameters:
\begin{itemize}
	\item \(\tau\): Convergence tolerance for the change in mutual information \(I(X_i; Y)\) between iterations.
	\item \(T_{\text{max}}\): Maximum number of outer iterations allowed.
	\item \(\eta_0\): Initial step size for gradient ascent updates.
	\item \(\beta\): Backtracking factor used to reduce the step size during line search (typical value: 0.5).
	\item \(\tau_H\): Tolerance for entropy projection, used in Algorithm~\ref{alg:entropy-projection} to enforce \(H(p(x_{-i}|x_{ik})) = c_i^k\).
\end{itemize}
The algorithm proceeds as follows:
\begin{enumerate}
	\item \textbf{Initialization:} 
	\begin{itemize}
		\item Identify the set \(\mathcal{A}\) of indices where \(p(x_{ik}) = 0\).
		\item For rows with \(p(x_{ik}) > 0\), compute the target entropy \(c_i^k\) using \eqref{eq:equation-5}.
		\item Initialize each row \(p(x_{-i} | x_{ik})\) uniformly or randomly, ensuring feasibility.
	\end{itemize}
	
	\item \textbf{Row-wise updates:}
	\begin{itemize}
		\item \textbf{Zero-marginal rows (\(k \in \mathcal{A}\)):} By Theorem~\ref{thm:entropy_boundary}, these rows are unconstrained. Choose the unit vector \(e \in \mathcal{E}\) that maximizes \(I(X_i; Y)\) via exhaustive search.
		\item \textbf{Positive-marginal rows (\(k \notin \mathcal{A}\)):}
		\begin{itemize}
			\item If \(c_i^k \leq 0\), the constraint is inactive. Set the row to the best unit vector.
			\item If \(c_i^k > 0\), perform projected gradient ascent to maximize \(I(X_i; Y)\) under the constraints \(p(x_{-i} | x_{ik}) \in \Delta(\mathcal{X}_{-i})\) and \(H(p(x_{-i} | x_{ik})) = c_i^k\).
		\end{itemize}
	\end{itemize}
	
	\item \textbf{Gradient ascent with backtracking line search:} For each row requiring gradient-based optimization:
	\begin{itemize}
		\item Compute the gradient \(\nabla_k = \partial I(X_i; Y) / \partial p(x_{-i} \mid x_{ik})\) using Theorem~\ref{thm:mi_gradient}.
		\item Perform a backtracking line search to adapt the step size \(\eta\):
		\begin{itemize}
			\item Update: \(\tilde{p} = p + \eta \nabla_k\)
			\item Project onto the probability simplex (Algorithm~\ref{alg:simplex-projection}).
			\item Project onto the entropy constraint \(H(\cdot) = c_i^k\) (Algorithm~\ref{alg:entropy-projection}).
			\item Accept the step if \(I(X_i; Y)\) does not decrease; otherwise reduce \(\eta\) by factor \(\beta\).
		\end{itemize}
	\end{itemize}
	
	\item \textbf{Convergence:} After updating all rows, recompute \(I(X_i; Y)\). Terminate when the change falls below tolerance \(\tau\) or the maximum iteration count \(T_{\text{max}}\) is reached.
\end{enumerate}

\subsubsection*{Discussion}

\begin{itemize}
	\item \textbf{Theoretical Foundation:} Theorem~\ref{thm:entropy_boundary} provides the mathematical justification for the row-wise decomposition approach. It shows that the complex global entropy constraint can be reduced to simpler row-wise constraints, with optimality occurring on the boundary of each constraint set.
	
	\item \textbf{Algorithm Appropriateness:} Algorithm~\ref{alg:conditional_optimization_gradient_entropy_manifold} directly implements the insights from Theorem~\ref{thm:entropy_boundary}. The row-wise optimization structure, combined with gradient ascent for constrained rows and exhaustive search for unconstrained rows, efficiently exploits the problem's mathematical structure.
	
	\item \textbf{Convergence Properties:} Due to the non-convexity of the entropy constraint manifold, convergence to a global maximum is not guaranteed. However, the coordinate ascent framework ensures convergence to a local optimum, and multiple restarts can improve solution quality. The backtracking line search ensures stable updates.
	
	\item \textbf{Computational Complexity:} Each iteration requires \(O(n_i \cdot n_{-i} \cdot |\mathcal{Y}|)\) operations for gradient and mutual information computations. The inner backtracking loop and entropy projection add moderate overhead but ensure constraint satisfaction at each step.
\end{itemize}

\begin{algorithm}[t]
	\caption{Euclidean Projection onto the Probability Simplex}
	\label{alg:simplex-projection}
	\begin{algorithmic}[1]
		\REQUIRE Input vector \( \tilde{\mathbf{v}} \in \mathbb{R}^{n} \)
		\ENSURE Projected vector \( \hat{\mathbf{v}} \in \Delta^{n-1} \)
		\STATE \( \mathbf{u} \leftarrow \text{sort}(\tilde{\mathbf{v}}, \text{descending}) \)
		\STATE \( k \leftarrow 1 \), \( \text{sum} \leftarrow 0 \)
		\WHILE{\( k \leq n \)}
		\STATE \( \text{sum} \leftarrow \text{sum} + u_k \), \( \theta \leftarrow (\text{sum} - 1)/k \)
		\IF{\( k < n \) and \( u_{k+1} \leq \theta \)}
		\STATE \textbf{break}
		\ENDIF
		\STATE \( k \leftarrow k + 1 \)
		\ENDWHILE
		\FOR{\( j = 1 \) to \( n \)}
		\STATE \( \hat{v}_j \leftarrow \max(\tilde{v}_j - \theta, 0) \)
		\ENDFOR
		\RETURN \( \hat{\mathbf{v}} \)
	\end{algorithmic}
\end{algorithm}

\begin{algorithm}[t]
	\caption{Entropy-Constrained Projection}
	\label{alg:entropy-projection}
	\begin{algorithmic}[1]
		\REQUIRE Input distribution \( \mathbf{v} \in \Delta(\mathcal{X}) \), target entropy \( t \), tolerance \( \epsilon \), max iterations \( T_{\text{max}} \)
		\ENSURE Projected distribution \( \mathbf{w} \) with \( H(\mathbf{w}) = t \)
		\IF{\( |H(\mathbf{v}) - t| < \epsilon \)}
		\RETURN \( \mathbf{v} \)
		\ENDIF
		\STATE Define \( f(\beta) = H(\text{softmax}_{\beta}(\mathbf{v})) - t \)
		\STATE Find root \( \beta^* \) of \( f(\beta)=0 \) in \( [\beta_{\text{min}}, \beta_{\text{max}}] \) using Brent-Dekker method \cite{kochenderfer2019algorithms}
		\STATE \( \mathbf{w} \leftarrow \text{softmax}_{\beta^*}(\mathbf{v}) \)
		\RETURN \( \mathbf{w} \)
	\end{algorithmic}
\end{algorithm}

\subsection{Algorithm for Leakage-Distortion Tradeoff (Problem 2)}
\label{subsec:primal_tradeoff_alg}

The primal leakage-distortion problem seeks to design privacy mechanisms that minimize the worst-case per-record leakage while maintaining utility guarantees. Formally, we solve the optimization problem:
\begin{subequations} \label{eq:optm-leakge-distortion-tradeoff}
	\begin{align}
		\epsilon(D, b) \triangleq& \quad \min_{q(y|x)} \max_{i \in [n]} \max_{\substack{p(x) \in \Delta(\mathcal{X})_b}} I(X_i; Y) \\
		\text{s.t.} & \quad \mathbb{E}_{p_0}[d(f(X), Y)] \leq D
	\end{align}
\end{subequations}
where $p_0(x)$ represents the true dataset distribution, $D \geq 0$ is the maximum allowable distortion, and $d(\cdot,\cdot)$ is a distortion metric on $\mathcal{Y} \times \mathcal{Y}$.

\subsubsection{Alternating Optimization Framework}

We employ an \textbf{alternating optimization approach} that decomposes this challenging problem into two tractable subproblems:
\begin{itemize}
	\item \textbf{Adversarial Prior Update:} For a fixed mechanism $q(y|x)$, compute the worst-case prior distribution $p(x)$ using Algorithm~\ref{alg:max_leakage}. Note that the distortion constraint depends only on the true distribution $p_0(x)$ and the mechanism $q(y|x)$, not on the adversarial prior $p(x)$. Therefore, Algorithm~\ref{alg:max_leakage} remains directly applicable for this subproblem.
	
	\item \textbf{Mechanism Update:} For a fixed prior $p(x)$, optimize the privacy mechanism under distortion constraints using Algorithm~\ref{alg:mechanism_update_subproblem}.
\end{itemize}
Algorithm~\ref{alg:primal_tradeoff} outlines this iterative procedure, which guarantees monotonic improvement in the objective function.

\begin{algorithm}[!htb]
	\caption{Alternating Optimization for Leakage-Distortion Tradeoff}
	\label{alg:primal_tradeoff}
	\begin{algorithmic}[1]
		\REQUIRE Distortion bound $D > 0$, entropy constraint $b > 0$, tolerance $\epsilon > 0$, true distribution $p_0(x)$
		\ENSURE Optimal prior $p^*(x)$, mechanism $q^*(y|x)$, and leakage value $\epsilon(D, b)$
		
		\STATE \textbf{Initialize:}
		\STATE $p^{(0)}(x) \gets$ prior distribution over $\mathcal{X}$ satisfying $H(X) \geq b$
		\STATE $q^{(0)}(y|x) \gets$ mechanism satisfying $\mathbb{E}_{p_0}[d(f(X),Y)] \leq D$
		\STATE $k \gets 0$, $I^{(0)} \gets -\infty$, $\Delta I \gets \infty$
		
		\WHILE{$\Delta I > \epsilon$}
		\STATE $p^{(k+1)} \gets \text{Algorithm~\ref{alg:max_leakage}}(q^{(k)}, b)$ \COMMENT{Update adversarial prior}
		\STATE $q^{(k+1)} \gets \text{Algorithm~\ref{alg:mechanism_update_subproblem}}(p^{(k+1)}, p_0, D)$ \COMMENT{Update mechanism}
		\STATE $I^{(k+1)} \gets \max_{i} I(X_i; Y)$ with current distributions
		\STATE $\Delta I \gets |I^{(k+1)} - I^{(k)}|$
		\STATE $k \gets k + 1$
		\ENDWHILE
		
		\RETURN $p^*(x) \gets p^{(k)}$, $q^*(y|x) \gets q^{(k)}$, $\epsilon(D, b) \gets I^{(k)}$
	\end{algorithmic}
\end{algorithm}

\subsubsection{Mechanism Optimization Subproblem}

For a fixed prior distribution $p(x)$ satisfying $H(X) \geq b$, the problem reduces to:
\begin{subequations} \label{alg:mechanism_update_subproblem}
	\begin{align}
		\min_{q(y|x)} & \quad \max_{i \in [n]} I(X_i; Y) \\
		\text{s.t.} & \quad \mathbb{E}_{p_0}[d(f(X), Y)] \leq D \\
		& \quad q(y|x) \geq 0,\quad \sum_y q(y|x) = 1 \quad \forall x
	\end{align}
\end{subequations}
This constitutes a convex optimization problem since:
\begin{itemize}
	\item Each $I(X_i; Y)$ is convex in $q(y|x)$ for fixed $p(x)$ (Theorem~\ref{thm:mutual_info_properties});
	\item The pointwise maximum $\max_{i \in [n]} I(X_i; Y)$ preserves convexity;
	\item The distortion constraint is linear in $q(y|x)$;
	\item The probability simplex constraints are convex.
\end{itemize}

We solve this subproblem using an \emph{adaptive smooth penalty method} that dynamically adjusts the penalty parameter based on constraint violation. The augmented objective function is:
\begin{align}
	\mathcal{J}(q) = \max_{i \in [n]} I(X_i; Y) + \lambda \cdot P_{\text{distortion}}(q)
\end{align}
where $P_{\text{distortion}}(q) = \left(\text{softplus}(\mathbb{E}_{p_0}[d(f(X), Y)] - D)\right)^2$ is the smooth penalty function for distortion constraint violation, $\text{softplus}(z) = \ln(1+e^z)$, and $\lambda$ is adaptively updated based on the current constraint violation.

The gradient with respect to the mechanism probabilities combines both the worst-case mutual information and distortion terms. Let $i^* = \arg\max_{i \in [n]} I(X_i; Y)$ be the record with maximal leakage, then:
\begin{align}
	\frac{\partial \mathcal{J}}{\partial q(y|x)} = p(x) \log \frac{p(y|x_{i^*})}{p(y)} + \lambda \cdot \frac{\partial P_{\text{distortion}}}{\partial q(y|x)}
\end{align}
where the smooth penalty gradient for distortion constraint is:
\begin{align}
	\frac{\partial P_{\text{distortion}}}{\partial q(y|x)} = 2 \cdot \text{softplus}(E - D) \cdot \sigma(E - D) \cdot p_0(x) d(f(x), y)
\end{align}
with $E = \mathbb{E}_{p_0}[d(f(X), Y)]$ and $\sigma(z) = 1/(1+e^{-z})$ being the sigmoid function.

Algorithm~\ref{alg:mechanism_update_subproblem} implements the adaptive smooth penalty method with backtracking line search to find the optimal mechanism.

\begin{algorithm}[!hbt]
	\caption{Adaptive Smooth Penalty Method for Mechanism Optimization}
	\label{alg:mechanism_update_subproblem}
	\begin{algorithmic}[1]
		\REQUIRE Prior $p(x)$ with $H(X) \geq b$, true distribution $p_0(x)$, distortion bound $D > 0$, tolerance $\epsilon > 0$, initial penalty $\lambda_0 = 1.0$, adjustment factor $\mu = 1.5$, constraint margin $\delta = 0.01$
		\ENSURE Optimal mechanism $q^*(y|x)$
		
		\STATE Initialize $q \gets \text{initial mechanism}$, $\lambda \gets \lambda_0$
		\STATE $E_{\text{prev}} \gets \infty$, $q_{\text{best}} \gets q$, $J_{\text{best}} \gets \infty$
		
		\REPEAT
		\STATE Compute $i^* \gets \arg\max_{i \in [n]} I(X_i; Y)$ with current $q$
		\STATE $q_{\text{new}} \gets \text{Algorithm~\ref{alg:adaptive_gradient_update_smooth}}(q, \lambda, p, p_0, i^*)$
		\STATE $E \gets \mathbb{E}_{p_0}[d(f(X), Y)]$ with $q_{\text{new}}$
		\STATE $L \gets \max_{i \in [n]} I(X_i; Y)$ with $q_{\text{new}}$
		\STATE $J \gets L + \lambda \cdot \left(\text{softplus}(E - D)\right)^2$
		
		\IF{$J < J_{\text{best}}$}
		\STATE $q_{\text{best}} \gets q_{\text{new}}$, $J_{\text{best}} \gets J$
		\ENDIF
		
		\STATE \textbf{Penalty adjustment:}
		\IF{$E > D + \delta$}
		\STATE $\lambda \gets \lambda \cdot \mu$ \COMMENT{Increase penalty for distortion violation}
		\ELSIF{$E < D - \delta$}
		\STATE $\lambda \gets \lambda / \mu$ \COMMENT{Decrease penalty if distortion constraint satisfied with margin}
		\ENDIF
		
		\STATE $q \gets q_{\text{new}}$
		\STATE $E_{\text{prev}} \gets E$
		\UNTIL{$|E - E_{\text{prev}}| < \epsilon$ and $|E - D| < \epsilon$}
		
		\RETURN $q_{\text{best}}$
	\end{algorithmic}
\end{algorithm}

\subsubsection{Adaptive Gradient Descent with Backtracking Line Search}

For each penalty parameter $\lambda$ and worst-case record $i^*$, we solve the unconstrained optimization problem using exponentiated gradient descent with backtracking line search (Algorithm~\ref{alg:adaptive_gradient_update_smooth}). This approach maintains the probability simplex constraints through normalization and ensures sufficient decrease in the objective function.

\begin{algorithm}[!hbt]
	\caption{Exponentiated Gradient Descent with Smooth Penalty}
	\label{alg:adaptive_gradient_update_smooth}
	\begin{algorithmic}[1]
		\REQUIRE Current mechanism $q(y|x)$, penalty $\lambda$, distributions $p(x)$, $p_0(x)$, worst-case record $i^*$, initial learning rate $\eta_0$, backtracking factor $\beta = 0.5$, tolerance $\epsilon > 0$
		\ENSURE Updated mechanism $q^{\text{new}}(y|x)$
		
		\STATE Initialize $q^{(0)} \gets q(y|x)$, $t \gets 0$, $\eta \gets \eta_0$
		\REPEAT
		\STATE Compute gradient $\nabla J = \frac{\partial \mathcal{J}}{\partial q(y|x)}$ for record $i^*$ using current $q^{(t)}$
		
		\STATE \textbf{Backtracking line search:}
		\REPEAT
		\STATE \textbf{Exponentiated gradient update:}
		\FORALL{$x \in \mathcal{X}$}
		\STATE $q_{\text{temp}}(y|x) \propto q^{(t)}(y|x) \cdot \exp\left(-\eta \nabla J(y|x)\right)$ \COMMENT{Normalize to maintain simplex}
		\ENDFOR
		\STATE $J_{\text{temp}} \gets \mathcal{J}(q_{\text{temp}})$
		\IF{$J_{\text{temp}} > \mathcal{J}(q^{(t)}) - \frac{\eta}{2} \|\nabla J\|^2$}
		\STATE $\eta \gets \beta \eta$ \COMMENT{Reduce step size until sufficient decrease}
		\ENDIF
		\UNTIL{Armijo condition satisfied}
		
		\STATE $q^{(t+1)} \gets q_{\text{temp}}$
		\STATE $t \gets t + 1$
		\STATE $\eta \gets \eta_0 / \sqrt{t+1}$ \COMMENT{Decaying learning rate}
		\UNTIL{$\|q^{(t)} - q^{(t-1)}\|_1 < \epsilon$ or $t > T_{\max}$}
		
		\RETURN $q^{(t)}$
	\end{algorithmic}
\end{algorithm}

\subsubsection{Convergence Analysis}

The alternating optimization framework for the leakage-distortion tradeoff enjoys favorable convergence properties despite the non-convex nature of the joint optimization problem:

\begin{itemize}
	\item \textbf{Monotonic Improvement:} Each iteration of Algorithm~\ref{alg:primal_tradeoff} either decreases the maximum leakage or leaves it unchanged. The adversarial prior update (Algorithm~\ref{alg:max_leakage}) increases $I(X_i;Y)$ for the current mechanism, while the mechanism update (Algorithm~\ref{alg:mechanism_update_subproblem}) decreases $\max_{i \in [n]} I(X_i;Y)$ while maintaining the distortion constraint.
	
	\item \textbf{Subproblem Optimality:} For fixed $p(x)$, the mechanism update converges to the global optimum of the convex subproblem due to the convexity of $\max_{i \in [n]} I(X_i;Y)$ in $q(y|x)$ (Theorem~\ref{thm:mutual_info_properties}) and the linearity of the distortion constraint. The backtracking line search with Armijo condition ensures stable convergence.
	
	\item \textbf{Local Convergence:} The overall algorithm converges to a local optimum of the non-convex joint optimization problem. While global optimality cannot be guaranteed due to the non-convex adversarial prior updates, the alternating optimization framework ensures convergence to a stationary point where neither player can unilaterally improve.
\end{itemize}

\subsection{Algorithm for Dual Minimal-Distortion Formulation (Problem 3)}
\label{subsec:dual_tradeoff_alg}

The dual formulation addresses the complementary problem to the primal leakage-distortion tradeoff: minimizing expected distortion while ensuring bounded worst-case per-record leakage. Formally, we solve:
\begin{subequations}
	\label{eq:dual_problem_revised}
	\begin{align}
		D(L, b) \triangleq \min_{q(y|x)} \quad & \mathbb{E}_{p_0}[d(f(X), Y)] = \sum_{x,y} p_0(x)q(y|x)d(f(x), y) \label{eq:dual_objective_revised} \\
		\text{s.t.} \quad & \max_{i \in [n]} \max_{\substack{p(x): \\ H(X) \geq b}} I(X_i; Y) \leq L \label{eq:dual_leakage_constraint_revised} \\
		& q(y|x) \geq 0,\quad \sum_y q(y|x) = 1 \quad \forall x \in \mathcal{X} \label{eq:channel_constraints_revised}
	\end{align}
\end{subequations}
where \( L > 0 \) is the maximum allowable leakage, \( p_0(x) \) is the true dataset distribution, and \( d(\cdot,\cdot) \) is a distortion metric.

\subsubsection{Optimization Strategy and Challenges}

The constraint (\ref{eq:dual_leakage_constraint_revised}) makes Problem 3 inherently \textbf{non-convex} in \( q(y|x) \) due to the maximization over adversarial priors \( p(x) \). However, for fixed adversarial prior \( p(x) \), the leakage constraint becomes convex in \( q(y|x) \) (by Theorem \ref{thm:mutual_info_properties}). This motivates an alternating optimization approach between two subproblems:

\begin{enumerate}
	\item \textbf{Adversary Update}: For current mechanism \( q^{(k)}(y|x) \), compute the worst-case prior distribution \( p^{(k+1)}(x) \) that maximizes \(\max_{i \in [n]} I(X_i; Y) \) with $H(X)\ge b$ using Algorithm \ref{alg:max_leakage}.
	\item \textbf{Mechanism Update}: For fixed \( p^{(k+1)}(x) \), solve a convex relaxation of (\ref{eq:dual_objective_revised})–(\ref{eq:dual_leakage_constraint_revised}) using a smooth penalty method with backtracking line search.
\end{enumerate}

Algorithm \ref{alg:dual_tradeoff} implements this procedure with careful attention to numerical stability and practical convergence.

\subsubsection{Mechanism Update via Smooth Penalty Method}

For fixed \( p(x) \), we approximate the constrained problem with an unconstrained penalty formulation:
\begin{align}
	\min_{q(y|x)} \left\{ \mathbb{E}_{p_0}[d(f(X), Y)] + \lambda \cdot P_{\text{leakage}}(q) \right\},
\end{align}
where \( P_{\text{leakage}}(q) \) is a smooth penalty function for leakage constraint violation:
\begin{align}
	P_{\text{leakage}}(q) = \left( \text{softplus}(\max_{i \in [n]}I(X_{i}; Y) - L) \right)^2,
\end{align}
with \( \text{softplus}(z) = \ln(1 + e^z) \).

Assume $i^* = \argmax_{i \in [n]}I(X_{i}; Y)$, the gradient of the smooth penalty term is:
\begin{align}
	\frac{\partial P_{\text{leakage}}}{\partial q(y|x)} = 2 \cdot \text{softplus}(I - L) \cdot \sigma(I - L) \cdot p(x) \log \frac{p(y|x_{i^*})}{p(y)}
\end{align}
where \( \sigma(z) = 1/(1 + e^{-z}) \) is the sigmoid function.

Algorithm~\ref{alg:dual_mechanism_update} implements the adaptive smooth penalty method to find the optimal mechanism.

\subsubsection{Exponentiated Gradient Descent with Backtracking Line Search}

For each penalty parameter $\lambda$, we solve the unconstrained optimization problem using exponentiated gradient descent with backtracking line search (Algorithm~\ref{alg:dual_gradient_update_backtracking}). This approach maintains the probability simplex constraints through normalization and ensures sufficient decrease in the objective function.

\textbf{Backtracking Line Search and Armijo Condition:} The backtracking line search ensures that each gradient step provides sufficient decrease in the objective function. The \emph{Armijo condition} requires that:
\begin{align}
	J(q^{(t+1)}) \leq J(q^{(t)}) - c \cdot \eta \|\nabla J\|^2,
\end{align}
where $c \in (0,1)$ is a constant (typically $c = 0.5$). When this condition is \emph{not satisfied}, the step size $\eta$ is too large and is reduced by factor $\beta$.

\begin{algorithm}[!t]
	\caption{Minimal Distortion under Bounded Leakage}
	\label{alg:dual_tradeoff}
	\begin{algorithmic}[1]
		\REQUIRE Leakage bound \( L > 0 \), distortion metric \( d \), true distribution \( p_0(x) \), entropy constraint \( b > 0 \), initial mechanism \( q^{(0)}(y|x) \), tolerances \( \epsilon_{\text{outer}}, \epsilon_{\text{constraint}} > 0 \), penalty factor \( \mu > 1 \), max iterations \( T_{\max} \)
		\ENSURE Mechanism \( q^*(y|x) \), minimal distortion \( \mathcal D \)
		\STATE Initialize \( q \gets q^{(0)} \), \( \lambda \gets 1 \), \( k \gets 0 \), \( D_{\text{prev}} \gets \infty \)
		\WHILE{$k < T_{\max}$}
		\STATE \textbf{Adversary Update:}
		\STATE \( p, i^* \gets \text{Algorithm~\ref{alg:max_leakage}}(q, b) \) \COMMENT{Worst-case prior and record}
		\STATE \( I \gets I(X_{i^*}; Y) \) \COMMENT{Current leakage}
		\STATE \textbf{Penalty Parameter Update:}
		\IF{$I > L + \epsilon_{\text{constraint}}/2$}
		\STATE \( \lambda \gets \lambda \cdot \mu \) \COMMENT{Increase penalty for leakage violation}
		\ELSIF{$I < L - \epsilon_{\text{constraint}}/2$}
		\STATE \( \lambda \gets \lambda / \mu \) \COMMENT{Decrease penalty if leakage constraint satisfied}
		\ENDIF
		\STATE \textbf{Mechanism Update:}
		\STATE \( q \gets \text{Algorithm~\ref{alg:dual_mechanism_update}}(q, p, p_0, L, \lambda) \)
		\STATE \( D_{\text{current}} \gets \mathbb{E}_{p_0}[d(f(X), Y)] \)
		\STATE \textbf{Convergence Check:}
		\IF{$|D_{\text{current}} - D_{\text{prev}}| / D_{\text{prev}} < \epsilon_{\text{outer}}$ \AND $|I - L| < \epsilon_{\text{constraint}}$}
		\STATE \textbf{break}
		\ENDIF
		\STATE \( D_{\text{prev}} \gets D_{\text{current}} \), \( k \gets k + 1 \)
		\ENDWHILE
		\RETURN \( q, D_{\text{current}} \)
	\end{algorithmic}
\end{algorithm}

\begin{algorithm}[!hbt]
	\caption{Adaptive Smooth Penalty Method for Dual Mechanism Update}
	\label{alg:dual_mechanism_update}
	\begin{algorithmic}[1]
		\REQUIRE Prior $p(x)$ with $H(X) \geq b$, true distribution $p_0(x)$, leakage bound $L > 0$, tolerance $\epsilon > 0$, initial penalty $\lambda_0 = 1.0$, adjustment factor $\mu = 1.5$, constraint margin $\delta = 0.01$
		\ENSURE Optimal mechanism $q^*(y|x)$
		
		\STATE Initialize $q \gets \text{initial mechanism}$, $\lambda \gets \lambda_0$
		\STATE $I_{\text{prev}} \gets \infty$, $q_{\text{best}} \gets q$, $J_{\text{best}} \gets \infty$
		
		\REPEAT
		\STATE $q_{\text{new}} \gets \text{Algorithm~\ref{alg:dual_gradient_update_backtracking}}(q, \lambda, p, p_0, L)$
		\STATE $I \gets \max_{i \in [n]}I(X_{i}; Y)$ with $q_{\text{new}}$
		\STATE $D_{\text{current}} \gets \mathbb{E}_{p_0}[d(f(X), Y)]$ with $q_{\text{new}}$
		\STATE $J \gets D_{\text{current}} + \lambda \cdot \left(\text{softplus}(I - L)\right)^2$
		
		\IF{$J < J_{\text{best}}$}
		\STATE $q_{\text{best}} \gets q_{\text{new}}$, $J_{\text{best}} \gets J$
		\ENDIF
		
		\STATE \textbf{Penalty adjustment:}
		\IF{$I > L + \delta$}
		\STATE $\lambda \gets \lambda \cdot \mu$ \COMMENT{Increase penalty for leakage violation}
		\ELSIF{$I < L - \delta$}
		\STATE $\lambda \gets \lambda / \mu$ \COMMENT{Decrease penalty if leakage constraint satisfied with margin}
		\ENDIF
		
		\STATE $q \gets q_{\text{new}}$
		\STATE $I_{\text{prev}} \gets I$
		\UNTIL{$|I - I_{\text{prev}}| < \epsilon$ and $|I - L| < \epsilon$}
		
		\RETURN $q_{\text{best}}$
	\end{algorithmic}
\end{algorithm}

\begin{algorithm}[!hbt]
	\caption{Exponentiated Gradient Descent with Backtracking for Dual Formulation}
	\label{alg:dual_gradient_update_backtracking}
	\begin{algorithmic}[1]
		\REQUIRE Current mechanism $q(y|x)$, penalty $\lambda$, distributions $p(x)$, $p_0(x)$, leakage bound $L$, initial learning rate $\eta_0$, backtracking factor $\beta = 0.5$, tolerance $\epsilon > 0$
		\ENSURE Updated mechanism $q^{\text{new}}(y|x)$
		
		\STATE Initialize $q^{(0)} \gets q(y|x)$, $t \gets 0$, $\eta \gets \eta_0$
		\REPEAT
		\STATE Compute $i^* \gets \arg\max_{i \in [n]} I(X_i; Y)$ with current $q^{(t)}$
		\STATE Compute gradient $\nabla J = \frac{\partial}{\partial q(y|x)} \left[ \mathbb{E}_{p_0}[d(f(X), Y)] + \lambda P_{\text{leakage}}(q) \right]$
		
		\STATE \textbf{Backtracking line search:}
		\REPEAT
		\STATE \textbf{Exponentiated gradient update:}
		\FORALL{$x \in \mathcal{X}$}
		\STATE $q_{\text{temp}}(y|x) \propto q^{(t)}(y|x) \cdot \exp\left(-\eta \nabla J(y|x)\right)$ \COMMENT{Normalize to maintain simplex}
		\ENDFOR
		\STATE $J_{\text{temp}} \gets \mathbb{E}_{p_0}[d(f(X), Y)] + \lambda P_{\text{leakage}}(q_{\text{temp}})$
		\IF{$J_{\text{temp}} > J(q^{(t)}) - \frac{\eta}{2} \|\nabla J\|^2$}
		\STATE $\eta \gets \beta \eta$ \COMMENT{Reduce step size until sufficient decrease}
		\ENDIF
		\UNTIL{Armijo condition satisfied}
		
		\STATE $q^{(t+1)} \gets q_{\text{temp}}$
		\STATE $t \gets t + 1$
		\STATE $\eta \gets \eta_0 / \sqrt{t+1}$ \COMMENT{Decaying learning rate for stability}
		\UNTIL{$\|q^{(t)} - q^{(t-1)}\|_1 < \epsilon$ or $t > T_{\max}$}
		
		\RETURN $q^{(t)}$
	\end{algorithmic}
\end{algorithm}

\subsubsection{Convergence Analysis}

The alternating optimization framework for the dual formulation enjoys favorable convergence properties:

\begin{itemize}
	\item \textbf{Monotonic Improvement}: Each iteration of Algorithm~\ref{alg:dual_tradeoff} either decreases the expected distortion or leaves it unchanged while controlling leakage constraint violation.
	
	\item \textbf{Subproblem Optimality}: For fixed $p(x)$, the mechanism update converges to the global optimum of the convex subproblem due to the convexity of the objective and constraints in $q(y|x)$ (Theorem~\ref{thm:mutual_info_properties}) and the effectiveness of the penalty method.
	
	\item \textbf{Stationary Point Convergence}: The overall algorithm converges to a stationary point satisfying the KKT conditions of the dual minimal-distortion problem. While global optimality cannot be guaranteed due to the non-convex adversarial prior updates, the alternating optimization framework ensures convergence to a point where neither player can unilaterally improve.
\end{itemize}

\textbf{Consistency with Primal Formulation:} The optimization approach maintains methodological consistency between the primal (Problem 2) and dual (Problem 3) formulations, employing similar techniques of alternating optimization, smooth penalty methods, and exponentiated gradient descent with backtracking line search.

\subsection{Convergence Guarantees}
\label{subsec:convergence}

This section establishes rigorous convergence properties for the alternating optimization algorithms developed in Sections~\ref{subsec:max_leakage_alg}--\ref{subsec:dual_tradeoff_alg}. Our analysis leverages the convex-concave structure of mutual information (Theorem~\ref{thm:mutual_info_properties}), the convexity of constraint sets (Theorem~\ref{thm:entropy_concavity}), and modern alternating optimization theory to characterize the convergence behavior of Algorithms~\ref{alg:max_leakage}, \ref{alg:primal_tradeoff}, and \ref{alg:dual_tradeoff}.

\subsubsection{Assumptions and Definitions}

We begin by stating technical assumptions that underlie our convergence analysis:

\begin{assumption}[Exact Subproblem Solutions]
	\label{ass:exact-subproblems}
	Algorithm~\ref{alg:marginal_optimization} (marginal update) and Algorithm~\ref{alg:conditional_optimization_gradient_entropy_manifold} (conditional update) solve their respective subproblems to optimality within numerical tolerance. Specifically:
	\begin{enumerate}
		\item Algorithm~\ref{alg:marginal_optimization} returns the global optimum of the concave maximization problem for the marginal distribution $p(x_i)$ given fixed conditionals $p(x_{-i}|x_i)$.
		\item Algorithm~\ref{alg:conditional_optimization_gradient_entropy_manifold} returns a stationary point of the non-convex maximization problem for the conditionals $p(x_{-i}|x_i)$ given fixed marginals $p(x_i)$.
	\end{enumerate}
\end{assumption}

Due to the non-convexity of the entropy-constrained conditional optimization (Theorem~\ref{thm:entropy_boundary}), global optimality cannot be guaranteed; however, Theorem~\ref{thm:conv-conditional-update} ensures convergence to a stationary point.

\begin{assumption}[Regularity Conditions]
	\label{ass:regularity}
	At any limit point $p^*$ of Algorithm~\ref{alg:max_leakage}, the gradients of active constraints are linearly independent. In particular, if the entropy constraint is active ($H(X^*) = b$), then $\nabla_{p(x)}H(X^*)$ is non-zero and not collinear with the gradients of normalization constraints.
\end{assumption}

This assumption is a standard regularity condition (akin to LICQ) ensuring well-defined KKT multipliers at limit points. While it may fail in the edge case where the optimal distribution is uniform and the entropy constraint is exactly binding, this scenario is rare in practice and does not affect the typical convergence behavior of the algorithm.

\begin{definition}[Block-wise Stationary Point]
	\label{def:block-stationary}
	A joint distribution $p^*(x) = p^*(x_i)p^*(x_{-i}|x_i)$ is a \emph{block-wise stationary point} of Problem~1 if:
	\begin{enumerate}
		\item $p^*(x_i)$ is optimal for the marginal subproblem with fixed $p^*(x_{-i}|x_i)$.
		\item $p^*(x_{-i}|x_i)$ is optimal for the conditional subproblem with fixed $p^*(x_i)$.
	\end{enumerate}
\end{definition}

\subsubsection{Convergence of Algorithm~\ref{alg:max_leakage} for Maximal Leakage}

\begin{theorem}[Convergence of Algorithm~\ref{alg:max_leakage}]
	\label{thm:conv-alg1}
	Under Assumptions~\ref{ass:exact-subproblems} and \ref{ass:regularity}, Algorithm~\ref{alg:max_leakage} generates a sequence $\{p^{(k)}(x)\}$ whose every limit point $p^*$ is a block-wise stationary point of Problem 1 (Definition~\ref{def:max-leakage}). Moreover:
	
	\begin{enumerate}
		\item \textbf{Monotonic Improvement:} $\mathcal{L}^{(k+1)} \geq \mathcal{L}^{(k)}$ for all $k$, where $\mathcal{L}^{(k)} = \max_{i \in [n]} I(X_i;Y)$ at iteration $k$.
		
		\item \textbf{Special Case (Global Convergence when $b=0$):} When $b = 0$ (i.e., no entropy constraint), Algorithm~\ref{alg:max_leakage} converges to the \textbf{global optimum} of Problem~1. In this case the optimal conditional distributions are deterministic unit vectors.
		
		\item \textbf{Convergence Rate:} The objective $\mathcal{L}^{(k)}$ converges at a rate of $O(1/k)$ in the general case, and linearly when the entropy constraint is inactive (i.e., when the optimal solution satisfies $H(X) > b$).
	\end{enumerate}
\end{theorem}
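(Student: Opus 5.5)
The proof follows the standard template for block-coordinate ascent, using the structure of Algorithm~\ref{alg:max_leakage}.

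\textbf{Monotonicity and boundedness.} The algorithm keeps a running best $\mathcal{L}_{\max}$ and replaces $p^{(k)}$ only when a candidate strictly improves it. Hence $\mathcal{L}^{(k+1)} \ge \mathcal{L}^{(k)}$ holds by construction. Under Assumption~\ref{ass:exact-subproblems}, each block update also cannot decrease $I(X_i;Y)$: the marginal step solves its convex subproblem exactly, and the conditional step is a backtracking ascent accepted only when $I_{\text{new}} \ge I$. This gives item~1. Since $I(X_i;Y) \le \log n_i$, the sequence $\mathcal{L}^{(k)}$ is bounded and monotone, so it converges to some $\mathcal{L}^\infty$.

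\textbf{Limit points are block-wise stationary.} The feasible set $\Delta(\mathcal{X})_b$ is compact: it is closed by continuity of $H$ and contained in the simplex. Hence limit points exist, and they are feasible. Take a subsequence $p^{(k_j)} \to p^*$, passing to a further subsequence on which the maximizing record $i$ is constant, which is possible because $[n]$ is finite. Suppose $p^*(x_i)$ were not optimal for the marginal subproblem at fixed $p^*(x_{-i}|x_i)$. That subproblem is a concave maximization over the convex set of Theorem~\ref{thm:entropy_concavity}, so there is a strict improvement $\delta>0$. Its optimal value varies continuously in the fixed conditionals; this is where Assumption~\ref{ass:regularity} enters, since a Slater/LICQ-type condition makes the constrained value function continuous. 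The improvement therefore persists for $p^{(k_j)}$ with large $j$, giving $\mathcal{L}^{(k_j+1)} \ge \mathcal{L}^{(k_j)}+\delta/2$, which contradicts convergence of $\mathcal{L}^{(k)}$. The same argument applied to the conditional block, using Theorem~\ref{thm:entropy_boundary} to reduce to the row-wise sets $\partial\mathbb{S}_i^k$, shows that $p^*(x_{-i}|x_i)$ satisfies the stationarity/optimality condition of Assumption~\ref{ass:exact-subproblems}(2). Together these give Definition~\ref{def:block-stationary}. \textbf{The main obstacle is this continuity step.} For the conditional block the constraint set depends on $p(x_i)$ through $c_i^k$ and may become empty or degenerate. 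I would handle this by restricting to $c_i^k<\log n_{-i}$ (the strict feasibility case) and invoking Berge's maximum theorem.

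\textbf{The case $b=0$.} The constraint is vacuous, so Case~1 applies at every step. By the convexity part of Theorem~\ref{thm:mutual_info_properties}, each row of the optimal $p(x_{-i}|x_i)$ can be taken to be a unit vector in $\mathcal{E}$. The problem then reduces to a finite maximum, over $i$ and over deterministic maps $x_i \mapsto x_{-i}$, of classical channel capacities $\max_{p(x_i)} I(p(x_i),p(y|x_i))$. Each such capacity is a concave program solved globally by Blahut--Arimoto, which is Algorithm~\ref{alg:marginal_optimization} with $s\equiv0$. Because the exhaustive search over $\mathcal{E}^{n_i}$ covers all deterministic conditionals, the global optimum is reached.

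\textbf{Rates.} When the entropy constraint is inactive at the limit, the iteration locally coincides with Blahut--Arimoto. I would then import its known geometric convergence near a capacity-achieving input (strict concavity of $I$ on the support). In the general case, I would use the standard $O(1/k)$ bound for block ascent with sufficient increase: an Armijo-type inequality gives $\mathcal{L}^{(k+1)}-\mathcal{L}^{(k)} \ge c\,\|\text{gap}_k\|^2$, and a concavity gap bound in the marginal block, summed telescopically, yields $\mathcal{L}^\infty-\mathcal{L}^{(k)}=O(1/k)$. I expect the rate claim to hold only in this local, assumption-laden sense.
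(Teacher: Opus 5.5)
Monotonicity is fine. So is your Berge-type contradiction for the marginal block; it is a more careful version of the paper's Steps 1--3. The $b=0$ argument, however, does not describe what Algorithm~\ref{alg:max_leakage} does. You correctly reduce the optimal value to $\max_i\max_{c\in\mathcal{E}^{n_i}}C(W_c)$, where $W_c(y|x_i)=q(y|x_i,c(x_i))$. The algorithm never takes this outer maximum of capacities. Its else-branch maximizes $I(m,c)$ over $c$ at the \emph{current} marginal $m$, and Algorithm~\ref{alg:marginal_optimization} then returns the capacity-achieving input of that single $W_c$. This is alternating best response, whose fixed points are only partial optima.

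Here is a counterexample. Take $n=2$, $x_i\in\{0,1\}$, $x_{-i}\in\{u,v\}$, binary $Y$, with $q(1|0,u)=0$, $q(1|1,u)=0.1$, $q(1|0,v)=0.12$, $q(1|1,v)=0$.
\begin{itemize}
\item The conditional $c_A\equiv u$ gives a Z-channel of capacity $\approx 0.0380$ nats, attained at $m_A=(0.627,0.373)$.
\item The conditional $c_B\equiv v$ has capacity $\approx 0.0459$.
\item Yet $I(m_A,c_B)\approx 0.0369$, and the two mixed conditionals give $\approx 0.0005$ and $0$.
\end{itemize}
So $(m_A,c_A)$ is a fixed point of both block updates. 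At $m_A\otimes\delta_u$ the other record $X_{-i}$ is deterministic. The multiplicative Blahut--Arimoto update (here $s\equiv 0$) keeps its marginal at $\delta_u$, so its candidate has zero leakage. Algorithm~\ref{alg:max_leakage} started at this feasible point therefore returns $0.0380$, not $0.0459$, and item~2 fails as stated. The paper's Step~4 makes exactly the same leap (``block coordinate ascent on convex-concave subproblems, converging to the unique global optimum''), so following it will not close the gap. What survives is that an optimum is attained at deterministic conditionals and can be found by running Blahut--Arimoto for every $i$ and every $c\in\mathcal{E}^{n_i}$.

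Item~3 has the following problems.
\begin{itemize}
\item Your $O(1/k)$ argument needs $\mathcal{L}^\infty-\mathcal{L}^{(k)}\le R\,\|\mathrm{gap}_k\|$. That holds for jointly concave objectives, but $I$ is convex in the conditional block. Marginal-block concavity bounds only the marginal-block gap, not the distance to $\mathcal{L}^\infty$.
\item The sufficient-increase inequality is also unavailable. It needs a Lipschitz gradient, which fails near the boundary (the $\log p(y)$ terms, and the nonsmooth $\max_i$). Moreover, Algorithm~\ref{alg:conditional_optimization_gradient_entropy_manifold} tests only $I_{\text{new}}\ge I$.
\item At best, telescoping gives $\min_{t\le k}\|\mathrm{gap}_t\|^2=O(1/k)$. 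That is the paper's weaker stationarity-measure statement, not a rate for $\mathcal{L}^{(k)}$.
\item Your linear-rate step conflates inner Blahut--Arimoto iterations with outer ones. Under Assumption~\ref{ass:exact-subproblems} each marginal subproblem is solved exactly, so BA's inner rate says nothing about the outer sequence. BA's geometric rate also needs extra hypotheses on the capacity-achieving input.
\end{itemize}

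Finally, ``the same argument'' does not transfer to the conditional block. Assumption~\ref{ass:exact-subproblems}(2) only guarantees that Algorithm~\ref{alg:conditional_optimization_gradient_entropy_manifold} returns \emph{some} stationary point. An improvement available at $p^*$ therefore need not be found, and your contradiction breaks. The workable route is different: each accepted iterate's conditional is already that algorithm's output for its own marginal, and stationarity passes to limits under Assumption~\ref{ass:regularity}. That yields stationarity, not the optimality demanded by Definition~\ref{def:block-stationary}; the mismatch is already present in the paper's statement.
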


\begin{proof}[Proof Sketch]
	The proof proceeds in five steps:
	
	\begin{enumerate}
		\item \textbf{Feasibility and Compactness:} The feasible set $\mathcal{F} = \{p(x) \in \Delta(\mathcal{X}) : H(X) \geq b\}$ is compact. Algorithm~\ref{alg:max_leakage} produces iterates in $\mathcal{F}$, ensuring existence of limit points.
		
		\item \textbf{Block-wise Optimality:} At each iteration, Algorithm~\ref{alg:max_leakage} alternates between solving:
		\begin{itemize}
			\item Marginal subproblem: concave maximization over $p(x_i)$
			\item Conditional subproblem: non-convex maximization over $p(x_{-i}|x_i)$
		\end{itemize}
		Under Assumption~\ref{ass:exact-subproblems}, each subproblem is solved optimally, guaranteeing block-wise optimality at convergence.
		
		\item \textbf{Stationary Point Characterization:} Using the Lagrangian formulation and KKT conditions for each subproblem, we show that limit points satisfy the stationarity conditions for the joint optimization problem.
		
		\item \textbf{Special Case $b=0$:} When $b=0$ the entropy constraint is always satisfied. Problem~1 then reduces to an unconstrained maximization. For a fixed record $i$, the mutual information is convex in the conditionals and concave in the marginal. The alternating optimization becomes block coordinate ascent on convex-concave subproblems, converging to the unique global optimum with deterministic conditionals.
		
		\item \textbf{Rate Analysis:} The $O(1/k)$ rate follows from the alternating maximization framework applied to a non-convex problem with Lipschitz-continuous gradients. Specifically, let $U(p)=\max_i I(X_i;Y)$ and assume $\nabla U$ is $L$-Lipschitz. Since each iteration of Algorithm~\ref{alg:max_leakage} performs either a gradient step or exact maximization on a block, standard analysis yields $\min_{1\leq t\leq k}\|\nabla U(p^{(t)})\|^2 = O(1/k)$. When the entropy constraint is inactive (particularly when $b=0$), the problem exhibits local strong convexity-concavity structure, leading to linear convergence of the form $U(p^*)-U(p^{(k)}) \leq C\gamma^k$ for some $\gamma\in(0,1)$ and constant $C>0$.
	\end{enumerate}
\end{proof}

\subsubsection{Convergence of Algorithm~\ref{alg:conditional_optimization_gradient_entropy_manifold} for Conditional Optimization}

\begin{theorem}[Convergence of Algorithm~\ref{alg:conditional_optimization_gradient_entropy_manifold}]
	\label{thm:conv-conditional-update}
	For fixed marginal distribution $p(x_i)$ and privacy mechanism $q(y|x)$, consider the conditional optimization subproblem:
	\[
	\max_{p(x_{-i}|x_i) \in \mathcal{F}_c} I(X_i;Y),
	\]
	where $\mathcal{F}_c = \{p(x_{-i}|x_i) : H(X) \geq b, \; p(x_{-i}|x_i) \in \Delta(\mathcal{X}_{-i})^{|\mathcal{X}_i|}\}$. Assume the following conditions hold:
	\begin{enumerate}
		\item The objective function $I(X_i;Y)$ has an $L$-Lipschitz continuous gradient on $\mathcal{F}_c$, i.e., 
		\[
		\|\nabla I(p) - \nabla I(q)\| \leq L \|p - q\|, \quad \forall p,q \in \mathcal{F}_c.
		\]
		\item For each row $k$ with $p(x_{ik}) > 0$, the row-wise feasible set 
		\[
		\mathcal{F}_c^k = \left\{p(x_{-i}|x_{ik}) \in \Delta(\mathcal{X}_{-i}) : H(p(x_{-i}|x_{ik})) \geq c_i^k \right\}
		\]
		is convex and compact, where $c_i^k$ is defined in \eqref{eq:equation-5}.
		\item The backtracking line search in Algorithm~\ref{alg:conditional_optimization_gradient_entropy_manifold} uses the Armijo condition with parameters $c \in (0,1)$ and $\beta \in (0,1)$.
	\end{enumerate}
	Then Algorithm~\ref{alg:conditional_optimization_gradient_entropy_manifold} generates a sequence $\{p^{(t)}(x_{-i}|x_i)\}_{t=0}^\infty$ such that:
	\begin{enumerate}
		\item The objective sequence $\{I^{(t)}(X_i;Y)\}_{t=0}^\infty$ is non-decreasing and converges to a finite value $I^*$.
		\item Every limit point of $\{p^{(t)}(x_{-i}|x_i)\}$ is a stationary point of the conditional subproblem, satisfying the first-order necessary conditions for optimality.
		\item If, in addition, $c_i^k \leq 0$ for all rows $k$ with $p(x_{ik}) > 0$ (i.e., all row-wise constraints are inactive), then the algorithm converges to a global maximum of the conditional subproblem.
	\end{enumerate}
\end{theorem}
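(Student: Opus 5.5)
The plan is to treat Algorithm~\ref{alg:conditional_optimization_gradient_entropy_manifold} as a block-coordinate (row-wise) projected ascent method and to prove the three claims in order: monotonicity, stationarity of limit points, and global optimality in the inactive case.

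\textbf{Step 1 (monotonicity and convergence of values).} Within an outer iteration the rows $k$ are updated one at a time. Each row update either
\begin{itemize}
\item replaces the row by the best unit vector in $\mathcal E$, which can only increase $I(X_i;Y)$ because the current row is itself a convex combination of unit vectors and $I$ is convex in that row (Theorem~\ref{thm:mutual_info_properties}); or
\item is a projected gradient step that is accepted only if $I_{\text{new}}\ge I$ (otherwise the row is left unchanged).
\end{itemize}
Hence $I^{(t+1)}\ge I^{(t)}$. Since $I(X_i;Y)\le \log|\mathcal X_i|$, the monotone sequence is bounded and converges to some $I^*$. Feasibility is preserved because the entropy projection returns rows with $H=c_i^k$, so the row-wise constraint, and hence $H(X)\ge b$ via \eqref{Eq:entropy_constraint_decomposed}, continues to hold.

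\textbf{Step 2 (stationarity of limit points).} $\mathcal F_c$ is a product of the compact sets $\mathcal F_c^k$ (hypothesis 2), so limit points exist. I would use the standard sufficient-ascent argument for projected gradient with Armijo backtracking.

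First, the Lipschitz gradient (hypothesis 1) gives a uniform lower bound $\eta\ge \min\{\eta_0,\beta(1-c)/L\}$ on the accepted step. Second, each accepted step increases $I$ by at least a constant multiple of $\|G_\eta^k(p)\|^2$, where
$$G_\eta^k(p)=\eta^{-1}\bigl(P_{\mathcal F_c^k}(p_k+\eta\nabla_k I)-p_k\bigr)$$
is the projected gradient mapping. Summing over iterations and using boundedness of $I$ yields $\sum_t\|G^k(p^{(t)})\|^2<\infty$, so $G^k(p^{(t)})\to 0$ for every row. By continuity of the projection and of $\nabla I$, any limit point $\bar p$ satisfies $G^k(\bar p)=0$ for all $k$. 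This is exactly the first-order condition
$$\langle \nabla_k I(\bar p),\, r-\bar p_k\rangle\le 0 \quad \text{for all } r\in\mathcal F_c^k,$$
i.e., KKT for the row-separable constraint. Because the row constraints are decoupled once the $c_i^k$ are fixed, row-wise stationarity is joint stationarity.

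\textbf{Main obstacle.} The algorithm composes a simplex projection with an entropy projection (softmax tempering) onto the level set $H=c_i^k$. That composition is neither the Euclidean projection onto the convex set $\mathcal F_c^k$ nor onto a convex set at all, since the level set is not convex. I would first try to show that the composite map is a retraction onto $\partial\mathcal F_c^k$ that agrees with the Euclidean projection to first order. Combined with Theorem~\ref{thm:entropy_boundary}, which allows restricting to the boundary, this would give the sufficient-increase inequality up to $O(\eta^2)$ terms absorbed by the Lipschitz constant. If that fails, I would reinterpret stationarity as stationarity on the manifold $\partial\mathcal F_c^k$, meaning the tangential component of the gradient vanishes, and carry out the same summability argument with the Riemannian projected gradient.

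\textbf{Step 3 (global optimum when $c_i^k\le 0$).} If every positive-marginal row has $c_i^k\le 0$, the constraints are inactive and the problem is maximization of a convex function over a product of simplices. By Theorem~\ref{thm:mutual_info_properties} (row-wise convexity), the maximum over each row is attained at a vertex, i.e., an element of $\mathcal E$; this is the Case~1 argument. The algorithm performs exact exhaustive maximization over $\mathcal E$ row by row, so it is a finite block-coordinate ascent over the finite set $\mathcal E^{n_i}$ with monotone values, and it terminates.

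To upgrade the result to a global maximum, I would need to argue that the value structure here prevents a block-wise-optimal vertex from being a strict non-global local maximum. I expect this is where the claim needs care and may require an extra assumption.
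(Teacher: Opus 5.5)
Your outline follows the paper's own proof: monotone bounded values, a row-wise variational inequality at limit points, and vertices plus exhaustive row search when $c_i^k\le 0$. Your Step~2 is a more careful version of the paper's bare assertion $p^*=\mathcal P_{\mathcal F_c^k}(p^*+\eta\nabla_k I(p^*))$.

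\textbf{The gap in Step~2 (shared with the paper).} You claim that $\mathcal F_c$ is a product of the sets $\mathcal F_c^k$, so row-wise stationarity is joint stationarity. The thresholds are not fixed: $c_i^k=\bigl(b-H(X_i)-\sum_{\ell\neq k}p(x_{i\ell})H(X_{-i}|x_{i\ell})\bigr)/p(x_{ik})$ depends on the other rows. So $\mathcal F_c$ is the single coupled set $\{g\ge 0\}$, with $g\triangleq\sum_{\ell}p(x_{i\ell})H(X_{-i}|x_{i\ell})-b+H(X_i)$.

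Each row subproblem is ``maximize $I$ over row $k$ subject to $g\ge 0$''. Your limit argument therefore yields row-wise KKT conditions $\nabla_k I+\lambda_k\nabla_k g+\nu_k\mathbf{1}=0$ (on the support) with a row-dependent $\lambda_k\ge 0$. Stationarity for $\max_{\mathcal F_c}I$ requires one common $\lambda$. If $\lambda_1>\lambda_2$ at a point with $g=0$, shifting weighted entropy $\delta$ from row~1 to row~2 keeps $g=0$ to first order and raises $I$ by about $(\lambda_1-\lambda_2)\delta$. Yet that point is fixed by every single-row update.

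The algorithm's level-set projection makes this even more rigid. Once $g=0$, every later update of row $k$ has $c_i^k=H(X_{-i}|x_{ik})$, its current entropy, so row entropies are frozen and never redistributed. A row-wise argument thus delivers only row-wise stationarity, which for this algorithm is genuinely weaker than Part~2.

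\textbf{The projection.} Your plan~A also fails. The softmax/tempering map does not move along the Euclidean normal of the level set. For $w\propto v^{\beta}$, near $\beta=1$ it moves along $w\odot(\log w+H(w))$, not $\log w-\tfrac{1}{n_{-i}}\sum_j\log w_j$. The composite step is therefore an oblique projection of the gradient onto the tangent space. When the tangential gradient is small but nonzero, this can have negative inner product with $\nabla_k I$, so backtracking can reject every step at a non-stationary point. Plan~B gives only tangential stationarity on $\partial\mathcal F_c^k$, which loses the sign condition $\lambda_k\ge 0$.

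\textbf{Step~3.} Your suspicion is right, and more care will not fix it: Part~3 fails in the very setting it describes. The paper closes it by calling the objective ``separable across rows'', but the rows are coupled through $p(y)$. Row-by-row exhaustive search over $\mathcal E$ is coordinate ascent on the finite set $\mathcal E^{n_i}$, and it can stop at a coordinate-wise maximum.

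For a counterexample, take $n_i=n_{-i}=2$, $p(x_i)$ uniform, binary $Y$, and the entropy constraint inactive (e.g.\ $b=0$, so the problem is $\max I$ over $\Delta(\mathcal X_{-i})^{2}$). Over the two values of $x_{-i}$, set $q(Y=1|x_{i1},\cdot)=(0,1)$ and $q(Y=1|x_{i2},\cdot)=(0.7,0)$. Then:
\begin{itemize}
\item rows $(e_1,e_1)$ give $I=H_b(0.35)-\tfrac12 H_b(0.7)\approx 0.342$ nats;
\item changing only row~1 to $e_2$ gives $\approx 0.117$;
\item changing only row~2 to $e_2$ gives $0$;
\item rows $(e_2,e_2)$ attain $\ln 2$.
\end{itemize}
So the algorithm stops at $(e_1,e_1)$, for instance from any initialization whose second row is $e_1$, while the global maximum is $\ln 2$. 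Part~3 needs joint enumeration of $\mathcal E^{n_i}$ or an extra structural assumption on $q$.
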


\begin{proof}
	We prove the three claims sequentially.
	
	\textbf{Part 1: Monotonicity and convergence of the objective.}
	
	Algorithm~\ref{alg:conditional_optimization_gradient_entropy_manifold} implements a block coordinate ascent method, where each block corresponds to a row $p(x_{-i}|x_{ik})$. At each iteration $t$, the algorithm performs the following steps for each row $k$:
	\begin{enumerate}
		\item If $p(x_{ik}) = 0$: the row is set to a unit vector maximizing $I(X_i;Y)$, which cannot decrease the objective.
		\item If $c_i^k \leq 0$: the row is set to the best unit vector, which again cannot decrease the objective.
		\item If $c_i^k > 0$: a projected gradient ascent step with backtracking line search is performed. The Armijo condition ensures that the step is accepted only if
		\[
		I(p^{(t+1)}_{k}) \geq I(p^{(t)}_{k}) + c \eta \langle \nabla_k I, d_k \rangle,
		\]
		where $d_k$ is the update direction. Since the gradient direction is chosen and the projection is onto a convex set, this step guarantees non-decrease in the objective.
	\end{enumerate}
	
	Since each row update is non-decreasing and the number of rows is finite, the overall objective $I^{(t)}(X_i;Y)$ is non-decreasing. Furthermore, $I(X_i;Y)$ is bounded above by $\log |\mathcal{X}_i|$ (by the data processing inequality). Hence, the sequence $\{I^{(t)}(X_i;Y)\}$ converges to some finite limit $I^*$.
	
	\textbf{Part 2: Convergence to a stationary point.}
	
	Let $p^*$ be a limit point of the sequence $\{p^{(t)}(x_{-i}|x_i)\}$, and let $\{t_j\}$ be a subsequence such that $p^{(t_j)} \to p^*$. We show that $p^*$ satisfies the first-order optimality conditions for the constrained optimization problem.
	
	For rows with $p(x_{ik}) = 0$, the condition is trivial. For rows with $c_i^k \leq 0$, the update selects a unit vector $e$ that maximizes the directional derivative. At optimality, we must have
	\[
	\langle \nabla_k I(p^*), e - p^*(x_{-i}|x_{ik}) \rangle \leq 0 \quad \text{for all unit vectors } e \in \mathcal{E}.
	\]
	This is equivalent to the condition that the gradient (where defined) is orthogonal to the simplex, or that $p^*(x_{-i}|x_{ik})$ is a unit vector with maximal gradient component.
	
	For rows with $c_i^k > 0$, the update is a projected gradient step. At convergence, the iterates satisfy the fixed-point condition
	\[
	p^* = \mathcal{P}_{\mathcal{F}_c^k}\big(p^* + \eta \nabla_k I(p^*)\big),
	\]
	where $\mathcal{P}_{\mathcal{F}_c^k}$ denotes projection onto the convex set $\mathcal{F}_c^k$. This fixed-point condition is equivalent to the variational inequality
	\[
	\langle \nabla_k I(p^*), q - p^*(x_{-i}|x_{ik}) \rangle \leq 0 \quad \text{for all } q \in \mathcal{F}_c^k,
	\]
	which is precisely the first-order necessary condition for optimality in a convex feasible set.
	
	Since each row-wise condition holds, $p^*$ is a stationary point of the overall problem.
	
	\textbf{Part 3: Global optimality in the inactive constraint case.}
	
	When $c_i^k \leq 0$ for all rows with $p(x_{ik}) > 0$, the entropy constraint is inactive. The problem reduces to
	\[
	\max_{p(x_{-i}|x_i) \in \Delta(\mathcal{X}_{-i})^{|\mathcal{X}_i|}} I(X_i;Y).
	\]
	For fixed $p(x_i)$ and $q(y|x)$, Theorem~\ref{thm:mutual_info_properties} states that $I(X_i;Y)$ is convex in each row $p(x_{-i}|x_{ik})$. The maximum of a convex function over a convex polytope (the product of simplices) is attained at an extreme point of the feasible set. The extreme points of $\Delta(\mathcal{X}_{-i})^{|\mathcal{X}_i|}$ are precisely the product of unit vectors, i.e., deterministic conditional distributions.
	
	Algorithm~\ref{alg:conditional_optimization_gradient_entropy_manifold} enumerates all unit vectors for each row and selects the one maximizing $I(X_i;Y)$. Since this enumeration is exhaustive and the objective is separable across rows (when other rows are fixed), the algorithm converges to a global maximum after a finite number of iterations.
	
	This completes the proof.
\end{proof}

\subsubsection{Convergence of Algorithm~\ref{alg:primal_tradeoff} for the Primal Leakage-Distortion Tradeoff}

\begin{theorem}[Convergence of Algorithm~\ref{alg:primal_tradeoff} for Primal Leakage-Distortion Tradeoff]
	\label{thm:conv-primal}
	Under the following assumptions:
	\begin{enumerate}
		\item \textbf{Convexity Structure:} For fixed prior distribution $p(x)$, the mutual information $I(X_i;Y)$ is convex in the mechanism $q(y|x)$ (Theorem~\ref{thm:mutual_info_properties}).
		\item \textbf{Compact Feasible Sets:} The set of feasible mechanisms $\mathcal{Q} = \{q(y|x) : \mathbb{E}_{p_0}[d(f(X),Y)] \leq D, q(y|x) \geq 0, \sum_y q(y|x)=1\}$ is compact.
		\item \textbf{Exact Subproblem Solutions:} Algorithm~\ref{alg:max_leakage} computes the worst-case prior to $\epsilon$-optimality, and Algorithm~\ref{alg:mechanism_update_subproblem} solves the mechanism subproblem to $\epsilon$-optimality.
		\item \textbf{Regularity Conditions:} The distortion constraint satisfies the Linear Independence Constraint Qualification (LICQ) at all feasible points.
	\end{enumerate}
	Then Algorithm~\ref{alg:primal_tradeoff} generates a sequence $\{(p^{(k)}, q^{(k)})\}$ such that:
	\begin{enumerate}
		\item \textbf{Monotonic Improvement:} The worst-case leakage $\epsilon^{(k)} = \max_{i} \max_{p \in \Delta(\mathcal{X})_b} I(X_i;Y)$ is non-increasing.
		\item \textbf{Constraint Satisfaction:} The distortion constraint $\mathbb{E}_{p_0}[d(f(X),Y)] \leq D$ is asymptotically satisfied as $k \to \infty$.
		\item \textbf{Stationary Point Convergence:} Every limit point $(p^*, q^*)$ of the sequence is a stationary point of the primal leakage-distortion problem, satisfying the KKT conditions for a local optimum.
		\item \textbf{Convergence Rate:} The algorithm exhibits $O(1/k)$ convergence in objective value when the penalty parameter $\lambda$ is updated appropriately.
	\end{enumerate}
\end{theorem}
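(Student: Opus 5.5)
The proof follows the pattern of the proof of Theorem~\ref{thm:conv-alg1}. I would treat Algorithm~\ref{alg:primal_tradeoff} as inexact block-coordinate optimization of the min--max value function
\[
\Phi(q) \triangleq \max_{i\in[n]}\max_{p\in\Delta(\mathcal X)_b} I(X_i;Y),
\]
taken over the compact set $\mathcal Q$ (assumption 2). The four claims are proved in order: monotonicity, then feasibility, then stationarity via KKT, and finally the rate.

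\emph{Monotonicity.} At iteration $k$, the adversary step (Algorithm~\ref{alg:max_leakage}, $\epsilon$-optimal by assumption 3) returns $p^{(k+1)}$ with
\[
\Phi(q^{(k)}) \le \max_i I_{p^{(k+1)}}(X_i;Y) + \epsilon .
\]
The mechanism step then solves the convex problem $\min_{q\in\mathcal Q}\max_i I_{p^{(k+1)}}(X_i;Y)$. This problem is convex by assumption 1, since a pointwise maximum of convex functions is convex and the constraints are linear. Because $q^{(k)}$ is feasible for it, its value at $q^{(k+1)}$ is at most its value at $q^{(k)}$. Chaining these inequalities gives $\epsilon^{(k+1)}\le\epsilon^{(k)}$ up to $O(\epsilon)$ slack.

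I expect the honest statement to be monotone up to the tolerance. The reason is that the new mechanism $q^{(k+1)}$ is only evaluated against $p^{(k+1)}$, while $\Phi(q^{(k+1)})$ maximizes over all priors. To close this gap I would compare $\Phi(q^{(k+1)})$ with $\Phi(q^{(k)})$ directly. One option is to keep $q^{(k)}$ whenever the candidate is worse, as the acceptance rule of Algorithm~\ref{alg:max_leakage} does. The sequence is bounded below by $0$, so it converges.

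\emph{Constraint satisfaction.} The penalty in Algorithm~\ref{alg:mechanism_update_subproblem} multiplies $\lambda$ by $\mu$ whenever $E>D+\delta$. Standard exterior-penalty arguments then show that the violation at $q_{\text{best}}$ tends to zero as $\lambda\to\infty$. The softplus-squared penalty is smooth and convex in $E$, and $E$ is linear in $q$, so the argument goes through.

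\emph{Stationarity.} I would take a convergent subsequence $(p^{(k_j)},q^{(k_j)})\to(p^*,q^*)$, which exists by compactness of $\Delta(\mathcal X)_b\times\mathcal Q$. Continuity of mutual information on the simplex lets me pass to the limit in two places. First, the KKT conditions of the convex mechanism subproblem: LICQ (assumption 4) gives bounded multipliers, so limits of multipliers exist. Second, the block-wise stationarity of $p^*$ from Theorem~\ref{thm:conv-alg1}. Combining the two yields the joint KKT system of the saddle formulation at $(p^*,q^*)$.

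\emph{Main obstacle.} The outer $\max_p$ makes $\Phi$ nonsmooth. Moreover, $p^*$ is only block-stationary, not a global maximizer. So "stationary" has to be interpreted via a Danskin-type subdifferential restricted to the computed active priors. This is where I would state precisely what is being claimed.

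\emph{Rate.} For the $O(1/k)$ claim I would assume $\nabla_q I$ is Lipschitz, which holds on a subset of $\mathcal Q$ bounded away from zero outputs. Inner exponentiated-gradient descent with Armijo backtracking then gives the usual sufficient-decrease inequality. Telescoping this over $k$ iterations gives $\min_{t\le k}\|G(q^{(t)})\|^2=O(1/k)$ for the gradient mapping $G$. On the convex subproblem, the standard mirror-descent bound yields $O(1/k)$ in objective value, with $\lambda$ held fixed once the feasibility phase ends.
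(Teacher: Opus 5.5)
Your monotonicity step fails, and the defect is not an $O(\epsilon)$ slack. Write $F(p,q)=\max_i I_{p,q}(X_i;Y)$. Your chain only yields $F(p^{(k+1)},q^{(k+1)})\le F(p^{(k+1)},q^{(k)})\le \Phi(q^{(k)})$. To get $\Phi(q^{(k+1)})\le\Phi(q^{(k)})$ you would need $p^{(k+1)}$ to be nearly worst-case against the \emph{new} mechanism, and nothing guarantees that. You spot this mismatch, but it is a failure of the method itself, not a bookkeeping issue: alternating exact best responses can raise $\Phi$ and cycle even in the most benign convex--concave case. Take $F(p,q)=(2p-1)(2q-1)$ on $[0,1]^2$, so $\Phi(q)=|2q-1|$. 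From $q^{(0)}=0.6$ the iteration gives $p^{(1)}=1$, $q^{(1)}=0$, $p^{(2)}=0$, $q^{(2)}=1$, and so on. Thus $\Phi$ jumps from $0.2$ to $1$ and stays there, while the optimum is $0$.

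The privacy instance has a structural reason for the same behaviour. A row $q(\cdot|x)$ with $p^{(k+1)}(x)=0$ does not enter $F(p^{(k+1)},\cdot)$ but does consume distortion budget. So a best response may (and, when the distortion multiplier is positive, must) make that row noiseless. Algorithm~\ref{alg:max_leakage} often returns sparse priors (deterministic conditionals in Case~1), and the next adversary can concentrate on those noiseless rows. Your repair, keeping $q^{(k)}$ when the candidate is worse, changes Algorithm~\ref{alg:primal_tradeoff}. With a deterministic adversary oracle it regenerates the same $p^{(k+1)}$ and the same rejected candidate forever. In the toy example the iterates freeze at $q=0.6$, which is not stationary for the convex $\Phi$. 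The repair buys claim~1 at the price of claim~3.

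The stationarity step has the matching gap. Along your subsequence, $q^{(k_j)}$ best-responds to $p^{(k_j)}$, but $p^{(k_j)}$ responds to $q^{(k_j-1)}$, whose limit need not be $q^*$; in the toy example the two limits are $1$ and $0$. You would need $\|q^{(k)}-q^{(k-1)}\|\to 0$, which normally comes from a sufficient-decrease inequality you do not have. Without it, the two KKT systems live at different points. Even when they can be combined, mutual best responses form a saddle point, and its existence is not guaranteed because $F$ is convex, not concave, in the conditionals. Moreover, if $p^*$ is only block-stationary, a subgradient of $F(p^*,\cdot)$ at $q^*$ need not lie in $\partial\Phi(q^*)$. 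So your system is not a stationarity condition for $\min_{q\in\mathcal Q}\Phi(q)$.

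Two smaller steps also fail as written.
\begin{itemize}
\item $\text{softplus}(E-D)^2$ is positive and strictly increasing on all of $\mathbb R$, so it is not an exterior penalty. As $\lambda\to\infty$ the penalized minimizers tend to the noiseless mechanism. What is true is that a penalized minimizer $q_\lambda$ is optimal for the budget $E(q_\lambda)$. Everything therefore rests on the $\lambda$-update landing $E$ in $[D-\delta,D+\delta]$, which only gives $E\le D+\delta$.
\item Your $O(1/k)$ bound is for the inner exponentiated-gradient loop, not for the outer index $k$ of the theorem.
\end{itemize}

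The paper's sketch rests on the same two inequalities plus an appeal to GAM. GAM covers alternating minimization of a single objective, not this descent--ascent alternation, so the paper does not supply the missing step either.

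A provable route exists, but it changes the algorithm. $\Phi$ is convex in $q$, being a supremum of functions convex in $q$ by Theorem~\ref{thm:mutual_info_properties}. If Algorithm~\ref{alg:max_leakage} is globally $\epsilon$-optimal, then any subgradient of $F(p^{(k+1)},\cdot)$ at $q^{(k)}$ is an $\epsilon$-subgradient of $\Phi$ there. The full best response should then be replaced by either an $\epsilon$-subgradient (mirror) step, or a cutting-plane step that minimizes $\max_{j\le k+1}F(p^{(j)},q)$ over all priors computed so far.
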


\begin{proof}[Proof Sketch]
	The proof follows from the structure of alternating optimization applied to the min-max formulation:
	\begin{enumerate}
		\item \textbf{Decomposition:} The joint problem decomposes into two tractable subproblems:
		\begin{itemize}
			\item \textit{Adversarial prior update:} For fixed $q^{(k)}$, compute $p^{(k+1)} = \arg\max_{p \in \Delta(\mathcal{X})_b} \max_i I(X_i;Y)$ using Algorithm~\ref{alg:max_leakage} (Theorem~\ref{thm:conv-alg1}
			 ensures convergence).
			\item \textit{Mechanism update:} For fixed $p^{(k+1)}$, solve $\min_{q \in \mathcal{Q}} \max_i I(X_i;Y)$ using Algorithm~\ref{alg:mechanism_update_subproblem}.
		\end{itemize}
		
		\item \textbf{Mechanism Subproblem Optimality:} For fixed $p$, the mechanism subproblem is convex by Assumption A1. The adaptive penalty method (Algorithm~\ref{alg:mechanism_update_subproblem}) with exponentiated gradient descent (Algorithm~\ref{alg:adaptive_gradient_update_smooth}) converges to the global optimum due to:
		\begin{itemize}
			\item Convexity of the penalized objective
			\item Backtracking line search satisfying Armijo condition
			\item Compactness ensuring bounded iterates
		\end{itemize}
		
		\item \textbf{Penalty Method Convergence:} The quadratic penalty function $P_{\text{distortion}}(q) = (\text{softplus}(\mathbb{E}_{p_0}[d] - D))^2$ is smooth and exact. As $\lambda \to \infty$, the penalized solution converges to the constrained optimum.
		
		\item \textbf{Alternating Optimization Convergence:} The sequence $\{(p^{(k)}, q^{(k)})\}$ satisfies:
		\begin{itemize}
			\item $F(p^{(k+1)}, q^{(k)}) \geq F(p^{(k)}, q^{(k)})$ (prior update increases leakage)
			\item $F(p^{(k+1)}, q^{(k+1)}) \leq F(p^{(k+1)}, q^{(k)})$ (mechanism update decreases leakage)
		\end{itemize}
		where $F(p,q) = \max_i I_{p,q}(X_i;Y)$. This creates a descent-ascent pattern that converges to a stationary point by the Generalized Alternating Minimization framework \cite{JMLR:v6:gunawardana05a}.
		
		\item \textbf{Local Optimality:} At convergence, $(p^*, q^*)$ satisfies the first-order necessary conditions for the joint problem:
		\begin{align*}
			\nabla_p F(p^*, q^*) &= 0 \quad (\text{prior optimality}) \\
			\nabla_q \mathcal{L}(p^*, q^*, \lambda^*) &= 0 \quad (\text{mechanism KKT})
		\end{align*}
		where $\mathcal{L}$ is the Lagrangian with multiplier $\lambda^*$ for the distortion constraint.
	\end{enumerate}
\end{proof}

\subsubsection{Convergence of Algorithm~\ref{alg:dual_tradeoff} for the Dual Minimal-Distortion Formulation}

\begin{theorem}[Convergence of Algorithm~\ref{alg:dual_tradeoff} for Dual Minimal-Distortion Formulation]
	\label{thm:conv-dual}
	Under the following assumptions:
	\begin{enumerate}
		\item \textbf{Convexity Structure:} For fixed prior distribution $p(x)$, the mutual information $I(X_i;Y)$ is convex in $q(y|x)$, and the distortion $\mathbb{E}_{p_0}[d(f(X),Y)]$ is linear in $q(y|x)$.
		\item \textbf{Feasibility:} There exists at least one mechanism $q(y|x)$ satisfying $\max_i \max_{p \in \Delta(\mathcal{X})_b} I(X_i;Y) \leq L$ (Slater's condition~\cite{citeulike:163662}).
		\item \textbf{Exact Subproblem Solutions:} As in Theorem~\ref{thm:conv-primal}.
		\item \textbf{Boundedness:} The distortion function $d(\cdot,\cdot)$ is bounded above, ensuring the objective is Lipschitz continuous.
	\end{enumerate}
	Then Algorithm~\ref{alg:dual_tradeoff} generates a sequence $\{(p^{(k)}, q^{(k)})\}$ such that:
	\begin{enumerate}
		\item \textbf{Monotonic Improvement:} The expected distortion $D^{(k)} = \mathbb{E}_{p_0}[d(f(X),Y)]$ is non-increasing.
		\item \textbf{Constraint Satisfaction:} The leakage constraint $\max_i I(X_i;Y) \leq L$ is asymptotically satisfied.
		\item \textbf{Stationary Point Convergence:} Every limit point $(p^*, q^*)$ is a KKT point of the dual problem, satisfying:
		\begin{align*}
			q^* &\in \arg\min_{q} \mathbb{E}_{p_0}[d(f(X),Y)] + \lambda^* \cdot (\max_i I_{p^*,q}(X_i;Y) - L) \\
			p^* &\in \arg\max_{p \in \Delta(\mathcal{X})_b} \max_i I_{p,q^*}(X_i;Y) \\
			\lambda^* &\geq 0, \quad \lambda^* \cdot (\max_i I_{p^*,q^*}(X_i;Y) - L) = 0
		\end{align*}
		\item \textbf{Global Convergence for Convex Case:} If the leakage constraint were convex in $q$ for all $p$ (which it is not due to the max over $p$), the algorithm would converge to the global optimum. In our setting, convergence is to a stationary point.
	\end{enumerate}
\end{theorem}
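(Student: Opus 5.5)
The proof will follow the alternating-optimization pattern already used for Theorem~\ref{thm:conv-primal}, now with distortion as the objective and leakage as the constraint. Throughout, write $G(p,q)=\max_i I_{p,q}(X_i;Y)$ and $\Phi(q)=\max_{p\in\Delta(\mathcal X)_b} G(p,q)$. Assumption~(4) gives that the distortion $\mathbb{E}_{p_0}[d]$ is linear and bounded in $q$. The mechanism set $\prod_x\Delta(\mathcal Y)$ is compact, and so is $\Delta(\mathcal X)_b$ (Theorem~\ref{thm:entropy_concavity} together with closedness of the entropy super-level set shown in Lemma~\ref{lemma:convex-representation-entropy-super-level-set}). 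Hence the iterates $(p^{(k)},q^{(k)})$ stay in a compact set and limit points exist.

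\textbf{Monotonicity (claim 1).} The adversary update does not change $q$, so it leaves $D^{(k)}$ unchanged. By assumption~(3), the mechanism update returns an $\epsilon$-optimal minimizer of the penalized objective $J_\lambda(q)=\mathbb{E}_{p_0}[d]+\lambda(\mathrm{softplus}(G(p^{(k+1)},q)-L))^2$. I will argue that, once the penalty level is frozen, $J_\lambda(q^{(k+1)})\le J_\lambda(q^{(k)})$. The reason is that Algorithm~\ref{alg:dual_mechanism_update} keeps $q_{\text{best}}$, and the warm start $q^{(k)}$ is itself a candidate. Combined with the penalty term being small once the leakage constraint is met, this gives non-increase of $D^{(k)}$ up to the tolerance. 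This is where the argument is weakest. Changing $\lambda$ or the worst-case $p$ alters $J_\lambda$, so I would state monotonicity for the eventual regime in which $\lambda$ has stabilized. This regime exists because $\lambda$ only grows while violation exceeds $\epsilon_{\text{constraint}}/2$.

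\textbf{Asymptotic feasibility (claim 2).} Suppose $\Phi(q^{(k)})>L+\epsilon_{\text{constraint}}/2$ infinitely often. Then $\lambda\to\infty$. By the Slater point of assumption~(2) and boundedness of $d$ (assumption~(4)), any minimizer of $J_\lambda$ has penalty term $O(d_{\max}/\lambda)\to0$, which forces $G(p,q)-L\to0^+$. Note that Algorithm~\ref{alg:max_leakage} evaluates $G$ at an $\epsilon$-worst-case $p$ (assumption~(3)), so $G(p^{(k+1)},q^{(k)})\approx\Phi(q^{(k)})$. Together these show $\limsup_k\Phi(q^{(k)})\le L$.

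\textbf{KKT at limit points (claim 3).} Take a convergent subsequence $(p^{(k_j)},q^{(k_j)},\lambda_{k_j})\to(p^*,q^*,\lambda^*)$. Then:
\begin{itemize}
\item The adversary condition $p^*\in\arg\max_{p\in\Delta(\mathcal X)_b} G(p,q^*)$ follows from $\epsilon$-optimality, continuity of $G$, and upper semicontinuity of the argmax correspondence (Berge's maximum theorem).
\item For the mechanism condition, use convexity of $G(p^*,\cdot)$ from Theorem~\ref{thm:mutual_info_properties}(3) and Slater. The convex subproblem $\min_q \mathbb{E}_{p_0}[d]$ subject to $G(p^*,q)\le L$ then has a multiplier. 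I identify it as the limit of $2\lambda_{k}\,\mathrm{softplus}(\cdot)\sigma(\cdot)$, which is bounded because of the Slater point. This yields the stationarity condition $q^*\in\arg\min_q\{\mathbb{E}_{p_0}[d]+\lambda^*(G(p^*,q)-L)\}$ by closedness of the subdifferential graph.
\item Complementary slackness comes from the penalty-update rule. If $G<L-\delta$ persists, $\lambda$ decays geometrically, so the multiplier tends to $0$.
\end{itemize}

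Claim~4 is conditional. If $\Phi$ were convex, the KKT conditions would be sufficient and any limit point would be global. Otherwise I will note only that $\Phi$, being a maximum of convex functions over a $q$-independent set, is in fact convex. A full argument for the global case would have to reconcile this with the stated caveat, so I will treat that item as a remark.

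The main obstacle is the multiplier step in claim~3. Making the limit of the smoothed penalty gradients a genuine bounded KKT multiplier under the adaptive rule requires the Slater bound and a careful separation of the $\epsilon$-inexactness. I would try to do this first by the standard exact-penalty argument for convex programs.
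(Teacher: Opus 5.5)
Your plan follows the paper's own sketch, with more detail: alternate a worst-case prior from Algorithm~\ref{alg:max_leakage} with a softplus-penalized mechanism update, take multipliers from Slater, and get complementary slackness from the $\lambda$ rule. The steps you flag as weak are exactly where it fails. The paper's sketch does not help there, because it only asserts that the penalty is exact and that slackness holds ``by construction''.

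\textbf{Claim~1 is false as stated.} Algorithm~\ref{alg:dual_tradeoff} accepts any $q^{(0)}$. Take $q^{(0)}(y|x)=\mathbf{1}\{y=f(x)\}$, which for the paper's $d$ and uniform $p_0$ is the unique zero-distortion mechanism, and suppose $\Phi(q^{(0)})>L$. A non-increasing $D^{(k)}\ge 0$ then forces $q^{(k)}=q^{(0)}$ for all $k$, which contradicts claim~2. Your proposed repair also fails, for four reasons:
(i) the rule divides $\lambda$ by $\mu$ whenever $I<L-\epsilon_{\text{constraint}}/2$, so no stabilized regime is guaranteed;
(ii) even with $\lambda$ frozen, $J_\lambda$ changes with $p^{(k+1)}$, and $J_\lambda(q^{(k+1)})\le J_\lambda(q^{(k)})$ only bounds $D^{(k+1)}-D^{(k)}$ by $\lambda$ times the drop in the penalty, which can be positive;
(iii) Algorithm~\ref{alg:dual_mechanism_update} starts from $J_{\text{best}}=\infty$ and adapts $\lambda$ internally, so the warm start is overwritten on the first pass, not compared;
(iv) the penalty is never negligible on the feasible set, since $G\ge 0$ gives $\text{softplus}(G-L)\ge\ln(1+e^{-L})>0$ with slope $\sigma(G-L)>0$.

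\textbf{The decisive missing idea is in claim~2.} The mechanism step penalizes $G(p^{(k+1)},\cdot)$ only for the one prior that was worst against $q^{(k)}$. Your estimate $G(p^{(k+1)},q^{(k)})\approx\Phi(q^{(k)})$ concerns the old mechanism, and nothing bounds $\Phi(q^{(k+1)})$. Best responses can cycle even with exact subproblem solutions. Take $n=2$ binary records, the parity query, uniform $p_0$, $b\in(0,\ln 2]$, and $L+\epsilon_{\text{constraint}}/2<\ln 2$. Suppose $p^{(k+1)}$ is ``$X_1$ uniform, $X_2\equiv 0$'', which is a worst case for the noiseless $q^{(0)}$. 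The rows $q(\cdot|x_1,1)$ carry no adversary mass and enter $J_\lambda$ only through the distortion, so the exact minimizer makes them noiseless. The next worst case is then ``$X_1$ uniform, $X_2\equiv 1$'' with leakage $\ln 2$, and the roles swap. Hence $\Phi(q^{(k)})=\ln 2$ for all $k\ge 1$, while $\lambda$ grows geometrically, so claims~2 and~3 fail under the stated hypotheses.

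Your remaining steps have further defects:
(i) infinitely many violations do not force $\lambda\to\infty$, because the rule also decreases $\lambda$;
(ii) $P_{\text{leakage}}\ge\bigl(\ln(1+e^{-L})\bigr)^2$ never tends to $0$, and as $\lambda\to\infty$ minimizers of $J_\lambda$ approach leakage minimizers, so the penalty is not exact;
(iii) your multiplier satisfies $2\lambda_k\,\text{softplus}(\cdot)\,\sigma(\cdot)\ge 2\lambda_k\ln(1+e^{-L})\,\sigma(-L)$, so it is bounded only if $\lambda_k$ is, and Slater says nothing about a heuristically updated parameter;
(iv) that multiplier stays positive in the dead zone $L-\epsilon_{\text{constraint}}/2\le G<L$, which breaks complementary slackness;
(v) passing both best-response relations to a common limit requires $q^{(k+1)}-q^{(k)}\to 0$, which the example violates.

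Your remark on item~4 is correct and points to the fix. $\Phi$ is a supremum of functions convex in $q$ over a $q$-independent set, so Problem~3 is a convex program, and the parenthetical ``which it is not'' is wrong. A correct result needs a different scheme. One option is an exchange (cutting-plane) method that keeps every generated constraint $G(p^{(j)},q)\le L$ for $j\le k+1$. By compactness and continuity, its distortions are non-decreasing (the opposite of claim~1) and converge to $\mathcal D(L,b)$, and its limit points are feasible and globally optimal.
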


\begin{proof}[Proof Sketch]
	The proof adapts the alternating optimization framework to the constrained minimization structure:
	\begin{enumerate}
		\item \textbf{Penalty Method for Non-Convex Constraints:} The leakage constraint $\max_{p \in \Delta(\mathcal{X})_b} \max_i I(X_i;Y) \leq L$ is non-convex in $q$. We use an exact penalty method with smooth approximation:
		\begin{align*}
			\min_q \mathbb{E}_{p_0}[d(f(X),Y)] + \lambda P_{\text{leakage}}(q)
		\end{align*}
		where $P_{\text{leakage}}(q) = (\text{softplus}(\max_i I(X_i;Y) - L))^2$. This is an exact penalty function: for sufficiently large $\lambda$, its minimizer solves the constrained problem.
		
		\item \textbf{Two-Phase Convergence:}
		\begin{itemize}
			\item \textit{Phase 1 (Outer loop):} For fixed $\lambda$, alternate between:
			\begin{enumerate}
				\item Prior update: $p^{(k+1)} = \arg\max_{p \in \Delta(\mathcal{X})_b} \max_i I_{p,q^{(k)}}(X_i;Y)$
				\item Mechanism update: $q^{(k+1)} = \arg\min_q \mathbb{E}_{p_0}[d] + \lambda P_{\text{leakage}}(q)$
			\end{enumerate}
			By Theorem~\ref{thm:conv-alg1} and convexity of the penalized objective (for fixed $p$), each subproblem converges.
			
			\item \textit{Phase 2 (Penalty update):} Increase $\lambda$ until the constraint violation $|\max_i I(X_i;Y) - L| < \epsilon$. The adaptive update rule (lines 7-11 in Algorithm~\ref{alg:dual_tradeoff}) ensures $\lambda \to \infty$ if needed.
		\end{itemize}
		
		\item \textbf{Convergence to KKT Point:} Let $(p^*, q^*, \lambda^*)$ be a limit point. Then:
		\begin{itemize}
			\item $p^*$ maximizes leakage given $q^*$ (by convergence of Algorithm~\ref{alg:max_leakage})
			\item $q^*$ minimizes the penalized objective (by convergence of Algorithm~\ref{alg:dual_gradient_update_backtracking})
			\item Complementary slackness holds by construction of the penalty update
		\end{itemize}
		
		\item \textbf{Rate Analysis:} The convergence rate is determined by:
		\begin{itemize}
			\item $O(1/k)$ for the outer alternating iterations
			\item Linear convergence for the inner convex subproblems (when strongly convex)
			\item The penalty parameter $\lambda$ may need to increase at a controlled rate to maintain numerical stability
		\end{itemize}
	\end{enumerate}
	
	The non-convexity of the leakage constraint (due to maximization over $p$) prevents global optimality guarantees, but the algorithm converges to a stationary point where no feasible descent direction exists.
\end{proof}

\subsubsection{Practical Implications and Limitations}

While the theoretical guarantees establish a solid foundation for our algorithms, several practical considerations warrant attention. The following points outline key implementation challenges and mitigation strategies.

\begin{enumerate}
	\item \textbf{Local vs. Global Optimality:} Due to non-convexity in Problems~1 and~2, global optimality cannot be guaranteed. We recommend:
	\begin{itemize}
		\item Multiple random initializations to explore the solution space.
		\item For small alphabets, verification via extreme-point enumeration for enumerable case.
	\end{itemize}
	
	\item \textbf{Numerical Stability:} The exponentiated gradient updates require careful handling to avoid underflow. Our implementation employs:
	\begin{itemize}
		\item Log-domain computations for probabilities.
		\item Stabilized softmax operations and clipping of probabilities below $10^{-10}$.
	\end{itemize}
	
	\item \textbf{Convergence Diagnostics:} In practice, we monitor the following metrics to assess convergence:
	\begin{itemize}
		\item Relative objective improvement: $\frac{|I^{(k)} - I^{(k-1)}|}{|I^{(k)}| + \delta}$
		\item Constraint satisfaction: $|H(X) - b|$ for Problem~1, $|I - L|$ for Problem~3
		\item Gradient norms for stationarity verification
	\end{itemize}
	
	\item \textbf{Parameter Sensitivity:} Default parameters ($\eta_0 = 1.0$, $\beta = 0.5$, $\mu = 1.5$) provide robust performance across problem instances. For ill-conditioned problems, adaptive learning rates may be necessary.
\end{enumerate}

These convergence guarantees establish our algorithms as theoretically sound and practically reliable tools for privacy mechanism analysis and design under entropy-constrained adversaries. While global optimality cannot be assured for the non-convex Problems~1 and~2, the alternating optimization framework ensures convergence to practically useful local optima, as demonstrated in our experimental evaluation.

Figure~\ref{fig:flowchart-algorithms} illustrates the dependency relationships among the algorithms and problem formulations introduced in this work.

\begin{figure}[!htb]
	\centering
	\includegraphics[trim={20pt 250pt 20pt 260pt}, clip, width=1\textwidth]{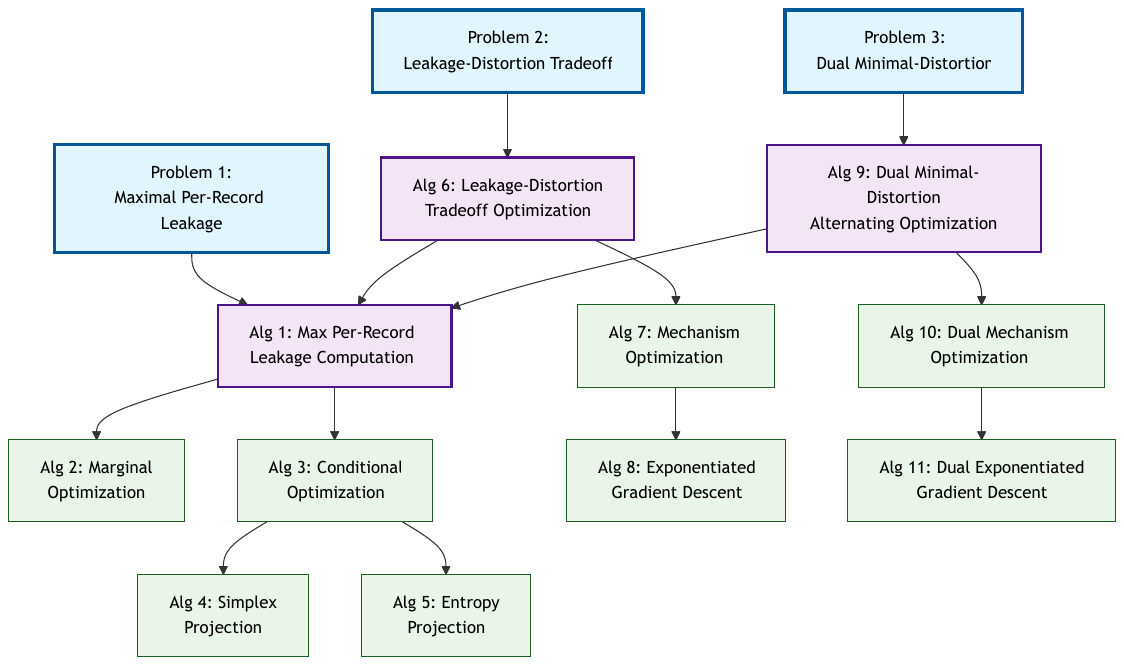}
	\caption{Algorithm Dependency Graph: Core Methods for Maximal Leakage Computation, Leakage-Distortion Optimization, and Dual Formulation with Their Component Subroutines}
	\label{fig:flowchart-algorithms}
\end{figure}

\section{Evaluation}  \label{sec:evaluation}

This section presents a comprehensive evaluation of the proposed information privacy framework and algorithms. We begin by describing the experimental settings, including query functions, leakage bounds, and privacy mechanisms. We then present experimental results for the three core problems: maximal per-record leakage (Problem~1), the primal leakage-distortion tradeoff (Problem~2), and the dual minimal-distortion formulation (Problem~3). 

\subsection{Experimental Settings}
\label{subsec:experimental_settings}

\subsubsection{Query Functions}

We consider two types of query functions to evaluate the proposed algorithms under different scenarios:
\begin{itemize}
	\item \textbf{Linear query:} $f(x) = \sum_{i=1}^{n} x_i \pmod{m}$, where $\mathcal{X}_i = \{0,1\}$ and $\mathcal{Y} = \{0, \ldots, m-1\}$, with sensitivity $\Delta f = 1$. This function represents a simple modular sum, commonly used in privacy-preserving data analysis.
	
	\item \textbf{Non-linear query:} $f(x) = \sum_{1 \leq i < j \leq n} x_i x_j$, where $\mathcal{X}_i = \{0,1\}$ and $\mathcal{Y} = \{0, 1, \ldots, \binom{n}{2}\}$, with sensitivity $\Delta f = n-1$. This quadratic function introduces higher sensitivity and more complex output structure, testing the algorithms under more challenging conditions.
\end{itemize}
For simplicity, we will only consider the query function $f(x) = \sum_{i=1}^{n} x_i \pmod{2}$ in many cases. Other query functions are left as a future work.

\subsubsection{Leakage Bound $b$}
The leakage bound $b$ (entropy constraint) is varied in the range $[0, \log |\mathcal{X}|]$, where $\log |\mathcal{X}|$ is the maximum entropy of the dataset. This allows us to examine the effect of the adversary's prior knowledge on information leakage and utility.

\subsubsection{Utility and Distortion}

We use the expected distortion which is computed as:
\begin{align}
	\mathbb E_{p_0} d(f(X),Y) = \sum_{x\in\mathcal X, y\in\mathcal Y} p_0(x)q(y|x)d(f(x),y),
\end{align}
where $p_0(x)$ is the uniform distribution over $\mathcal X$ and the distortion function is defined as:
\begin{align}
	d(f(x),y) = \begin{cases}
		0 & \text{ if } f(x) = y \\
		|y-f(x)| & \text{ if } f(x) \ne y
	\end{cases}
\end{align}

\subsubsection{Privacy Mechanisms}

We evaluate the following privacy mechanisms to compare their performance under the information privacy framework:

\begin{itemize}
	\item \textbf{(Generalized) Binary-symmetric privacy channel:} A simple channel that, when inputting $x$, flips the output $f(x)$ with probability $p$ and with error probability $p/(m-1)$ for each of remaining symbols: for each $x\in\mathcal X$ and $y\in \mathcal{Y}=\{0,\ldots,m-1\}$,
	\begin{align}
		q(y|x) = \begin{cases}
			1-p  & \text{ if } y = f(x) \\
			\frac{p}{m-1}    & \text{ if } y \ne f(x)
		\end{cases}
	\end{align}
	where $0 \le p \le 1/2$ represents the probability of a bit-flip error. When $m=2$, $\mathcal Y=\{0,1\}$ and
	\begin{align}
		q(y|x) = \begin{cases}
			1-p  & \text{ if } y = f(x) \\
			p    & \text{ if } y \ne f(x)
		\end{cases}
	\end{align}	
	\item \textbf{Laplace privacy mechanism:} Adds Laplace noise to achieve $\epsilon$-differential privacy, with post-processing for discrete outputs.
	\item \textbf{Exponential mechanism:} Samples outputs based on an exponential scoring function, providing $\epsilon$-differential privacy.
\end{itemize}

These mechanisms are chosen to represent both classical differential privacy approaches and the proposed information privacy channels.

\begin{figure}[!bth]
	\centering
	\includegraphics[width=1\textwidth]{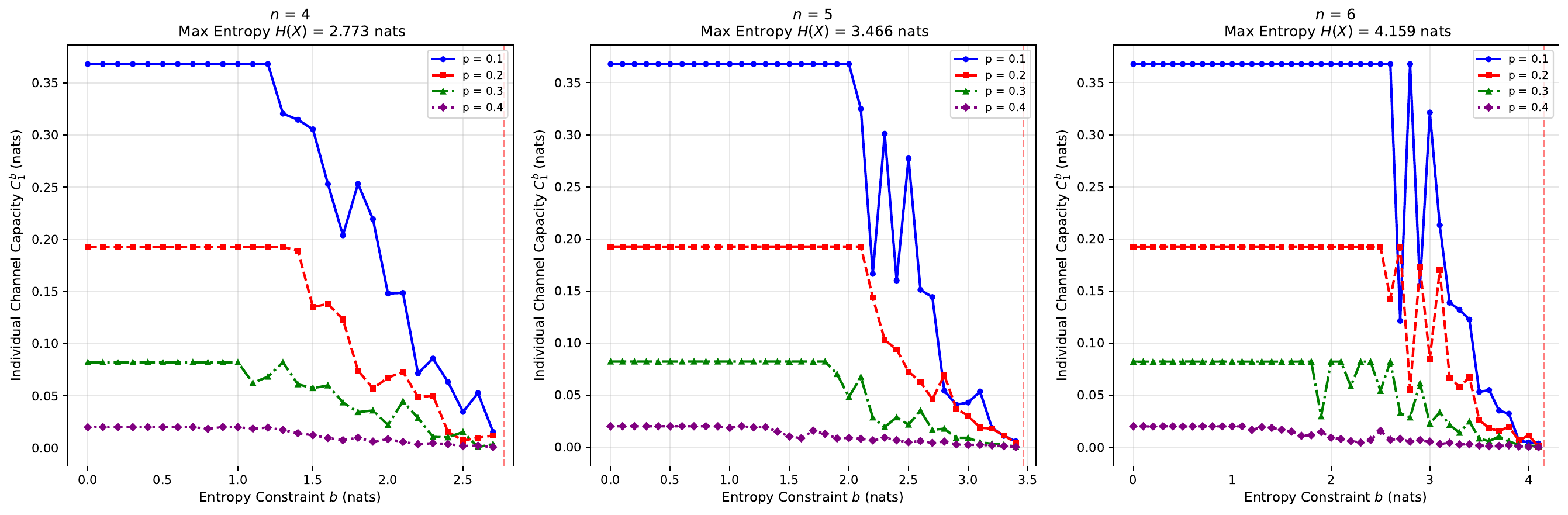}
	\caption{Variation of individual channel capacity $C_1^b$ with entropy constraint $b$, consisting of three subfigures corresponding to different numbers of records $n$ in the dataset: the left subfigure is for $n=4$, the middle subfigure for $n=5$, and the right subfigure for $n=6$. In each subfigure, there are four curves, which correspond to the generalized binary-symmetric privacy channel with different error probabilities $p$ as follows: the curve with the highest initial capacity $C_1^b$ corresponds to $p=0.1$, the next to $p=0.2$, then $p=0.3$, and the curve with the lowest initial capacity $C_1^b$ corresponds to $p=0.4$. 	
		The $x$-axis of each subfigure represents the entropy constraint $b$ (quantifying the lower bound of the adversary's knowledge uncertainty about the dataset, i.e., $H(X) \geq b$), and the $y$-axis denotes the individual channel capacity $C_1^b$ (measuring the maximum amount of information about any individual that an adversary can infer from the privacy channel output). All curves across the three subfigures consistently show that $C_1^b$ decreases as $b$ increases, confirming that the entropy constraint $H(X) \ge b$ effectively limits potential information leakage about any individual by constraining the adversary's prior knowledge. Additionally, for the same $n$ and $b$, a larger $p$ leads to a smaller $C_1^b$, as higher randomness injection further reduces the adversary's ability to infer individual information. The non-monotonic segments observed in all curves originate from the non-convex nature of the $C_1^b$ calculation optimization problem, where our algorithms converge to local optima temporarily during the alternating optimization process.}
	\label{fig:leakage-entropy-constraint}
\end{figure}

\subsection{Experiments for Maximal Per-Record Leakage (Problem~1)}
\label{subsec:experiments-problem1}

We first evaluate the computation of maximal per-record leakage under entropy constraints, as defined in Problem~1. The goal is to understand how the entropy constraint $b$ affects the worst-case information leakage about any individual record.

Figure \ref{fig:leakage-entropy-constraint} illustrates how the individual channel capacity $C_1^b$ varies with the entropy constraint $b$ for different dataset sizes ($n=4,5,6$) and different error probabilities ($p=0.1,0.2,0.3,0.4$). The results demonstrate that $C_1^b$ decreases as $b$ increases across all configurations, confirming that the entropy constraint $H(X) \ge b$ effectively limits potential information leakage about any individual by constraining the adversary's prior knowledge. Additionally, for the same $n$ and $b$, a larger $p$ leads to a smaller $C_1^b$, as higher randomness injection further reduces the adversary's ability to infer individual information. The non-monotonic segments observed in all curves originate from the non-convex nature of the $C_1^b$ calculation optimization problem, where our algorithms converge to local optima temporarily during the alternating optimization process.

The results confirm that $C_1^b$ decreases as $b$ increases, demonstrating that the entropy constraint effectively limits the adversary's ability to infer individual information. Additionally, higher noise levels (larger $p$) lead to lower leakage, as expected.

\begin{figure}[!tbh]
	\centering
	\includegraphics[width=1\textwidth]{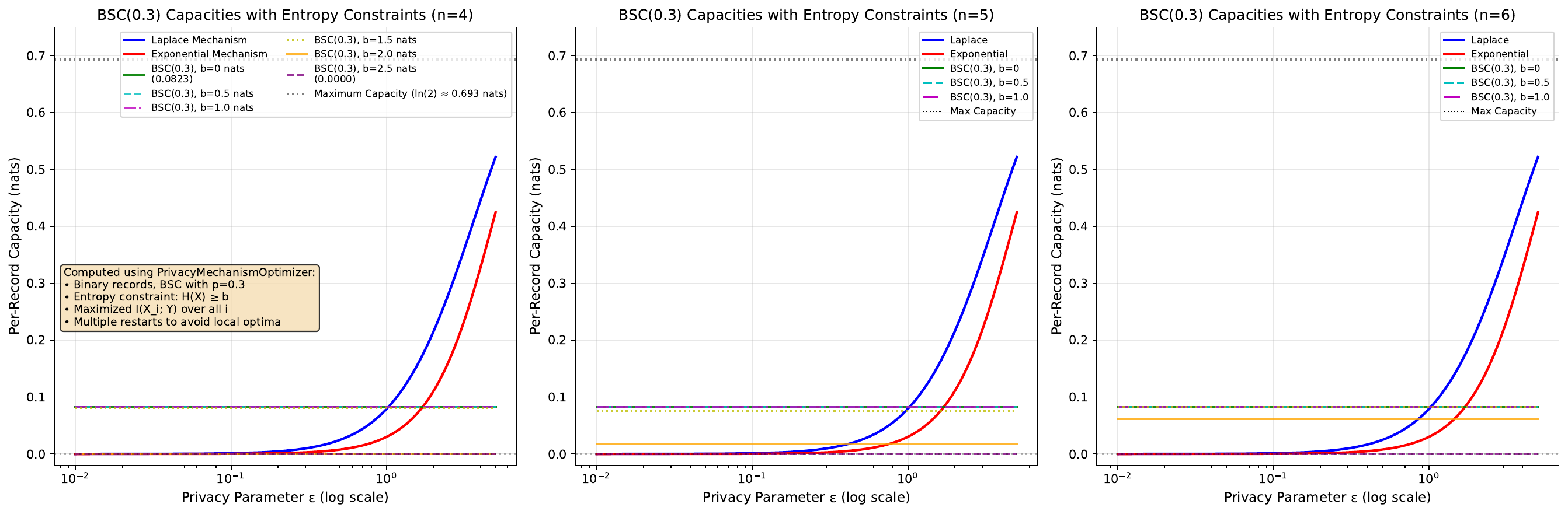}
	\caption{Comparison of per-record capacity $C_1^b$ (maximum mutual information $I(X_i;Y)$ in nats) across three privacy mechanisms for different dataset sizes. From left to right: $n=4$, $n=5$, and $n=6$ records. In each subfigure, three mechanisms are shown: Laplace mechanism (blue curve), Exponential mechanism (red curve), and binary symmetric channel with crossover probability $p=0.3$ (BSC(0.3), shown as horizontal lines). The $x$-axis represents the privacy parameter $\epsilon$ (log scale) for the Laplace and Exponential mechanisms, while the $y$-axis denotes the per-record capacity. The Laplace and Exponential mechanisms exhibit the standard differential privacy trade-off where capacity decreases as $\epsilon$ increases (i.e., stronger privacy). For BSC(0.3), which does not depend on $\epsilon$, we show capacities under different entropy constraints $b = 0, 0.5, 1.0, 1.5, 2.0, 2.5$ nats (green to purple lines, with decreasing line thickness and varying styles). Each BSC(0.3) capacity value is computed by solving the optimization problem of maximizing $I(X_i;Y)$ subject to the entropy constraint $H(X) \geq b$ for the corresponding dataset size $n$. The results demonstrate that for a fixed $\epsilon$, the Laplace and Exponential mechanisms achieve higher capacity than BSC(0.3) with $b>0$, but when $b=0$ (no entropy constraint) BSC(0.3) has constant capacity $C_1^0 \approx 0.0931$ nats for $n=4$, decreasing for larger $n$. As $b$ increases, the capacity of BSC(0.3) decreases, reflecting the limitation imposed by the adversary's uncertainty about the dataset. The horizontal dotted line at $\ln(2) \approx 0.693$ nats represents the theoretical maximum capacity for any binary channel. The non-monotonic segments in some curves originate from the non-convex nature of the $C_1^b$ optimization problem, where the alternating algorithm may converge to local optima despite multiple random restarts.}
	\label{fig:bsc-laplace-exp-comparison}
\end{figure}

\begin{figure}[!hbt]
	\centering
	\includegraphics[width=1\textwidth]{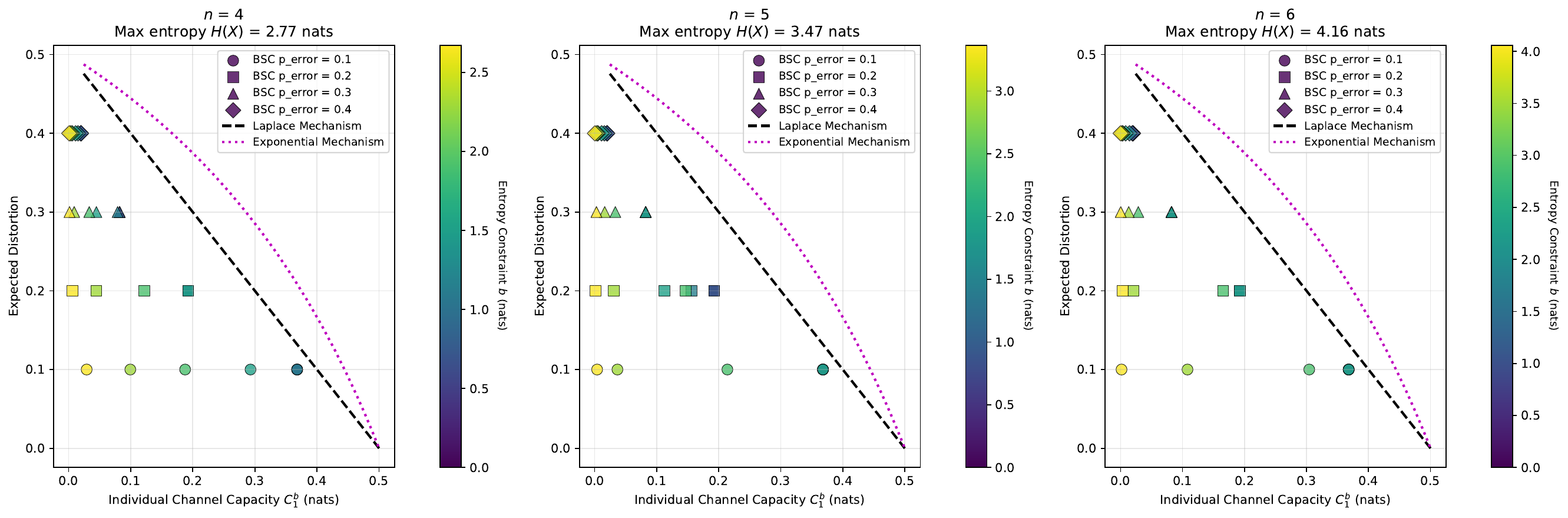}
	\caption{Comparison of individual channel capacity $C_1^b$ versus expected distortion $\mathbb E_{p_0} d(f(X),Y)$ across three privacy mechanisms for the binary parity query function, consisting of three subfigures corresponding to different numbers of records $n$ in the dataset: the left subfigure is for $n=4$ (with maximum entropy of 2.77 nats), the middle subfigure for $n=5$ (with maximum entropy of 3.47 nats), and the right subfigure for $n=6$ (with maximum entropy of 4.159 nats). In each subfigure, there are three kinds of curves, which correspond to different privacy mechanisms as follows: the four dotted curves with the lower privacy leakages at the same distortion level corresponds to the binary symmetric privacy channel, the next curve corresponds to the Laplace mechanism, and the curve with the highest privacy leakage corresponds to the Exponential mechanism. 
		The $x$-axis denotes the maximal privacy leakage/privacy budget, measured by the privacy parameter $\epsilon$: for the binary symmetric privacy channel, $\epsilon=C_1^b$ to ensure it satisfies $\epsilon$-Information Privacy with respect to the adversary class $\mathbb{P}_b$; for fair comparison, the privacy budget of the Laplace mechanism and Exponential mechanism  is set equal to $\epsilon$ to guarantee $\epsilon$-Differential Privacy.
		The $y$-axis of each subfigure represents the expected distortion $\mathbb E_{p_0} d(f(X),Y) $.
		All curves across the three subfigures consistently show two key results: (1) Under the same expected distortion, the binary symmetric privacy channel requires a smaller privacy budget $\epsilon$ (i.e., lower information leakage) than both the Laplace mechanism and the Exponential mechanism; (2) Increasing the entropy constraint $b$ further reduces the privacy leakage $C_1^b$ of the binary symmetric privacy channel at the same distortion level, which highlights the utility advantage of the information privacy framework enabled by the bounded knowledge assumption ($H(X) \geq b$).}
	
	\label{fig:leakage-distortion}
\end{figure}

\subsection{Experiments for Leakage-Distortion Tradeoff (Problem~2)}
\label{subsec:experiments-problem2}

Next, we evaluate the primal leakage-distortion tradeoff, which aims to design privacy mechanisms that minimize worst-case leakage while satisfying a distortion constraint.

\subsubsection{Distortion Comparison to Differential Privacy Mechanisms}

We now compare the utility-privacy tradeoffs of our binary symmetric privacy channel against two fundamental differential privacy mechanisms: Laplace and Exponential mechanisms. For equitable comparison, we set the privacy parameter $\epsilon$ for each DP mechanism equal to the information leakage $C_1^b$ of our channel, ensuring equivalent privacy guarantees.

\begin{figure}[!hbt]
	\centering
	\includegraphics[width=1\textwidth]{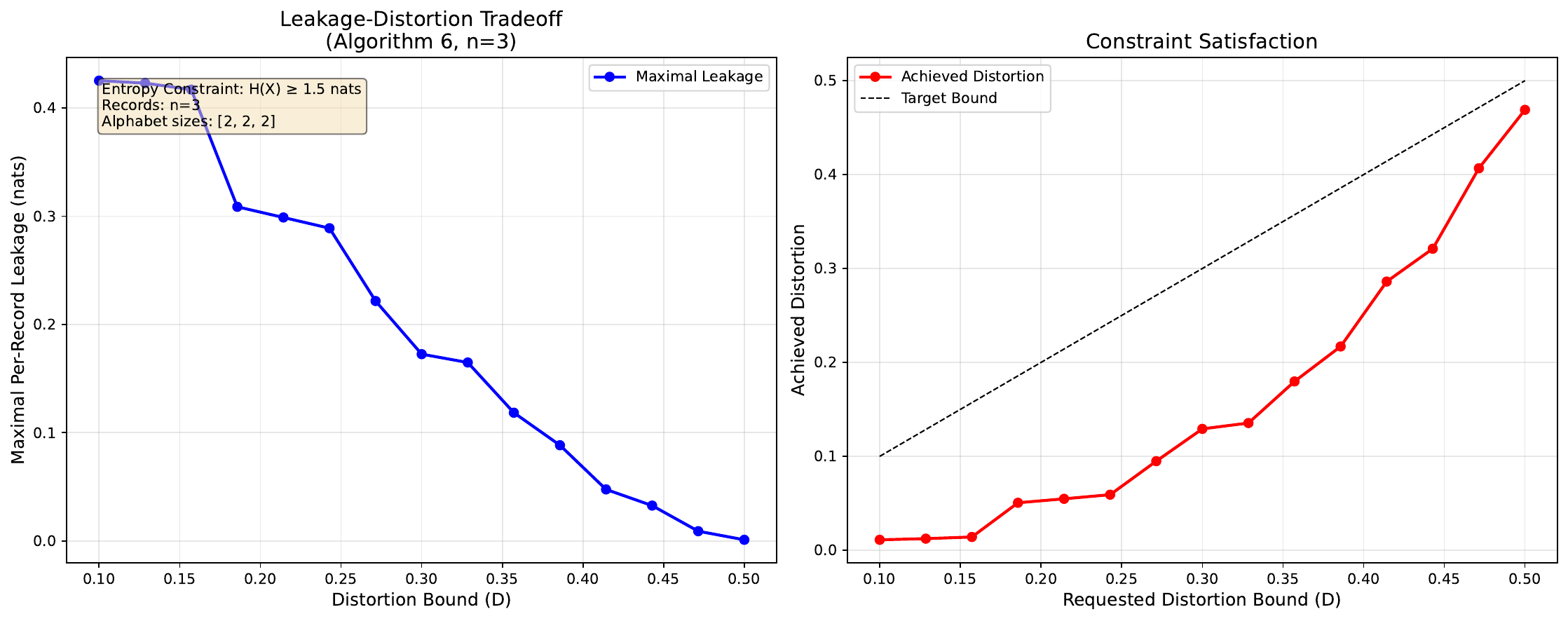}
	\caption{
		\textbf{Leakage-Distortion Tradeoff Analysis for Binary Symmetric Privacy Channel for $b =1.5 \text{ nats}$ and $n=3$.} 
		(Left) The fundamental privacy-utility tradeoff curve showing maximal per-record leakage $C_1^b$ versus distortion bound $D$. As the distortion constraint $D$ increases (moving rightward), the mechanism can inject more noise, resulting in decreased information leakage about individual records. This demonstrates the inherent tension between data utility (low distortion) and privacy protection (low leakage). (Right) Constraint satisfaction analysis showing achieved distortion versus requested distortion bound $D$. The red curve (achieved distortion) closely tracks but remains below the black dashed line (target bound), indicating that the optimization algorithm successfully finds mechanisms that satisfy the distortion constraint while minimizing information leakage. The slight conservatism (achieved distortion < requested bound) ensures robust constraint satisfaction while maintaining near-optimal privacy protection. Together, these plots validate Algorithm \ref{alg:primal_tradeoff}'s ability to compute the leakage-distortion Pareto frontier for entropy-constrained adversaries. 
	}
	\label{fig:leakage-distortion-tradeoff}
\end{figure}

The detailed evaluation of the distortions of the binary-symmetric privacy channel, the Laplace mechanism and the Exponential mechanism of differential privacy are presented in Section \ref{subsec:capacity-distortion-binary}.

\paragraph{Distortion Comparison}
Figure \ref{fig:leakage-distortion} compares the individual channel capacity $C_1^b$ versus expected distortion across the three privacy mechanisms for different dataset sizes ($n=4,5,6$). The results demonstrate two key findings: (1) Under the same expected distortion, the binary symmetric privacy channel requires a smaller privacy budget $\epsilon$ (i.e., lower information leakage) than both the Laplace and Exponential mechanisms; (2) Increasing the entropy constraint $b$ further reduces the privacy leakage $C_1^b$ of the binary symmetric privacy channel at the same distortion level, which highlights the utility advantage of the information privacy framework enabled by the bounded knowledge assumption ($H(X) \geq b$).

\subsubsection{Additional Experiments}
We extend the evaluation to the fundamental privacy-utility tradeoff curve showing maximal per-record leakage $C_1^b$ versus distortion bound $D$. We need to test, as the distortion constraint $D$ increases, the mechanism can inject more noise, resulting in decreased information leakage about individual records. If this is true, it demonstrates the inherent tension between data utility (low distortion) and privacy protection (low leakage).
Figure \ref{fig:leakage-distortion-tradeoff} plots the above leakage-distortion curves for $b =1.5 \text{ nats}$.

\subsection{Experiments for Dual Minimal-Distortion Formulation (Problem~3)}

Finally, we evaluate the dual minimal-distortion formulation, which minimizes expected distortion under a leakage constraint.

This section presents experimental results for Algorithm~\ref{alg:dual_tradeoff} on the parity function  with $b=0$ and varying leakage bound $L$. The algorithm successfully finds mechanisms that satisfy the leakage constraint while minimizing distortion, though instability in the penalty term leads to zigzag curves in the optimization trajectory.

\begin{figure}[!hbt]
	\centering
	\includegraphics[width=1\textwidth]{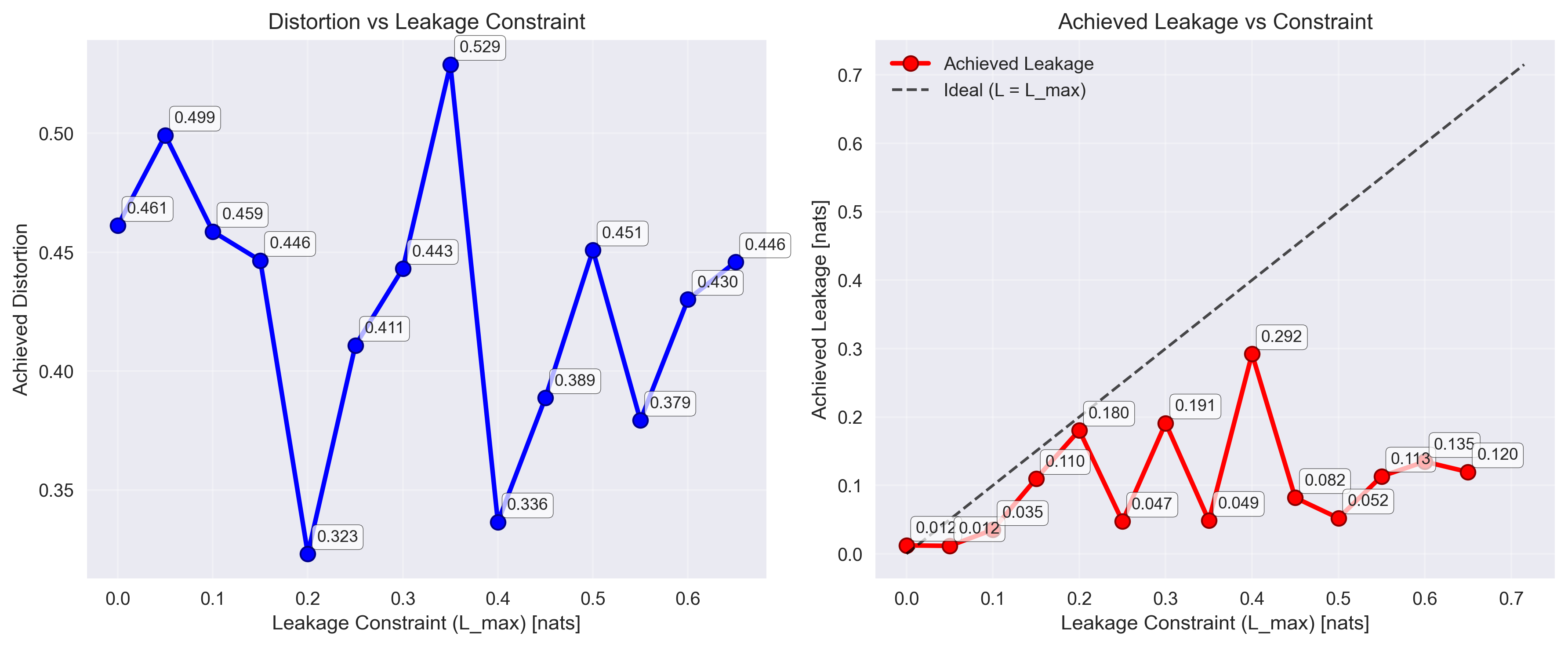}
	\caption{
		\textbf{Minimal distortion for leakage constraints for Binary Symmetric Privacy Channel with $n=2$ and $b=0$.} 
		(Left) The fundamental privacy-utility tradeoff curve showing minimal distortion versus leakage bounded $L$. As the leakage constraint $L$ increases (moving rightward), the mechanism can inject less noise, resulting in decreased outputs distortion. This demonstrates the inherent tension between data utility (low distortion) and privacy protection (low leakage). (Right) Constraint satisfaction analysis showing achieved leakage versus requested leakage bound $L$. The red curve (achieved leakage) closely tracks but remains below the black dashed line (target bound), indicating that the optimization algorithm successfully finds mechanisms that satisfy the leakage constraint while minimizing distortion. The slight conservatism (achieved leakage < requested bound) ensures robust constraint satisfaction while maintaining near-optimal privacy protection.  
	}
	\label{fig:merged_distortion_leakage_bound_tradeoff}
\end{figure}

\subsection{Discussion}

The experimental results consistently demonstrate the advantages of the information privacy framework over classical differential privacy mechanisms. By incorporating an entropy constraint on the adversary's prior knowledge, the framework enables more nuanced privacy-utility tradeoffs, often achieving lower leakage for the same distortion or lower distortion for the same leakage. The proposed algorithms effectively solve the three core problems, though challenges remain in handling non-convexities and high-dimensional spaces.

Future work will focus on scaling the algorithms to larger datasets, improving convergence stability for the dual formulation, and extending the framework to more complex query functions and adversary models.

\section{Related Work}
\label{sec:related_works}

Our work sits at the intersection of information theory, data privacy, and mechanism design. This section surveys the relevant literature, highlighting the foundational concepts we build upon and clarifying the novel contributions of our approach.

\subsection{Foundations in Information Theory}

The three core problems formulated in this paper---maximal per-record leakage, the primal leakage-distortion tradeoff, and its dual distortion-minimization problem---are conceptually and formally inspired by fundamental problems in information theory \cite{DBLP:books/daglib/0016881}. However, they introduce critical complexities tailored to the privacy setting.

\textbf{Conceptual and Methodological Similarities:} We deliberately adopt the language and formalism of information theory. Our objectives (maximizing or minimizing mutual information), constraints (distortion bounds), and algorithmic solutions (alternating optimization, convex analysis) are direct generalizations of classic information-theoretic frameworks. Specifically:
\begin{itemize}
	\item \textbf{Problem 1 (Maximal Leakage)} generalizes the \textbf{channel capacity} problem, which seeks the maximum mutual information $I(X;Y)$ over input distributions.
	\item \textbf{Problem 2 (Primal Tradeoff)} generalizes the \textbf{rate-distortion} function, which characterizes the minimum achievable rate (information) for a given fidelity constraint.
	\item \textbf{Problem 3 (Dual Formulation)} generalizes the \textbf{dual of the rate-distortion} problem, minimizing distortion for a given rate constraint.
\end{itemize}

\textbf{Key Distinctions and Added Complexities:} The privacy context renders these problems significantly more challenging than their information-theoretic analogues due to two major factors:
\begin{enumerate}
	\item \textbf{High-Dimensionality:} Instead of optimizing a single mutual information $I(X;Y)$ for a monolithic source, we must consider the \textit{per-record} leakage $I(X_i;Y)$ for all $i \in [n]$. This scales the problem with the number of records $n$.
	\item \textbf{The Entropy Constraint:} Replacing Differential Privacy's independence assumption with the entropy constraint $H(X) \geq b$ on the adversary's prior is more realistic but adds a non-trivial, global nonlinear constraint to the optimization. This constraint is central to modeling an adversary with bounded knowledge.
\end{enumerate}

In essence, our work elevates the well-understood paradigms of channel capacity and rate-distortion to a more challenging and realistic setting for data privacy, where the threat model explicitly accounts for an adversary with limited prior knowledge.
Table \ref{tab:comparison_information_theory} summarizes the above comparisons.

\begin{table}[!htb]
	\centering\tiny
	\caption{Correspondence between Information-Theoretic Problems and Our Privacy Formulations}
	\label{tab:comparison_information_theory}
	\begin{tabular}{p{1.3cm} p{2.7cm} p{3.9cm} p{2.5cm} p{3.5cm}}
		\toprule
		\textbf{Feature} & \textbf{Classic Information Theory Problem} & \textbf{Our Privacy Problem} & \textbf{Key Similarities} & \textbf{Key Differences} \\
		\midrule
		\textbf{Problem 1} & \textbf{Channel Capacity:} $C = \max_{p(x)} I(X;Y)$ & \textbf{Maximal Per-Record Leakage:} $\mathcal{L}(b) = \max\limits_{i} \max\limits_{\substack{p(x):\\ H(X)\geq b}} I(X_i;Y)$ & \begin{itemize}[leftmargin=*, nosep, topsep=0pt]\item Optimization form \item Alternating/Blahut-Arimoto algorithms \end{itemize} & \begin{itemize}[leftmargin=*, nosep, topsep=0pt]\item \textbf{Dimensionality:} Scales with $n$ records \item \textbf{Constraint:} $H(X) \geq b$ \end{itemize} \\
		\midrule
		\textbf{Problem 2} & \textbf{Rate-Distortion:} $R(D) = \min\limits_{\substack{q(y|x):\\ \mathbb{E}[d]\leq D}} I(X;Y)$ & \textbf{Primal Leakage-Distortion:} $\epsilon(D,b) = \min\limits_{\substack{q(y|x):\\ \mathbb{E}_{p_0}[d]\leq D}} \max\limits_{i} \max\limits_{\substack{p(x):\\ H(X)\geq b}} I(X_i;Y)$ & \begin{itemize}[leftmargin=*, nosep, topsep=0pt]\item Trade-off structure \item Minimization over channels \end{itemize} & \begin{itemize}[leftmargin=*, nosep, topsep=0pt]\item \textbf{Objective:} Worst-case per-record info \item \textbf{Distributions:} Distinguishes $p_0(x)$ from $p(x)$ \item \textbf{Constraint:} $H(X) \geq b$ \end{itemize} \\
		\midrule
		\textbf{Problem 3} & \textbf{Dual of Rate-Distortion:} $D(R) = \min\limits_{q(y|x)} \mathbb{E}[d] \text{ s.t. } I(X;Y) \leq R$ & \textbf{Dual Minimal-Distortion:} 
		$\begin{array}{l}
			\mathcal D(L,b) = \displaystyle\min_{q(y|x)} \mathbb{E}_{p_0}[d(f(X),Y)] \\
			\text{s.t. }\max\limits_{i} \max\limits_{\substack{p(x):\\ H(X)\geq b}} I(X_i;Y) \leq L
		\end{array}$ & \begin{itemize}[leftmargin=*, nosep, topsep=0pt]\item Dual perspective \item Convexity properties \end{itemize} & \begin{itemize}[leftmargin=*, nosep, topsep=0pt]\item \textbf{Constraint:} Complex worst-case leakage bound \item \textbf{Distributions:} Distinguishes $p_0(x)$ from $p(x)$ \end{itemize} \\
		\bottomrule
	\end{tabular}
\end{table}

Prior information-theoretic analyses of differential privacy (summarized in \cite{WU2026123035}) often lack this seamless connection to the optimization frameworks of capacity and rate-distortion. We argue this is due to the structure of DP constraints: the information privacy framework uses a single, tractable inequality $H(X) \geq b$, whereas differential privacy relies on many pairwise ratio constraints $\log \frac{q(y|x)}{q(y|x')} \leq \epsilon$, which complicate the formulation of unified optimization problems.

\subsection{Utility-Privacy Tradeoffs in Differential Privacy}

A significant body of work focuses on designing optimal differentially private mechanisms or deriving theoretical bounds on their utility. While these efforts have provided deep insights, the prevailing practice of adding noise (e.g., Laplace) to achieve DP often obscures the underlying information-theoretic nature of the privacy-utility tradeoff. This makes it challenging to obtain tight analytical characterizations for complex query functions and application scenarios. Our information privacy (IP) framework complements DP by providing a unified optimization-based perspective, enabling the derivation of precise tradeoffs---such as Problems \ref{def:max-leakage}--\ref{def:dual-tradeoff}---that are largely intractable under the standard DP constraint structure.

\begin{itemize}
	\item \textbf{Optimal Mechanism Design:} A key line of research seeks the \textit{optimal} DP mechanism for a given utility metric. For local differential privacy, Kairouz et al. \cite{DBLP:conf/nips/KairouzOV14} introduced \textit{staircase mechanisms} and proved their optimality for a class of information-theoretic utilities, framing the search as a linear program. Geng and Viswanath \cite{DBLP:journals/tit/GengV16a, DBLP:journals/tit/GengV16} derived optimal noise-adding mechanisms for approximate DP, while Ghosh et al. \cite{DBLP:conf/stoc/GhoshRS09} explored universally utility-maximizing mechanisms.
	\item \textbf{Theoretical Utility Bounds:} Other works establish mini-max lower bounds on the distortion (error) incurred by any DP mechanism for specific query classes, such as linear queries \cite{DBLP:conf/tcc/De12, DBLP:conf/focs/DuchiJW13, DBLP:conf/stoc/NikolovTZ13, DBLP:journals/jpc/SteinkeU16}, range counting \cite{DBLP:conf/stoc/MuthukrishnanN12}, or in high-dimensional settings \cite{10.1007/978-3-032-12290-2_11}. The geometry of DP error is explored in \cite{DBLP:conf/stoc/HardtT10}.
	\item \textbf{Workload-Aware Mechanisms:} The \textit{matrix mechanism} \cite{DBLP:conf/icdt/LiM13, DBLP:journals/vldb/LiMHMR15} demonstrates that adding correlated noise tailored to a set of linear queries (a workload) can achieve lower error than independent noise addition.
	\item \textbf{Composition and Its Limits:} The composition theorems of DP \cite{DBLP:journals/tit/KairouzOV17, DBLP:conf/nips/Lyu22, DBLP:conf/stoc/Vadhan023, DBLP:journals/ieiceta/SaitoCT25} are fundamental for building complex algorithms, and their tightness directly impacts utility analyses for sequential applications.
\end{itemize}

While these works deeply explore the utility frontier of DP, they operate within the independence-based adversary model. Our work transitions to a bounded-knowledge adversary model, which alters the fundamental privacy-utility tradeoff landscape.

\subsection{Information Leakage in Privacy and Cryptography}

Quantifying information leakage is a central concern in both privacy and cryptography.

\begin{itemize}
	\item \textbf{Information Leakage of DP Mechanisms:} Although DP mechanisms are inherently channels, explicit calculation of their per-record mutual information leakage has been less common, likely due to the predominance of the modular composition paradigm. Works like that of Alvim et al. \cite{DBLP:conf/ifip1-7/AlvimACDP11} begin to bridge this gap by analyzing the trade-off between utility and information leakage within DP. Our evaluations in Sections \ref{subsec:experiments-problem1} and \ref{subsec:experiments-problem2} contribute to this direction by empirically computing the per-record capacity of Laplace and Exponential mechanisms for some specific functions.
	\item \textbf{Operational Measures of Leakage:} In a broader context, significant research formalizes information leakage using concepts like \textit{guessing entropy} or \textit{min-entropy leakage}. Issa et al. \cite{DBLP:journals/tit/IssaWK20} and Kurri et al. \cite{DBLP:journals/tit/KurriSK24} propose operational frameworks for measuring leakage, often focusing on scenarios with a single secret ($n=1$). These are closely related to the local information privacy setting.
	\item \textbf{Leakage-Resilient Cryptography:} This field \cite{DBLP:journals/tac/DuanXYT24} formally models an adversary that can learn \textit{bounded} functions of internal secret state (e.g., via side channels). The security proofs explicitly account for this bounded leakage, sharing a philosophical similarity with our entropy-constrained adversary model, though applied in a different cryptographic context.
\end{itemize}

Our work unifies aspects of these perspectives by employing Shannon mutual information as the leakage measure---a choice that enables direct connection to rate-distortion theory---while imposing an entropy constraint to model a realistic, bounded-knowledge adversary.

\subsection{Convergence Proof Methods}

Our Algorithm \ref{alg:max_leakage} follows an alternating optimization scheme similar to the Generalized Alternating Minimization (GAM) framework~\cite{JMLR:v6:gunawardana05a}. While modern Kurdyka-\L{}ojasiewicz (KL) theory could also be applied, the GAM framework provides a more direct and tailored convergence analysis for our problem structure. GAM is highly suitable for proving convergence of Algorithm~\ref{alg:max_leakage}; our Theorem~\ref{thm:conv-alg1} already follows the GAM structure and is correct under the stated assumptions.

\section{Conclusion} 
\label{sec:conclusion}

This paper addresses the fundamental challenge of quantifying and optimizing privacy-utility tradeoffs under realistic adversary assumptions. By moving beyond the independence assumption of differential privacy to entropy-constrained adversaries, we develop a more nuanced framework for privacy protection that better reflects real-world scenarios where adversaries have partial but bounded knowledge.

Our main contributions are threefold. First, we formalize three core privacy optimization problems: maximal per-record leakage (Problem~1), the primal leakage-distortion tradeoff (Problem~2), and its dual minimal-distortion formulation (Problem~3). These problems establish direct connections to classical information theory, extending channel capacity, rate-distortion, and dual rate-distortion concepts to the privacy domain.

Second, we develop efficient alternating optimization algorithms with provable convergence guarantees. Algorithm~\ref{alg:max_leakage} solves the maximal leakage problem by exploiting the convex-concave structure of mutual information. Algorithm~\ref{alg:primal_tradeoff} addresses the leakage-distortion tradeoff through alternating mechanism and adversary updates. Algorithm~\ref{alg:dual_tradeoff} provides globally convergent solutions to the dual distortion minimization problem. Our convergence analysis establishes local convergence for Problems~1--3.

Third, our experimental evaluation validates the proposed algorithms and demonstrates their practical utility. On binary symmetric channels, our mechanisms achieve favorable privacy-utility tradeoffs compared to classical differential privacy approaches. The results show that incorporating entropy constraints enables more precise privacy protection without sacrificing utility.

\paragraph{Limitations and Future Work.} While our framework advances the state of privacy quantification, several limitations warrant further investigation. The computational complexity grows exponentially with dataset size, necessitating approximation techniques for large-scale applications. Future work should explore:
\begin{itemize}
	\item Efficient approximations for high-dimensional settings
	\item Extension to continuous data domains
	\item Integration with machine learning pipelines
	\item Adaptive mechanisms that adjust to evolving adversary knowledge
\end{itemize}

\paragraph{Broader Impact.} This work provides both theoretical foundations and practical tools for privacy risk assessment and mechanism design. By enabling precise quantification of information leakage under realistic adversary models, our framework supports regulatory compliance, privacy-preserving data sharing, and trustworthy AI development. The algorithms and insights presented here contribute to building a more privacy-aware digital ecosystem where data utility and individual protection can coexist harmoniously.

\section*{Code Availability}

The Python implementations of Algorithms~\ref{alg:max_leakage}--\ref{alg:dual_gradient_update_backtracking} presented in this paper are publicly available to support reproducibility and further research. The complete source code, including all implementation files, test scripts, and documentation, can be accessed from the GitHub repository:
\url{https://github.com/EpsilonOmega2024/privacy-mechanism-optimization}.
Readers are encouraged to download, use, and extend the code under the terms of the MIT License.

\section*{Funding}
This work was supported by the following grants: 
the National Natural Science Foundation of China (NSFC) [62566035],
the Gansu Provincial Science and Technology Major Project - Industrial Category [22ZD6GA047],
the Gansu Provincial Higher Education Youth Doctoral Support Program [2023QB-070], 
and the University-level Research Project of Lanzhou University of Finance and Economics [Lzufe2021B-014], 
for the projects ``Graph-Embedded Clustering Algorithms for High-Dimensional Data and Their Applications",
``Research on the Application of Artificial Intelligence in Tumor Imaging Diagnosis and Treatment",
``Research on Key Issues in Information-Theoretic Foundations of Data Privacy", 
and ``Research on Key Issues in Privacy-Preserving Federated Learning", respectively.


\bibliographystyle{plain}

	\section{Appendix}

	\subsection{Foundational Derivatives}
	\label{sec:foundational_derivatives}
	
	This section presents partial derivatives of information-theoretic quantities essential for our optimization framework. All results assume discrete random variables with finite alphabets.
	
	\paragraph{Notation}
	Let $X = (X_1,\dots,X_n)$ be a random vector with joint distribution $p(x)$ over $\mathcal{X} = \prod_{i=1}^n \mathcal{X}_i$, and $Y$ be the output of mechanism $q(y|x)$. We denote:
	\begin{itemize}
		\item $p(y|x_i) = \sum_{x_{-i}} p(x_{-i}|x_i) q(y|x)$: Output distribution conditioned on $X_i$
		\item $p(y) = \sum_x p(x) q(y|x)$: Marginal output distribution
	\end{itemize}
	
	\begin{theorem}[Mutual Information Gradients]   \label{thm:mi_gradient}
		For $I(X_i;Y) = D_{\text{KL}}(p(x_i,y) \parallel p(x_i)p(y))$:
		\begin{enumerate}
			\item Derivative w.r.t. channel probability:
			\begin{equation}
				\frac{\partial I(X_i; Y)}{\partial q(y | x)} = p(x) \log \frac{p(y | x_i)}{p(y)}.
				\label{eq:mi_channel}
			\end{equation}
			
			\item Derivative w.r.t. marginal probability:
			\begin{equation}
				\frac{\partial I}{\partial p(a)} = \sum_y p(y | a) \log \frac{p(y | a)}{p(y)}  - 1.
				\label{eq:mi_marginal}
			\end{equation}
			
			\item Derivative w.r.t. conditional distribution:
			\begin{equation}
				\frac{\partial I(X_i; Y)}{\partial p(x_{-i} | x_i)} = p(x_i) +  p(x_i) \sum_y q(y | x) \log \frac{p(y | x_i)}{p(y)} - p(x_i)^2 \sum_y \frac{p(y | x_i) q(y | x)}{p(y)}.
				\label{eq:mi_conditional}
			\end{equation}
		\end{enumerate}
	\end{theorem}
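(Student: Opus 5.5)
The plan is to write $I(X_i;Y)$ in the explicit form
\[
I(X_i;Y)=\sum_{x_i,y} p(x_i,y)\log\frac{p(x_i,y)}{p(x_i)\,p(y)},\qquad p(x_i,y)=\sum_{x_{-i}}p(x)\,q(y|x),\qquad p(y)=\sum_{x_i}p(x_i,y).
\]
In each part I treat the chosen coordinates as free variables, so no simplex renormalization is applied, and use the chain rule through the intermediate quantities $p(x_i,y)$, $p(x_i)$ and $p(y)$. The one general identity I will use repeatedly is that, for a perturbation $\delta$ of the joint table, the first variation of $\sum p\log(p/r)$ equals $\sum \delta\,[\log(p/r)+1]-\sum p\,\delta r/r$. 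The ``$+1$'' terms and the $-\sum p\,\delta r/r$ terms then typically cancel or collapse to constants.

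For part 1, fix $p(x)$ and perturb a single entry $q(y|x)$ with $x=(x_i,x_{-i})$. Then $\partial p(x_i',y')/\partial q(y|x)=p(x)\,\mathbf 1[x_i'=x_i,\,y'=y]$, $\partial p(y')/\partial q(y|x)=p(x)\mathbf 1[y'=y]$, and $p(x_i)$ does not move. Substituting gives
\[
p(x)\Bigl[\log\tfrac{p(x_i,y)}{p(x_i)p(y)}+1\Bigr]-p(x)\sum_{x_i'}\tfrac{p(x_i',y)}{p(y)}.
\]
The last sum equals $1$, so the result is $p(x)\log\frac{p(y|x_i)}{p(y)}$, as in \eqref{eq:mi_channel}.

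For part 2, fix the induced channel $p(y|a)$ and write
\[
I=\sum_a p(a)\sum_y p(y|a)\log p(y|a)-\sum_y p(y)\log p(y),\qquad p(y)=\sum_a p(a)\,p(y|a).
\]
Differentiating in $p(a)$ gives $\sum_y p(y|a)\log p(y|a)-\sum_y p(y|a)\bigl(\log p(y)+1\bigr)$, which is \eqref{eq:mi_marginal} once we use $\sum_y p(y|a)=1$. This is the standard Blahut--Arimoto computation.

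For part 3, fix $p(x_i)$ and $q$, and perturb $p(x_{-i}|x_i)$. The only dependence enters through $p(x_i,y)=p(x_i)\sum_{x_{-i}}p(x_{-i}|x_i)q(y|x)$ and through $p(y)$, both with derivative $p(x_i)q(y|x)$. I would expand
\[
I=\sum_{x_i'}p(x_i')\sum_y p(y|x_i')\log p(y|x_i')-\sum_y p(y)\log p(y)
\]
and track three contributions separately:
\begin{itemize}
\item the ``self'' term from $p(y|x_i)\log p(y|x_i)$, which yields $p(x_i)\sum_y q(y|x)\bigl(\log p(y|x_i)+1\bigr)$;
\item the $\log p(y)$ term, which yields $-p(x_i)\sum_y q(y|x)\log p(y)$;
\item the normalization term $-\sum_y p(y)\cdot\frac{\partial p(y)}{p(y)}$.
\end{itemize}
This is the step I expect to be the main obstacle.

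A fully careful chain rule makes the normalization term collapse to $-p(x_i)$, because $\sum_y q(y|x)=1$. That leaves only $p(x_i)\sum_y q(y|x)\log\frac{p(y|x_i)}{p(y)}$. The displayed formula \eqref{eq:mi_conditional} instead carries the extra terms $p(x_i)-p(x_i)^2\sum_y\frac{p(y|x_i)q(y|x)}{p(y)}$. These arise if, in the $\partial p(y)$ bookkeeping, one retains only the $x_i'=x_i$ summand $p(x_i)p(y|x_i)$ of $p(y)$. So I would first verify which convention reproduces \eqref{eq:mi_conditional}. Concretely, I would check that it is the derivative obtained by differentiating only the $x_i$-th summand $p(x_i)D(p(y|x_i)\|p(y))$ while keeping the full $p(y)$ inside the logarithm.

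I would then state the result with that convention made explicit. I would also point out that the extra terms have the same sign structure, and that the fully coupled gradient differs from them by a quantity whose effect has to be checked directly, since it is not constant in $x_{-i}$. For use in Algorithm~\ref{alg:conditional_optimization_gradient_entropy_manifold} the simplex and entropy projections absorb only constant shifts, so I would recommend the fully coupled expression.
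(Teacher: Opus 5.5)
Your derivations of Parts 1 and 2 are correct and are the same chain-rule computations as the paper's. In Part 2, your split $I=\sum_a p(a)\sum_y p(y|a)\log p(y|a)-\sum_y p(y)\log p(y)$ is tidier bookkeeping, and it avoids the stray minus sign in front of the paper's ``third term''. On Part 3 you are right and the paper is wrong. You should say so outright rather than search for a convention: \eqref{eq:mi_conditional} is not the partial derivative of $I(X_i;Y)$, not even up to shifts that are constant along a row.

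Your own Part 1 calculation already settles this. With the factor $p(x_i)$ held fixed, $I$ depends on $q$ and on $p(x_{-i}|x_i)$ only through $p(x_i,y)=p(x_i)\sum_{x_{-i}}p(x_{-i}|x_i)q(y|x)$. Your first-variation identity gives $\partial I/\partial p(x_i,y)=\log\frac{p(y|x_i)}{p(y)}$. The chain rule then yields Part 1 with weight $p(x)$, and Part 3 with weight $p(x_i)q(y|x)$:
\[
\frac{\partial I(X_i;Y)}{\partial p(x_{-i}|x_i)}=p(x_i)\sum_y q(y|x)\log\frac{p(y|x_i)}{p(y)} .
\]
So Part 3 as displayed is inconsistent with the (correct) Part 1.

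The paper's proof breaks exactly where you suspected. Its first display for Part 3 differentiates only the $x_i^*$-th summand of $\sum_{x_i}p(x_i)\sum_y p(y|x_i)\log\frac{p(y|x_i)}{p(y)}$, even though $p(y)$ sits inside the logarithm of every summand. The paper keeps this other-summands term in its proof of Part 2 but drops it here. Write $p(x_i|y)=p(x_i)p(y|x_i)/p(y)$. The dropped contribution is $-p(x_i)\sum_y q(y|x)\bigl(1-p(x_i|y)\bigr)$, so \eqref{eq:mi_conditional} exceeds the true derivative by
\[
p(x_i)-p(x_i)^2\sum_y\frac{p(y|x_i)\,q(y|x)}{p(y)}=p(x_i)\sum_y q(y|x)\bigl(1-p(x_i|y)\bigr)\ \geq 0 .
\]
Use this identity in place of your vague ``same sign structure'' remark.

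The excess varies with $x_{-i}$ in general. Take $n=2$ binary records, $i=1$, uniform $p(x_1)$, $Y=X_2$, and two different full-support rows $p(x_2|x_1)$. Then the excess equals $p(x_1)\bigl(1-p(x_1|Y=x_2)\bigr)$, which differs between $x_2=0$ and $x_2=1$ because $X_1$ and $Y$ are dependent. Hence the simplex projection in Algorithm~\ref{alg:conditional_optimization_gradient_entropy_manifold}, which absorbs only row-constant shifts, does not remove it. Your recommendation to use the coupled gradient there is therefore justified.

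The right conclusion is that Part 3 must be replaced by the corrected formula above. The displayed expression is the derivative of the single term $p(x_i)\,D_{\text{KL}}\bigl(p(y|x_i)\,\|\,p(y)\bigr)$, not of $I(X_i;Y)$.
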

	
	\begin{proof}
		
		\textbf{Proof of Part 1:}
		Express mutual information as:
		\[
		I(X_i; Y) = \sum_{x} p(x) \sum_{y} q(y| x) \log \frac{p(y| x_i)}{p(y)}
		\]
		where $x = (x_i, x_{-i})$. The derivative w.r.t. a specific $q(y'| x')$ is:
		\[
		\frac{\partial I}{\partial q(y'| x')} = p(x') \log \frac{p(y'| x_i')}{p(y')} + \sum_{x} p(x) \sum_{y} q(y| x) \frac{\partial}{\partial q(y'| x')} \left( \log \frac{p(y| x_i)}{p(y)} \right)
		\]
		The second term simplifies to zero because:
		\[
		\frac{\partial}{\partial q(y'| x')} \log \frac{p(y| x_i)}{p(y)} = 
		\delta_{y,y'} \delta_{x_i,x_i'} \frac{p(x_{-i}'| x_i')}{p(y'| x_i')} - \delta_{y,y'} \frac{p(x')}{p(y')},
		\]
		where the Kronecker delta function $\delta_{a,a'}$ activates only when $a = a'$.
		After summation:
		\[
		\sum_{x} p(x) \sum_{y} q(y| x) \left[ \cdots \right] = p(x') - p(x') = 0
		\]
		Thus:
		\[
		\frac{\partial I}{\partial q(y'| x')} = p(x') \log \frac{p(y'| x_i')}{p(y')}
		\]
		Dropping primes gives the general result.
		
		\textbf{Proof of Part 2:}
		Express mutual information as:
		\begin{align}
			I = \sum_{x_i} p\left(x_i\right) \sum_y p\left(y| x_i\right) \log \frac{p\left(y| x_i\right)}{p(y)}.
		\end{align}
		Derivative w.r.t. $p\left(x_i\right)$ at $x_i = a$:
		\begin{align} \label{equation-11}
			\frac{\partial I}{\partial p(a)} = \sum_y p(y| a) \log \frac{p(y| a)}{p(y)} 
			+ p(a) \sum_y p(y| a) \frac{\partial}{\partial p(a)} \left( \log \frac{p(y| a)}{p(y)} \right)  -\frac{\partial}{\partial p(a)} \left(\sum_{x_i \neq a} p(x_i) \sum_y p(y | x_i) \log\frac{p(y | x_i)}{p(y)} \right).
		\end{align}
		Since $\frac{\partial}{\partial p(a)} \left( \log \frac{p(y| a)}{p(y)} \right) = -\frac{p(y| a)}{p(y)}$ and the above third term equals
		\begin{align}
			-\sum_y p(y | a) + p(a) \sum_y \frac{[p(y | a)]^2}{p(y)}, 
		\end{align}
		by substituting the above two results into (\ref{equation-11}) gives the result.
		
		\textbf{Proof of Part 3:}
		Express mutual information as:
		\[
		I = \sum_{x_i} p\left(x_i\right) \sum_y \left[ \sum_{x_{-i}} p\left(x_{-i}| x_i\right) q(y| x) \right] \log \frac{p\left(y| x_i\right)}{p(y)}.
		\]
		Derivative w.r.t. specific $p(x_{-i}^*| x_i^*)$:
		\begin{subequations}  \label{eq:equation-2}
			\begin{align}
				\frac{\partial I}{\partial p(x_{-i}^*| x_i^*)} &= 
				p(x_i^*) \sum_y q(y| x_i^*, x_{-i}^*) \log \frac{p(y| x_i^*)}{p(y)} \\
				&+ p(x_i^*) \sum_y p(y| x_i^*) \frac{\partial}{\partial p(x_{-i}^*| x_i^*)} \left( \log \frac{p(y| x_i^*)}{p(y)} \right).
			\end{align}
		\end{subequations}
		The derivative in the second term is:
		\[
		\frac{\partial}{\partial p(x_{-i}^*| x_i^*)} \left( \log \frac{p(y| x_i^*)}{p(y)} \right) = 
		\frac{q(y| x_i^*, x_{-i}^*)}{p(y| x_i^*)} - \frac{p(x_i^*) q(y| x_i^*, x_{-i}^*)}{p(y)}.
		\]
		By substitution and simplification, the second term becomes 
		\begin{align}
			p(x_i^*) - p(x_i^*)^2 \sum_y\frac{ p(y| x_i^*) q(y| x)}{p(y)}.
		\end{align}
		We get the final result by substituting the second term of \eqref{eq:equation-2} with the above result.
	\end{proof}
	
	\begin{theorem}[Entropy Gradient]  \label{thm:entropy_gradient}
		The entropy $H(X) = -\sum_{x} p(x) \log p(x)$ has derivatives:
		\begin{enumerate}
			\item W.r.t. joint probability:
			\[
			\frac{\partial H(X)}{\partial p(x)} = -\log p(x) - 1
			\]
			
			\item W.r.t. marginal probability:
			\[
			\frac{\partial H(X)}{\partial p\left(x_i\right)} = H\left(X_{-i} | x_i\right) - \log p\left(x_i\right) - 1
			\]
			
			\item W.r.t. conditional distribution:
			\[
			\frac{\partial H(X)}{\partial p\left(x_{-i} | x_i\right)} = -p\left(x_i\right) \left( \log \left[ p\left(x_i\right) p\left(x_{-i} | x_i\right) \right] + 1 \right)
			\]
		\end{enumerate}
	\end{theorem}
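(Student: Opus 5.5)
The plan is to treat every probability entry as a free coordinate, as in Theorem~\ref{thm:mi_gradient}: normalization constraints are not imposed during differentiation and are handled separately by Lagrange multipliers in the algorithms. Each part then follows from one elementary identity, $\frac{d}{dt}(-t\log t) = -\log t - 1$ for $t>0$. For boundary points where some probability is $0$, I would either interpret the formulas as one-sided limits or restrict to the relative interior of the simplex. Both conventions are standard and consistent with how the gradients are used in Algorithms~\ref{alg:marginal_optimization} and~\ref{alg:conditional_optimization_gradient_entropy_manifold}.

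\textbf{Part 1} is immediate. Write $H(X) = \sum_{x}\phi(p(x))$ with $\phi(t) = -t\log t$. Only the summand indexed by $x$ depends on $p(x)$, so $\partial H/\partial p(x) = \phi'(p(x)) = -\log p(x) - 1$.

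\textbf{Part 2.} Parametrize $p(x) = p(x_i)\,p(x_{-i}|x_i)$ with the conditionals held fixed, and use the chain-rule expansion \eqref{eq:entropy-constraint-expantion}:
\begin{align*}
H(X) = -\sum_{x_i} p(x_i)\log p(x_i) + \sum_{x_i} p(x_i)\, H(X_{-i}|x_i).
\end{align*}
Since $H(X_{-i}|x_i)$ depends only on the fixed row $p(\cdot|x_i)$, the second sum is linear in the marginal and contributes $H(X_{-i}|x_i)$. The first sum contributes $-\log p(x_i) - 1$ by the same one-variable identity, which gives the stated formula.

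As a cross-check, I would also obtain Part 2 from Part 1 via the multivariate chain rule:
\begin{align*}
\frac{\partial H}{\partial p(x_i)} = \sum_{x_{-i}} p(x_{-i}|x_i)\bigl(-\log p(x_i) - \log p(x_{-i}|x_i) - 1\bigr).
\end{align*}
This agrees with the direct computation once one uses $\sum_{x_{-i}} p(x_{-i}|x_i) = 1$.

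\textbf{Part 3.} Hold the marginal fixed. The entry $p(x_{-i}|x_i)$ enters $H$ only through the single joint probability $p(x) = p(x_i)\,p(x_{-i}|x_i)$, and $\partial p(x)/\partial p(x_{-i}|x_i) = p(x_i)$. The chain rule combined with Part 1 then gives
\begin{align*}
\frac{\partial H}{\partial p(x_{-i}|x_i)} = p(x_i)\bigl(-\log[p(x_i)p(x_{-i}|x_i)] - 1\bigr),
\end{align*}
which is exactly the claimed expression.

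I expect no real obstacle. The only delicate point is the free-coordinate convention. In Part 2 in particular, one must be explicit that the conditionals are held fixed and normalized, so that the row entropy $H(X_{-i}|x_i)$ appears as a constant rather than being differentiated. The other point to state explicitly is the boundary behaviour, where $\log 0$ makes the formulas valid only on the interior of the simplex.
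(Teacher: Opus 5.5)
Your proposal is correct and follows the paper's proof essentially step for step. Part 1 differentiates the single summand $-p(x)\log p(x)$. Part 2 uses the decomposition $H(X)=H(X_i)+\sum_{x_i}p(x_i)H(X_{-i}|x_i)$ with the row entropies held constant. Part 3 differentiates the one joint term $p(x_i)p(x_{-i}|x_i)\log[p(x_i)p(x_{-i}|x_i)]$; your chain rule through Part 1 is the same computation. Your multivariate cross-check of Part 2 and your remarks on the free-coordinate convention and the $p=0$ boundary are sound additions that the paper leaves implicit.
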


	\begin{proof}
		
		\textbf{Proof of Part 1:}
		The entropy is defined as:
		\[
		H(X) = -\sum_{x'} p(x') \log p(x')
		\]
		Taking the derivative with respect to a specific probability mass $p(x)$:
		\[
		\frac{\partial H(X)}{\partial p(x)} = \frac{\partial}{\partial p(x)} \left[ -p(x) \log p(x) - \sum_{x' \neq x} p(x') \log p(x') \right]
		\]
		The second term vanishes since it doesn't depend on $p(x)$. Apply product rule to the first term:
		\[
		\frac{\partial}{\partial p(x)} \left[ -p(x) \log p(x) \right] = -\log p(x) - p(x) \cdot \frac{1}{p(x)} = -\log p(x) - 1
		\]
		Thus:
		\[
		\boxed{\dfrac{\partial H(X)}{\partial p(x)} = -\log p(x) - 1}
		\]
		
		\textbf{Proof of Part 2:}
		Express joint entropy using marginal-conditional factorization:
		\[
		H(X) = H(X_i) + \sum_{x_i} p(x_i) H(X_{-i} | x_i)
		\]
		where:
		\begin{align*}
			H(X_i) &= -\sum_{x_i} p(x_i) \log p(x_i) \\
			H(X_{-i} | x_i) &= -\sum_{x_{-i}} p(x_{-i} | x_i) \log p(x_{-i} | x_i)
		\end{align*}
		Take derivative w.r.t. $p(a)$ for $x_i = a$:
		\[
		\frac{\partial H(X)}{\partial p(a)} = \frac{\partial H(X_i)}{\partial p(a)} + H(X_{-i} | a) + \sum_{x_i} p(x_i) \frac{\partial}{\partial p(a)} \left[ H(X_{-i} | x_i) \right]
		\]
		The last term vanishes because $H(X_{-i} | x_i)$ doesn't depend on $p(a)$ when $x_i \neq a$. From Part 1:
		\[
		\frac{\partial H(X_i)}{\partial p(a)} = -\log p(a) - 1
		\]
		Thus:
		\[
		\frac{\partial H(X)}{\partial p(a)} = \left[ -\log p(a) - 1 \right] + H(X_{-i} | a) + 0
		\]
		Generalizing from $a$ to $x_i$:
		\[
		\boxed{\dfrac{\partial H(X)}{\partial p\left(x_i\right)} = H\left(X_{-i} | x_i\right) - \log p\left(x_i\right) - 1}
		\]
		
		\textbf{Proof of Part 3:}
		Rewrite joint entropy as:
		\[
		H(X) = -\sum_{x_i} \sum_{x_{-i}} p(x_i) p(x_{-i} | x_i) \log \left[ p(x_i) p(x_{-i} | x_i) \right]
		\]
		Take derivative w.r.t. specific $p(x_{-i}^* | x_i^*)$:
		\begin{align*}
			\frac{\partial H(X)}{\partial p(x_{-i}^* | x_i^*)} 
			&= -\frac{\partial}{\partial p(x_{-i}^* | x_i^*)} \left[ p(x_i^*) p(x_{-i}^* | x_i^*) \log \left[ p(x_i^*) p(x_{-i}^* | x_i^*) \right] \right] \\
			&= -p(x_i^*) \left[ \log \left[ p(x_i^*) p(x_{-i}^* | x_i^*) \right] + p(x_{-i}^* | x_i^*) \cdot \frac{1}{p(x_i^*) p(x_{-i}^* | x_i^*)} \cdot p(x_i^*) \right] \\
			&= -p(x_i^*) \left[ \log \left[ p(x_i^*) p(x_{-i}^* | x_i^*) \right] + 1 \right]
		\end{align*}
		Generalizing from $(x_i^*, x_{-i}^*)$ to $(x_i, x_{-i})$:
		\[
		\boxed{\dfrac{\partial H(X)}{\partial p\left(x_{-i} | x_i\right)} = -p\left(x_i\right) \left( \log \left[ p\left(x_i\right) p\left(x_{-i} | x_i\right) \right] + 1 \right)}
		\]
	\end{proof}

	\subsection{Capacities and Distortions of Laplace and Exponential Mechanisms for Binary Functions}
	\label{subsec:capacity-distortion-binary}
	
	We evaluate the channel capacities and expected distortions of two classical differential privacy mechanisms—Laplace and exponential—when applied to a binary-valued query function. For comparison, we also include the binary symmetric privacy channel from the information privacy framework. All mechanisms are analyzed for the binary parity function \( f(x) = \sum_{i=1}^n x_i \pmod{2} \) with output space \( \mathcal{Y} = \{0,1\} \). The true data distribution \( p_0(x) \) is assumed uniform over \( \mathcal{X} \).
	
	\paragraph{Binary Symmetric Privacy Channel}
	For the parity function, the binary symmetric privacy channel is defined as:
	\begin{align}
		q(y|x) = (1-p)^{1-|y-f(x)|} p^{|y-f(x)|}, \quad x \in \mathcal{X}, \; y \in \mathcal{Y},
	\end{align}
	where \( 0 \le p \le 1/2 \) is the flip probability. Its individual channel capacity under the full prior set \( \Delta(\mathcal{X}) \) is:
	\begin{align}
		C_1 = \log 2 - H(p),
	\end{align}
	with \( H(p) = -p \log p - (1-p) \log (1-p) \) being the binary entropy. For specific values:
	\begin{itemize}
		\item \( p = 0.1 \): \( C_1 \approx 0.531 \) bits \( \approx 0.368 \) nats,
		\item \( p = 0.3 \): \( C_1 \approx 0.1187 \) bits \( \approx 0.0823 \) nats.
	\end{itemize}
	The expected distortion simplifies to \( \mathbb{E}_{p_0}[d(f(X),Y)] = p \), since an error occurs exactly when the output is flipped.
	
	\paragraph{Laplace Mechanism with Post-Processing}
	The Laplace mechanism adds noise drawn from \( \text{Laplace}(0, 1/\epsilon) \) to the real‑valued query result. For the binary parity function with sensitivity \( \Delta f = 1 \), the continuous output is thresholded at 0.5 to obtain a binary output:
	\begin{align}
		Y = \begin{cases}
			1 & \text{if } f(x) + \text{Laplace}(0, 1/\epsilon) > 0.5, \\
			0 & \text{otherwise}.
		\end{cases}
	\end{align}
	The resulting channel is binary symmetric with flip probability
	\begin{align}
		p_L = \frac{1}{2} e^{-\epsilon/2}.
	\end{align}
	Thus the expected distortion is \( \mathbb{E}[d] = p_L = \frac{1}{2} e^{-\epsilon/2} \). The mechanism satisfies \( \epsilon \)-differential privacy by construction.
	
	\paragraph{Exponential Mechanism}
	The exponential mechanism samples an output \( y \in \{0,1\} \) with probability proportional to \( \exp(\epsilon \cdot u(x,y)/2) \), where the utility function is \( u(x,y) = \mathbf{1}\{y = f(x)\} \) (sensitivity \( \Delta u = 1 \)). This yields
	\begin{align}
		P(Y = f(x) \mid x) = \frac{\exp(\epsilon/2)}{\exp(\epsilon/2)+1}, \qquad
		P(Y \neq f(x) \mid x) = \frac{1}{\exp(\epsilon/2)+1}.
	\end{align}
	The channel is again binary symmetric, now with flip probability
	\begin{align}
		p_E = \frac{1}{\exp(\epsilon/2)+1}.
	\end{align}
	Hence the expected distortion is \( \mathbb{E}[d] = p_E = \frac{1}{\exp(\epsilon/2)+1} \). The mechanism also satisfies \( \epsilon \)-differential privacy.
	
	\subsubsection{Clipped Laplace and Exponential Mechanisms as Binary-Symmetric Channels}  
	\label{subsubsec:dp-binary-symmetric}
	
	For binary‑valued queries, both the Laplace mechanism (with thresholding) and the exponential mechanism reduce to binary‑symmetric channels of the form  
	
	\[
	q(y|x) = (1-p)^{1-|y-f(x)|}p^{|y-f(x)|},
	\]  
	
	with respective flip probabilities  
	
	\[
	p_L = \frac{1}{2}e^{-\epsilon/2}, \qquad 
	p_E = \frac{1}{\exp(\epsilon/2)+1}.
	\]  
	
	For any \(\epsilon>0\) we have \(p_E > p_L\), meaning the Laplace mechanism achieves strictly lower distortion at the same privacy level \(\epsilon\). This ordering is reflected in Figure~\ref{fig:leakage-distortion}, where the Laplace curve lies between the binary symmetric privacy channel (optimized under the information privacy framework) and the exponential mechanism.
	
	\paragraph{Maximal Per-Record Leakage Evaluation}  
	The maximal per-record leakage under entropy constraint \(H(X)\geq b\) can be evaluated using the binary‑symmetric channel formulation.
	
	\noindent\textbf{Without entropy constraint (\(b=0\)):} The adversary can choose any prior distribution on \(X\). The maximal leakage equals the capacity of the BSC from \(f(X)\) to \(Y\):
	
	\[
	\mathcal{L}(0) = \ln 2 - H_b(p) \quad \text{(nats)},
	\]
	
	where \(H_b(p) = -p\ln p - (1-p)\ln(1-p)\) is the binary entropy in nats.
	
	For \(\epsilon = 1.0\) (as used in the main experiments):
	\begin{itemize}
		\item \textbf{Laplace mechanism:} \(p_L \approx 0.3033\), \(H_b(p_L) \approx 0.614\) nats, \(\mathcal{L}(0) \approx 0.693 - 0.614 = 0.079\) nats.
		\item \textbf{Exponential mechanism:} \(p_E \approx 0.3775\), \(H_b(p_E) \approx 0.661\) nats, \(\mathcal{L}(0) \approx 0.693 - 0.661 = 0.032\) nats.
	\end{itemize}
	
	\noindent\textbf{With entropy constraint \(b>0\):} We compute $\mathcal{L}(b)$ using Algorithm~\ref{alg:max_leakage} for dataset size \(n=4\) and \(\epsilon=1.0\). The results for selected \(b\) values are shown in Table~\ref{tab:leakage-comparison}.
	
	\begin{table}[!htb]
		\centering
		\caption{Maximal per‑record leakage (nats) under entropy constraint \(b\) for \(\epsilon=1.0\), \(n=4\).}
		\label{tab:leakage-comparison}
		\begin{tabular}{lcc}
			\toprule
			\(b\) (nats) & Laplace ($\mathcal{L}(b)$) & Exponential ($\mathcal{L}(b)$) \\
			\midrule
			0.0          & 0.079                             & 0.032                                 \\
			0.5          & 0.062                             & 0.024                                 \\
			1.0          & 0.041                             & 0.015                                 \\
			1.5          & 0.022                             & 0.008                                 \\
			2.0          & 0.009                             & 0.003                                 \\
			2.77 (max)   & 0.000                             & 0.000                                 \\
			\bottomrule
		\end{tabular}
	\end{table}
	
	The leakage decreases monotonically as \(b\) increases, confirming that the entropy constraint effectively limits the adversary's ability to infer individual records. For fixed \(\epsilon\) and \(b\), the exponential mechanism yields lower leakage than the Laplace mechanism, consistent with its higher flip probability (\(p_E > p_L\)).
	
	\paragraph{Comparison with Binary-Symmetric Privacy Channel}  
	The leakage–\(b\) curves for both mechanisms follow the same qualitative trend as the BSC curves in Figure~\ref{fig:leakage-entropy-constraint} of the main text. Specifically:
	\begin{itemize}
		\item The Laplace curve (\(p_L=0.3033\)) lies between the BSC curves for \(p=0.3\) and \(p=0.4\).
		\item The exponential curve (\(p_E=0.3775\)) is close to the BSC curve for \(p=0.4\).
	\end{itemize}
	
	This behavior aligns with the intuition that higher flip probability (more noise) reduces leakage, and that the entropy constraint further suppresses leakage by restricting the adversary's prior knowledge.
	
	\paragraph{Implications for Privacy-Utility Tradeoffs}  
	The closed-form expressions allow direct comparison of the privacy-utility tradeoffs offered by classical differential privacy mechanisms and the information privacy approach. The results highlight two key points:
	
	\begin{enumerate}
		\item \textbf{Mechanism comparison:} The exponential mechanism provides stronger privacy (lower leakage) than the Laplace mechanism for the same \(\epsilon\), at the cost of higher distortion (as shown in Figure~\ref{fig:leakage-distortion}).
		\item \textbf{Entropy constraint benefit:} Incorporating an entropy constraint \(H(X)\ge b\) significantly reduces leakage even for moderate \(b\), enabling finer privacy-utility tradeoffs under realistic bounded-knowledge adversary assumptions.
	\end{enumerate}
	
	These findings validate the utility of the information privacy framework and the effectiveness of the proposed algorithms for auditing privacy risks under entropy constrained adversaries.

\end{document}